\documentclass[a4paper,UKenglish,cleveref, autoref, thm-restate]{lipics-v2021}

\newbool{lipics}
\boolfalse{lipics}

\hideLIPIcs

\title{Constructive higher sheaf models with applications to synthetic mathematics}
\subtitle{Extended version}
\titlerunning{Constructive higher sheaf models with applications to synthetic mathematics}

\author{Thierry Coquand}{University of Gothenburg, Sweden \and Chalmers University of Technology, Sweden}{coquand@chalmers.se}{}{}

\author{Jonas Höfer}{University of Gothenburg, Sweden \and Chalmers University of Technology, Sweden}{hoferj@chalmers.se}{https://orcid.org/0009-0001-9506-8475}{}

\author{Christian Sattler}{University of Gothenburg, Sweden \and Chalmers University of Technology, Sweden}{sattler@chalmers.se}{https://orcid.org/0000-0001-6374-4427}{US Air Force Office of Scientific Research, award number FA9550-24-1-0302}

\authorrunning{T. Coquand, J. Höfer, and C. Sattler} 

\Copyright{Thierry Coquand, Jonas Höfer, and Christian Sattler} 

\ccsdesc[500]{Theory of computation~Type theory}
\ccsdesc[500]{Theory of computation~Constructive mathematics}

\keywords{Dependent type theory, homotopy type theory, univalence, models of type theory, higher sheaves, constructive mathematics, synthetic mathematics} 

\category{} 

\funding{ForCUTT project, ERC advanced grant 101053291}

\acknowledgements{%
We thank the anonymous referees for the helpful feedback on how to improve the presentation.
We thank Evan Cavallo for helpful discussions.%
}

\nolinenumbers 

\EventEditors{Claudia Faggian and Joost-Pieter Katoen}
\EventNoEds{2}
\EventLongTitle{41st Annual Symposium on Logic in Computer Science (LICS 2026)}
\EventShortTitle{LICS 2026}
\EventAcronym{LICS}
\EventYear{2026}
\EventDate{July 20--23, 2026}
\EventLocation{Lisbon, Portugal}
\EventLogo{}
\SeriesVolume{380}
\ArticleNo{58}

\newtheorem{assumption}{Assumption}
\newtheorem{construction}[theorem]{Construction}
\newtheorem{axiom}{Axiom}
\crefname{axiom}{Axiom}{Axioms}
\crefname{assumption}{Assumption}{Assumptions}

\usepackage{symbols}
\usepackage{macros}

\usepackage{tikz-cd}
\usepackage{todonotes}

\tikzcdset{arrow style=math font}
\tikzset{
  phantom/.style={/tikz/commutative diagrams/phantom},
  near/.style={phantom, very near start},
  pullback/.style={near, "\lrcorner"},
  pullback-mirrored/.style={near, "\llcorner"},
  pushout/.style={near, "\ulcorner"},
  pushout-mirrored/.style={near, "\urcorner"},
  weq/.style = {"\simeq"'{sloped,#1}},
  weq'/.style = {"\simeq"{sloped,#1}},
  equiv/.style = {"\cong"'{sloped,#1}},
  equiv'/.style = {"\cong"{sloped,#1}},
}

\begin{document}

\maketitle

\begin{abstract}
  There have recently been several developments in synthetic mathematics using extensions of dependent type theory with univalence and higher inductive types: simplicial homotopy type theory, synthetic algebraic geometry and synthetic Stone duality.
  We provide a foundation of higher sheaf models of type theory in a constructive metatheory and, in particular, build constructive models of these formal systems.
\end{abstract}

\section{Introduction}

In \emph{synthetic} (or \emph{axiomatic}) mathematics one uses a specialized formal language, such as an extension of homotopy type theory (\(\HoTT\))~\cite{HoTTBook2013}, as an internal language (\ie, domain-specific language) for a particular mathematical setting.
By extending the language with basic operations and axioms, one tries to axiomatically capture the core aspects of the chosen field.
This has been done, for example, for homotopy theory~\cite{HoTTBook2013,RiehlShulman2017,GratzerWeinbergerBuchholtz2025}, domain theory~\cite{SterlingYe2025}, probability theory~\cite{Simpson2024}, algebraic geometry~\cite{CherubiniCoquandHutzler2024}, and light condensed sets~\cite{CherubiniCoquandGeerligsMoeneclaey2024}.
The synthetic approach often allows for more succinct and conceptual definitions and proofs compared to the traditional \emph{analytic} point of view, since technical preliminaries are abstracted away.
This can make the mathematics more accessible and also better-suited for formalization.
For example, in synthetic homotopy theory done in \(\HoTT\), one can define complicated objects such as Eilenberg-MacLane spaces or cohomology groups directly~\cite{HoTTBook2013,CherubiniCoquandHutzler2024}.

The work done in a synthetic setting is connected to the traditional analytic results by means of a model construction.
One provides an interpretation of the language into suitable mathematical objects in such a way that the axiomatized synthetic objects are interpreted as their analytic counterparts.
In a sense, the simplicity of working in the synthetic setting is bought by dealing---once and for all---with certain technicalities in the model construction.

A standard source of models is given by categories of sheaves. 
As a model of type theory, a category of sheaves can be thought of as a generalized category of sets: it has enough structure to perform most basic mathematical constructions.
In the \(1\)-categorical setting, sheaf models are extremely flexible and interpret a rich fragment of extensional type theory~\cite{Hofmann1997}.
Furthermore, for many applications, key objects naturally organize as or embed into a category of sheaves.
\emph{Higher} sheaf models are the \(\infty\)-categorical generalization.
They instead generalize the higher category of spaces (which includes all sets, but also higher objects such as spheres) and instead interpret a form of \(\HoTT\).
Even if one is a priori just interested in set-level objects, the generalization to higher sheaves is still useful because it gives access to tools provided by \(\HoTT\), such as the synthetic definition of cohomology groups mentioned above. 

An interpretation of \(\HoTT\) in higher sheaf categories in a classical metatheory was given by Shulman~\cite{Shulman2019} using type-theoretic model topoi.
We instead work in the constructive framework of \emph{cubical models}~\cite{Coquand2018Survey,OrtonPitts2018}.
More generally, we work with certain \emph{inner models} within these, refinements where types carry additional structure.
This class of models contains important examples, including a forthcoming model due to Sattler~\cite{Sattler2025Types} that constructively presents the higher category of spaces.
The first goal of this paper is to extend this framework with a construction of presheaves on an internal category in a given \emph{base} model.
By exploiting the framework of lex modalities~\cite{RijkeShulmanSpitters2020,CoquandRuchSattler2021}, this allows for the construction of higher sheaf models of \(\HoTT\), supporting higher inductive types (by applying the results of~\cite{CoquandHuberMortberg2018,CoquandRuchSattler2021}).
We conjecture that these models present higher categories of sheaves.
The second goal is to show that the resulting models validate axioms used in synthetic developments.
In particular, we show that Blechschmidt's duality axiom~\cite{Blechschmidt2019}, central to many synthetic developments, holds in suitable higher presheaf categories.

\subsection{Contributions}

All our contributions happen in a constructive and, in particular, predicative metatheory.

\begin{itemize}
  \item We construct models of \(\HoTT\) in internal presheaves over an arbitrary cubical category.
    These allow for the construction of sheaf models using lex modalities, and support universes closed under higher inductive types by previous work~\cite{CoquandHuberMortberg2018,CoquandRuchSattler2021}.
  \item We give a constructive justification of synthetic algebraic geometry~\cite{CherubiniCoquandHutzler2024,projective2025}.
    This fixes a subtle mistake in the original model construction (cf.~\cite{CherubiniCoquandHutzler2024,CoquandHoferSattler2025}).
  \item More generally, we show that Blechschmidt's duality axiom~\cite{Blechschmidt2019} holds in our cubical presheaf model constructed on any category of models in h-sets of an algebraic theory.
  \item The base model of our presheaf construction can be taken to be a suitable inner model.
    The presheaf model inherits expected principles such as dependent choice.
    If the base model presents spaces~\cite{Sattler2025Types}, we expect the presheaf model to present higher presheaves.
\end{itemize}

\subsection{Internal/external yoga}

A distinctive feature of our approach is its methodology.
The statements we ultimately obtain are \emph{external} semantic results: they describe models of \(\HoTT\) and modalities on categories of cubical presheaves.
However, since presheaves are a model of extensional type theory, many proofs are carried out \emph{internally}.
For example, external results about cubical presheaves often reduce to results about ordinary presheaves internally to cubical sets.
This internal/external correspondence functions as a ``yoga'': proofs that would be long when written purely externally become short and conceptual when expressed internally.
This technique has, for example, been used by Orton and Pitts~\cite{OrtonPitts2018} to construct models of cubical type theory.

\subsection{Outline}

In \cref{sec:preliminaries}, we recall basic constructions on models of type theory used in the rest of the paper: presheaf models and inner models.
As an example of inner models, we include models of \(\HoTT\) via cubical models.
In \cref{sec:setup}, we  establish some running assumptions and construct a first na\"{\i}ve candidate for a higher presheaf model.
In \cref{sec:cobar}, we generalize the construction of the cobar modality by Coquand, Ruch, and Sattler~\cite{CoquandRuchSattler2021} from cubical presheaves over a category to presheaves over a category internal to a cubical model.
This modality is the key tool for refining the na\"{\i}ve presheaf models.
In the inner model of modal types, equivalences are characterized as desired as levelwise equivalences in the base model.
In \cref{sec:lifting-of-descent-data-operations}, using a technical assumption on the base category, we relate the constructed presheaf model to the base model.
Furthermore, we extend the construction to suitable inner models of a cubical model.
We then provide a number of results that lift properties from the base to the presheaf model.
In \cref{sec:duality}, we provide applications of our work.
We show that our higher presheaf models, when constructed on the category of models in h-sets of an algebraic theory, validate Blechschmidt's duality axiom~\cite{Blechschmidt2019}.
In particular, we construct a model of the axioms of synthetic algebraic geometry.
In \cref{sec:representables}, we define the notion of locally homotopy representable type and characterize dependent products over them.
This is used in \cref{sec:duality}.

\section{Preliminaries}\label[section]{sec:preliminaries}

The metatheory will be constructive set theory, with a cumulative hierarchy of universes \(\ExtUniv_n\), as considered by Aczel~\cite{aczel:relate}.
Alternatively, it can be extensional type theory, also with a hierarchy of universes (and indeed this will be used when reasoning internally).
Our development is entirely algebraic, in particular we do not need quotients in the metatheory.

In this text, equality denoted by the symbol $=$ refers to strict equality.
We write $\cong$ for strict isomorphisms.
When working in \(\HoTT\), the identity or path type is denoted by the symbol $\simeq$ and $f\sim g$ is used for homotopies $\prod_{x\co A}f(x)\simeq g(x)$ between functions.

\subsection{Models of type theory}

We will consider three kinds of type theory: bare type theory, extensional type theory~\cite{ML84}, and homotopy type theory (or univalent type theory)~\cite{HoTTBook2013}.
The notion of model we consider is similar to that of model of an equational theory, with operations that have a fixed arity and which should satisfy some equations.

The definitions are standard~\cite{Dybjer95,Ehrhard1988Thesis,Hofmann1997}, but we recall the main points to fix the notations.
A \emph{model} $M$ of bare type theory is given by collections $\Cont$, $\Subs(\Delta,\Gamma)$, $\Ty(\Gamma)$, $\El(\Gamma,A)$ with operations of fixed arity satisfying some equations.
The sort $\Cont$ is the sort of \emph{contexts}, written $\Gamma,\Delta,\dots$.
Given contexts $\Delta$ and $\Gamma$, we have a collection $\Subs(\Delta,\Gamma)$ of \emph{substitutions} that we write $\Delta \rightarrow \Gamma$.
We have identity substitutions \(\id \co \Gamma \rightarrow \Gamma\) and composite substitutions \(\delta \circ \sigma\) (also written \(\delta \sigma\)) for $\delta$ in $\Subs(\Delta,\Gamma)$ and $\sigma$ in $\Subs(\Theta,\Delta)$ satisfying neutrality and associativity equations.%
\footnote{%
We stress that, like for equational theories, each operation has a fixed arity.
That is, the complete notation is $\id(\Gamma) \in \Subs(\Gamma,\Gamma)$ for $\Gamma\in \Cont$ and $\comp(\Theta,\Delta,\Gamma,\delta,\sigma)\in \Subs(\Theta,\Gamma)$ for $\Gamma,\Delta,\Theta$ in $\Cont$ and $\sigma$ in $\Subs(\Delta,\Gamma)$ and $\delta$ in $\Subs(\Theta,\Delta)$.
We write $\id$ and $\delta \sigma$ as informal notations, following~\cite{CARTMELL86}.%
}
We have a context $1$ such that $\Gamma \rightarrow 1$ is a singleton; we write $\pair{}$ for its unique element.
For a context $\Gamma$, we have a collection $\Ty(\Gamma)$ of \emph{types} in the context $\Gamma$.
Given further $A\in\Ty(\Gamma)$, we have a collection $\El(\Gamma,A)$ of \emph{elements} of type $A$.
Given $\sigma\co\Delta\rightarrow\Gamma$, we have substitution operations $A\sigma\in\Ty(\Delta)$ for $A \in \Ty(\Gamma)$ and $t\sigma\in\El(\Delta,A\sigma)$ for $t\in\El(\Gamma,A)$.%
\footnote{We follow Martin-L\"of's notations for explicit substitution.}
These are compatible with identity and composition.

A \emph{comprehension structure} assigns to each $A$ in $\Ty(\Gamma)$ a context extension $\Gamma.A$ together with a projection $\p\co\Gamma.A\rightarrow \Gamma$ and a generic element $\q\in\El(\Gamma.A,A\p)$ such that, for every $\sigma\co\Delta\rightarrow\Gamma$ and $u$ in $\El(\Delta,A\sigma)$, there is a unique map $\pair{\sigma,u}\co\Delta\rightarrow\Gamma.A$ such that $\p\pair{\sigma,u} = \sigma$ and $\q\pair{\sigma,u} = u$.
Uniqueness is captured by the further equations $\pair{\sigma,u}\delta = \pair{\sigma\delta,u\delta}$ and $\pair{\p,\q} = \id$.%
\footnote{Note that this implies $\vartheta = \pair{\p\vartheta,\q\vartheta}$ for $\vartheta\co\Delta\rightarrow\Gamma.A$.}
Given $\sigma\co\Delta\rightarrow\Gamma$ and $A$ in $\Ty(\Gamma)$, we use the notation $\sigma^+$ for the lifting $\pair{\sigma\p,\q}\co\Delta.A\sigma\rightarrow\Gamma.A$.

To build a model of bare type theory is to define collections corresponding to the sorts and operations satisfying the given equations.
We may extend bare type theory with a stratification of $\Ty$, introducing a cumulative stratification $\Ty_n(\Gamma)$ of $\Ty(\Gamma)$.

To get models of extensional~\cite{ML84} or homotopy type theory~\cite{HoTTBook2013},
we extend this notion of model by adding further operations and equations, dependent products and sums, path/identity types, and universes.
For universes, given a stratification $\Ty_n$, we require each of them to be closed under these operations, and we add $\Univ_n\in\Ty_l(\Gamma)$ for $n<l$ with an isomorphism $\El(\Gamma,\Univ_n)\cong \Ty_n(\Gamma)$ natural in $\Gamma$.
For extensional type theory, we require the identity type to satisfy the equality reflection rule~\cite{ML84},
and for homotopy type theory, we require each universe to satisfy the univalence axiom~\cite{HoTTBook2013}.
We can also extend the models with higher inductive types~\cite{CoquandHuberMortberg2018,HoTTBook2013}.

\subsection{Building models of type theory}

We review two main methods of building models of type: presheaf models and inner models.
\emph{All models we consider in this article are inner models of suitable presheaf models.}

\subsubsection{Presheaf models of extensional type theory}

Let $\cat*{C}$ be a category in $\ExtUniv_0$.
We define the \emph{presheaf model} of extensional type theory on $\cat*{C}$, following~\cite{Hofmann1997}.%
\footnote{This has been formalized in NuPrl in~\cite{bickford20}.}
We write $X,Y,Z,\dots$ for objects of $\cat*{C}$ and $f,g,\dots$ for arrows of $\cat*{C}$.

A context is a presheaf on $\cat*{C}$ with values in some $\ExtUniv_m$, \ie, a collection of elements $\Gamma(X)$ of $\ExtUniv_m$, for some $n$, with restriction operations $\rho\mapsto \rho f \co \Gamma(X)\rightarrow\Gamma(Y)$ for $f\co Y\rightarrow X$ compatible with composition.
A substitution $\sigma\co\Delta\rightarrow \Gamma$ is a natural transformation.
A type $A$ in $\Ty(\Gamma)$ is a family of sets $A(X,\rho)$ in some $\ExtUniv_n$ with restriction operations $u\mapsto u f \co A(X,\rho)\rightarrow A(Y,\rho f)$ compatible with composition.
An element $t$ in $\El(\Gamma,A)$ is a section: a family of elements $t(X,\rho)$ in $A(X,\rho)$ such that $t(X,\rho)f = t(Y,\rho f)$.
The context extension $\Gamma.A$ is defined by taking $(\Gamma.A)(X)$ to be the set of pairs $(\rho,u)$ with $\rho$ in $\Gamma(X)$ and $u$ in $A(X,\rho)$.
We write $\yo X$ for the presheaf represented by $X$.
We have a stratification $\Ty_n(\Gamma)$ of $\Ty(\Gamma)$ by imposing $A(X,\rho)$ to be in $\ExtUniv_n$.
Each universe $\ExtUniv$ of the metatheory is reflected by a universe in the presheaf model.
We define $\Univ_n(X,\rho) = \Univ_n(X) = \Ty_n(\yo X)$.

\subsubsection{Inner models}

Inner models are an important way to obtain a new model (similar to the one used for constructible sets in set theory) from a given model $M$.
See \eg~\cite{Ruch2022}.

An \emph{internalisation structure} on $M$ is (stratification-preserving) operation $C(A)\in\Ty(\Gamma)$ for $A\in\Ty(\Gamma)$ satisfying $C(A)\sigma = C(A\sigma)$ on the types of $M$.
We further assume operations
\[\begin{gathered}
  \pi_C(a, b)\in\El(\Gamma,C(\Pi_AB)),
  \qquad
  \sigma_C(a, b)\in\El(\Gamma,C(\Sigma_AB)),
  \qquad
  u^i_C\in\El(\Gamma,C(\Sigma_{\Univ_i}C(\q)))
\end{gathered}\]
where $a\in\El(\Gamma,C(A))$ and $b\in\El(\Gamma.A,C(B))$ natural in $\Gamma$.
We can then define an \emph{inner} model $M_C$ by taking the contexts and substitutions of $M$ and defining $\Ty^C(\Gamma)$ to be the collection of pairs $(A,a)$ with $A\in\Ty(\Gamma)$ and $a\in\El(\Gamma,C(A))$ and defining $\El^C(\Gamma,(A,a))$ to be $\El(\Gamma,A)$.
It is remarkable that we need only constants for reflecting operations, without further equational conditions.

To get a model of \(\HoTT\), one examines each remaining type former and either shows that \(C\) is closed under it or justifies it directly.
Important examples of inner models are the following and \emph{cubical models}, which we cover in the next section.

\begin{example}
  Starting from a model of \(\HoTT\), we may obtain an inner model by \emph{nullification} of a family of propositions $P_i$.
  For this, we take \(C(A)\) to be the h-proposition expressing that all constant maps \(A \rightarrow A^{P_i}\) are equivalences.
  In this case, \(C\) is also closed under identity types, while (higher) inductive types such as coproducts, truncations, and the natural numbers are obtained (modulo computation rules) by reflection to \(P_i\)-null types~\cite{RijkeShulmanSpitters2020}.%
\end{example}

\subsubsection{Cubical models}

For presheaf categories with sufficient structure, we can construct a inner model of \(\HoTT\) of the model of extensional type theory.
These are similar to the simplicial sets model~\cite{KapulkinLumsdaine2021}.%
\footnote{See~\cite[Appendix~D]{CoquandHuberSattler2022} for how to obtain a structured version of the simplicial set model using the machinery of inner models assuming excluded middle, but not choice.}

\begin{definition}
  A \emph{cubical model setup} is a category \(\Box\) in \(\ExtUniv_0\) equipped with an \emph{interval} \(\I \in \PSh{\Box}\) and a \emph{cofibration classifier} \(\Cof \subseteq \Omega_{\mathsf{dec}}\) satisfying the axioms from~\cite{Coquand2018Survey} or \cite{LicataOrtonPittsSpitters2018,OrtonPitts2018}.
\end{definition}

For \(A\) in \(\Ty(\Gamma)\), we define a set \(\Fib(\Gamma,A)\) of open box filling or fibrancy \emph{structures}~\cite{CohenCoquandHuberMortberg2015,OrtonPitts2018}.
This is in contrast to the simplicial model where horn filling is a \emph{property}.
We can then construct\footnote{This is not provable for simplicial sets if \(\Fib(\Gamma,A)\) is the horn filling property~\cite{bezem15}.} a map \(\Fib(\Gamma,A)\times\Fib(\Gamma.A,B) \to \Fib(\Gamma,\Pi_AB)\) natural in $\Gamma$, and similarly for dependent sums and universes as required for an inner model.
Identity types are justified from paths defined using the interval \(\I\).

To prove these closure properties, it is convenient~\cite{OrtonPitts2018} to use that \(\PSh{\Box}\) forms a model of extensional type theory.
Internally to \(\PSh{\Box}\), for \(\Gamma \co \Univ_n\) and \(A \co \Gamma \to \Univ_n\) one defines a type \(\Fib*(\Gamma, A)\) such that globally, sections of \(\Fib*(\Gamma, A)\) are given by \(\Fib(\Gamma,A)\).
Using the language of dependent type theory, we can then construct \emph{internally} a family of maps
\[
  \prod_{\Gamma\co\Univ_n} \prod_{A\co\Gamma\to \Univ_n} \prod_{B\co\Gamma.A \to \Univ_n} \Fib*(\Gamma,A)\times \Fib*(\Gamma.A,B) \longrightarrow \Fib*(\Gamma,\Pi_AB)
\]
natural in \(\Gamma\).
Here, we use the internal reasoning as intermediate steps to prove a global statement.
It is possible~\cite{CohenCoquandHuberMortberg2015,CoquandHuberSattler2022,LicataOrtonPittsSpitters2018} to define \(C(A)\in\Ty(\Gamma)\) for \(A\in\Ty(\Gamma)\) with an isomorphism \(\El(\Gamma,C(A))\cong \Fib(\Gamma,A)\).
This however cannot be carried out internally to extensional type theory~\cite{LicataOrtonPittsSpitters2018}.

The corresponding inner model is a model of homotopy type theory.
We write \(\Ty^{\Fib}\) for \(\Ty^C_{\PSh{\Box}}\), indicating that we are working with types from this inner model.

\subsection{Pseudomorphisms, lex operations, and descent data operations}

\newcommand\Inv{{\downarrow}}

An important notion is the notion of pseudomorphism \(D \co M \rightarrow N\) of models of bare type theory~\cite{kaposi_et_al:LIPIcs.FSCD.2019.25,Ruch2022,CoquandRuchSattler2021}.
This acts on contexts, substitutions, types (\(D A\) in \(\Ty_N(D \Gamma)\) for \(A\) in \(\Ty_M(\Gamma)\)), and elements.
To be a pseudomorphism, $D$ is required to preserve composition and substitution operations, but $D$ is not required to commute \emph{strictly} with the extension operation $\Gamma.A$; we just ask that the comparison maps \(\pair{D \p,D \q} \co D (\Gamma.A)\rightarrow D \Gamma.D A\) and \(D 1\rightarrow 1\) have an \emph{inverse}.%
\footnote{So the only new operations we add are $\Inv \co D \Gamma.D  A\to D (\Gamma.A)$ and $\Inv\co1\rightarrow D 1$ with the equations $\Inv\pair{D \p,D \q} = \id$ and $\pair{D \p,D \q}\Inv = \id$, which is equivalent to the pair of equations $(D \p)\Inv = \p$ and $(D \q)\Inv = \q$, and $\Inv\pair{} = \id$.}

A \emph{left exact} morphism of models of homotopy type theory is a pseudomorphism between the corresponding models of bare type theory which furthermore sends contractible types to contractible types.
If we see a model of homotopy type theory as a notion of predicative elementary higher topos, this can be seen as a higher analog of the notion of left exact morphisms between toposes introduced in~\cite{SGA4}.

A \emph{pointing} \(\alpha \co \Gamma\rightarrow D \Gamma\) of a pseudo-endomorphism \(D\) on a model \(M\) consists of maps \(\alpha_\Gamma \co\Gamma\rightarrow D \Gamma\) for each \(\Gamma\) satisfying \(\alpha \sigma = (D \sigma) \alpha\) for $\sigma \co \Delta \to \Gamma$.

\begin{theorem}[\cite{CoquandRuchSattler2021,Ruch2022}]
  Pointed pseudo-endomorphisms on models of \(\HoTT\) are left exact.
\end{theorem}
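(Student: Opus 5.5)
We must show: given a pointed pseudo-endomorphism \(D\) on a model \(M\) of \(\HoTT\), if \(A\in\Ty(\Gamma)\) is contractible then so is \(DA\in\Ty(D\Gamma)\). The plan is to transport a contraction of \(A\) to \(DA\), using the pointing to obtain a center and the pseudomorphism structure to transport the contracting homotopy. The key observation is that \(D\) preserves all context-level structure up to the inverses \(\Inv\). Since every type former (\(\Pi\), \(\Sigma\), identity types) is characterised in terms of sections of display maps \(\Gamma.A\to\Gamma\), \(D\) will therefore carry sections to sections. What \(D\) need not preserve is the type formers themselves. In particular \(D(\mathrm{Id}_A)\) need not be \(\mathrm{Id}_{DA}\), and this is exactly the gap the pointing must bridge.

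\textbf{Step 1 (reduce to a retract statement).} Contractibility of \(A\) over \(\Gamma\) is equivalent to having a section \(c\in\El(\Gamma,A)\) together with an element \(h\in\El(\Gamma.A.A\p,\mathrm{Id}(\q\p,\q))\), a homotopy between the two projections. Equivalently, the diagonal \(\Gamma.A\to\Gamma.A.A\p\) factors through the path-context \(\Gamma.A.A\p.\mathrm{Id}\), which is a pure context-level statement. In particular, contractibility of \(A\) amounts to \(\p\co\Gamma.A\to\Gamma\) having a section \(s=\pair{\id,c}\) with \(s\p\) homotopic to \(\id\) over \(\Gamma\). I would phrase this homotopy via a path object \(P\) of the context \(\Gamma.A\), relative to \(\Gamma\), that is obtainable from comprehension.

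\textbf{Step 2 (apply \(D\)).} Applying \(D\) and composing with \(\Inv\), we get a section \(\Inv\circ Ds\) of \(D\Gamma.DA\to D\Gamma\), so \(DA\) is inhabited. We also get a relative homotopy, but it is expressed with the image \(DP\) of the path object rather than a genuine path object for \(D\Gamma.DA\). Because \(D\) preserves context extension up to isomorphism, \(DP\) comes with two endpoint maps to \(D\Gamma.DA\) over \(D\Gamma\), and the required ``diagonal'' map \(D\Gamma.DA\to DP\) exists.

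\textbf{Step 3 (use the pointing to compare \(DP\) with genuine paths).} This is the main obstacle: we must turn a homotopy witnessed by \(DP\) into one witnessed by \(\mathrm{Id}_{DA}\). I would first try a Hurewicz-style argument internal to \(\HoTT\). Consider the type \(E\) over \(\Gamma.A.A\p\) whose fibre is \(\mathrm{Id}(x,y)\); it is a family of contractible types over a context, and a section \(r\) of it is given by contractibility of \(A\). Apply \(D\) to get a section of \(DE\) over \(D\Gamma.DA.DA\). Then use \(\alpha\) to map \(\mathrm{Id}_{DA}\)-data into \(D\) of the corresponding data: naturality \(\alpha\sigma=(D\sigma)\alpha\) lets us compare, at the level of contexts, the path type of \(DA\) with \(D\) of the path type of \(A\). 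This gives a map \(\mathrm{Id}_{DA}\to D(\mathrm{Id}_A)\), defined through \(\alpha\) on the extended context, which commutes with the endpoints.

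What remains is a map the other way, and I expect the pointing to be used essentially here. The plan is to show that \(DA\) is a retract of a type \(\alpha^*DA\) that is determined by \(A\), and then use the fact that a retract of a contractible type is contractible. Concretely, every element of \(DA\) is connected to the center because the based path space of \(A\) is contractible and \(D\) maps its contraction (a strictly defined section, by Step 2) to a section of \(D\) of the based path space. That section projects to an element \(d\in\El(D\Gamma.DA, DA)\) which equals both \(\q\) and the center, by the strict equations \((D\p)\Inv=\p\) and \((D\q)\Inv=\q\). The needed path between them is then obtained by transporting along the interval-indexed context \(D(\Gamma.A.\mathrm{Id})\) and pulling back through \(\alpha\). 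The hard part is verifying that the pulled-back path is a genuine element of \(\mathrm{Id}_{DA}\), and this is where I would rely on the structure established in the cited works.
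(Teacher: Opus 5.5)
\textbf{There is a genuine gap, and it is exactly the step you defer.} Step 3 is the whole content of the theorem.

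From $D$ and $\alpha$ alone you can produce only two kinds of data. The first is sections of $D$-images over $D\Gamma.DA$, for example $D$ applied to the contraction, which lands in $D(\mathrm{Id}_A)$. The second is data over $\Gamma$ obtained by pulling back along $\alpha_\Gamma$.

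The comparison $\mathrm{Id}_{DA}\to D(\mathrm{Id}_A)$ that you build needs no pointing: it is just $J$ for $\mathrm{Id}_{DA}$. It also goes in the useless direction. Constructing $D(\mathrm{Id}_A)\to\mathrm{Id}_{DA}$ means eliminating out of $D(\mathrm{Id}_A)$ into a family that is not of the form $D(-)$. The only eliminator you have is $D(J)$, which targets $D$-images. Making $D(\mathrm{Id}_A)$ an identity system on $DA$ is equivalent to $D(\Sigma_{y:A}\mathrm{Id}_A(x,y))$ being contractible, which is an instance of the statement you are proving.

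The concrete devices you propose do not repair this:
\begin{itemize}
\item A general model of HoTT has no interval, so there is no ``interval-indexed context'' to transport along.
\item Your element $d$ cannot be strictly equal both to the generic element and to the weakened centre $Dc$, since these differ.
\item $\alpha^*DA=(DA)\alpha_\Gamma$ is a type over $\Gamma$, whereas $DA$ is a type over $D\Gamma$. There is no substitution $D\Gamma\to\Gamma$, so ``$DA$ is a retract of $\alpha^*DA$'' does not typecheck. Even contractibility of $(DA)\alpha_\Gamma$, which itself needs the missing argument, says nothing about $DA$ over all of $D\Gamma$.
\end{itemize}

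\textbf{The missing idea is to use the univalent universe.} This is where the hypothesis that the model is one of HoTT enters, and your argument never uses it.

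\emph{Closed case.} Take a closed small type $A$ with code $a\colon 1\to 1.\mathcal U$. Naturality of the pointing gives $D(a)=\alpha_{1.\mathcal U}\circ a$, using $D1\cong 1$. So $D$ acts on closed codes through the internal function $\eta\colon\mathcal U\to D\mathcal U$ determined by $\alpha_{1.\mathcal U}$, and $DA$ is the fibre of $D\mathsf{El}$ over $\eta(a)$. Being an internal function, $\eta$ preserves paths, which is exactly what $D$ itself does not obviously do. If $A$ is contractible, univalence gives a path from $a$ to a code of the unit type $\mathbf 1$. Transport in $D\mathsf{El}$ along the image of that path under $\eta$ yields $DA\simeq D\mathbf 1$. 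The type $D\mathbf 1$ is contractible because $D(1.\mathbf 1)\cong D1$ makes it a strict singleton.

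\emph{General case.} For arbitrary $\Gamma$, classify a contractible $A$ together with its contraction by $\chi\colon\Gamma\to 1.C$. Here $C=\Sigma_{X:\mathcal U}\mathsf{isContr}(X)$ carries the universal contractible family $E$, and $DA=(DE)[D\chi]$ up to comprehension isomorphisms. Both $DC$ and $D(\Sigma_C E)\cong\Sigma_{DC}DE$ are images of closed contractible types ($C$ is contractible by univalence), so they are contractible by the closed case. Hence the projection $\Sigma_{DC}DE\to DC$ is an equivalence, every fibre of $DE$ is contractible, and so is the pullback $DA$.
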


Given a pointed pseudo-endomorphism as above, we obtain a \emph{lex operation} $\LO{D}A$ on types by setting $\LO{D}A \coloneqq (D  A)\alpha$, so that $\LO{D}A\in\Ty(\Gamma)$ for $A\in\Ty(\Gamma)$ and $(\LO{D}A)\sigma = \LO{D}(A\sigma)$.%
\footnote{As in~\cite{CoquandRuchSattler2021,Ruch2022}, we also have an action on families $\widetilde{\LO{D}}B \coloneqq (DB)\Inv\alpha^+$ in $\Ty(\Gamma.\LO{D}A)$ for $B$ in $\Ty(\Gamma.A)$ (which satisfies $(\widetilde{\LO{D}}B)(\eta_A a) = \LO{D}(B(a))$) and on elements $\widetilde{\LO{D}}b \coloneqq (Db)\Inv\alpha^+$ in $\El(\Gamma.\LO{D}A,\widetilde{\LO{D}}B)$ for $b$ in $\El(\Gamma.A,B)$.}
The operation $\LO{D}$ extends to a functor on types, with a unit map $\eta_A\co A\rightarrow \LO{D}A$ for $A$ in $\Ty(\Gamma)$.

\begin{definition}
  A lex operation $\LO{D} $ is a \emph{descent data operation} iff both $\LO{D}\eta_A$ and $\eta_{\LO{D}A}$ are equivalences for all $A$.
\end{definition}

In this case, $\LO{D}$ is now a left exact modality (\cf~\cite[\S2.3]{CoquandRuchSattler2021}) in the sense of~\cite{RijkeShulmanSpitters2020}, which satisfies additional strictness conditions~\cite{CoquandRuchSattler2021}.
We get then an inner model by taking $C(A)$ to be the h-proposition stating that $A$ is $\LO{D}$-modal, \ie, that $\eta_A$ is an equivalence.
We write $\Ty^{\sD}$ to indicate that we are working with types from this inner model.
These strictness conditions are important for justifying (higher) inductive types in concrete situations~\cite{CoquandRuchSattler2021}.%
\footnote{For instance, the type of natural numbers is now the (higher) inductive type $\sN$ with, besides the usual constructors $\zero\co\sN$ and $\succ\co\sN\to\sN$, new higher constructors $\patch\co\LO{D}\sN\to\sN$ and $\linv\co\Pi_{x\co\sN}\patch (\eta x)\simeq x$ forcing $\sN$ to be $\LO{D}$-modal.}

In the situation where $M$ is an inner cubical mode, $D$-modal types are inherently fibrant.
In particular, we have a forgetful natural transformation $\Ty^{\sD} \to \Ty^{\Fib}$.

Examples of descent data operations are the cobar modality (which we study in \Cref{sec:cobar}), nullification of a proposition, and Sattler's modality~\cite{Sattler2025Types}.

\section{Setup}\label[section]{sec:setup}

In this section, we establish some running assumptions and construct the na\"{\i}ve candidate for the higher presheaf model on a cubical category.

\subsection{The external setup}\label[subsection]{sec:setup:external-setup}

We fix our \emph{base} cubical model of homotopy type theory.
In \cref{sec:lifting-of-descent-data-operations}, we will refine it to an inner model induced by a descent data operation.
The following assumption (and the refinements \cref{ass:base-model-lex-operation,ass:base-model-descent-data-operation}) are standing assumptions for the rest of the paper.

\begin{assumption}\label[assumption]{ass:cubical-setup}
  Let \((\Box, \I, \Phi)\) be a cubical model setup such that \(\I\) has connections and the map \(\I \to 1\) is locally representable.
\end{assumption}

We write \(\cSet\) for \(\PSh{\Box}\).
The map \(\I \to 1\) being locally representable means that \(\yo x \times \I\) is representable for all \(x \in \Box\), \ie, that \(\I \times -\) preserves representables.
This implies that \(\I\) is tiny.
Hence, \(\cSet_{\Fib}\) is a model of \(\HoTT\)~\cite{OrtonPitts2018,LicataOrtonPittsSpitters2018} supporting higher inductive types~\cite{CoquandHuberMortberg2018}.

We fix a cubical category which serves as the site for our presheaf model construction.
This assumption (and the later refinement \cref{ass:fibrant-cat}) are standing until the applications of the construction in \cref{sec:duality}, where we instantiate them suitably for models of duality.

\begin{assumption}\label[assumption]{ass:cat}
  Let \(\cat{C}\) be a category in \(\Univ_0\) internal to \(\cSet\).
\end{assumption}

A category \(\cat{C}\) internal to \(\CubicalSet\) is the same as a category defined in the internal language of \(\cSet\) in the empty context.
This corresponds to a functor \(\cat{C}^\op \to \Cat\), \ie, for \(I \in \Box\), a category \(\cat{C}(I)\) with, for \(f \co J \to I\), a functor \(\cat{C}(I) \to \cat{C}(J)\), respecting composition in \(I\).
The \emph{Grothendieck construction}~\cite{SGA1} of \(\cat{C}\) is the category \(\cat*{C}\) in \(\ExtUniv_0\) whose objects are given by pairs \((I,x)\) with \(x\) in \(\cat{C}_0(I)\) and for which a morphism \((J, y) \to (I, x)\) is given by a pair of morphisms \(f \co J \to I\) and \(\alpha \co \cat{C}_1(J)(y, xf)\).

\Cref{ass:cubical-setup} on \(\Box\) descends to \(\cat*{C}\).
The interval and cofibration classifier are given by restriction along the projection from \(\cat*{C}\) to \(\Box\).
Hence, we can form a new model \(\PSh{\cat*{C}}_{\Fib}\) of \(\HoTT\) with higher inductive types as an inner model of the presheaf model \(\PSh{\cat*{C}}\).

\subsection{The internal setup}

We work internally to \(\CubicalSet\).
In the internal language, \(\cat{C}\) is given by a type \(\cat{C}_0\) of objects, a family \(\cat{C}_1\) over \(\cat{C}_0 \times \cat{C}_0\) of morphisms, and the usual operations for identity morphisms and composition, satisfying the usual (strict) equations.
We can form the presheaf model $\PSh*{\cat{C}}$ of extensional type theory.
Furthermore, we can lift the interval \(\I\) and cofibration classifier \(\Phi\) as constant presheaves to \(\PSh*{\cat{C}}\).
Hence, we have that basic notions such as fibrancy structures, paths, and equivalences are also defined for \(\PSh*{\cat{C}}\).

If $J$ is a type and we have a family $A_j \co \Ty*_{\PSh*{\cat{C}}}(\Gamma)$ for $j \co J$, then we can form the product $\prod_{j\co J} A_j$ levelwise.
In particular, path types are defined as a subtype of products over the interval.
Note that elements of path types correspond to paths in the type of elements, and that homotopies correspond to paths in the hom-types.

A fibrancy structure \(\alpha \co \Fib*(\Gamma,A)\) on a family \(A\) over a type \(\Gamma\) is an operation that given \(e \co \set{0,1}\), \(\varphi \co \Phi\), \(\gamma \co \I \to \Gamma\), and \(a \co \prod_{i \co \I} [i=e\vee\varphi] \to A\gamma_i\) yields \(\alpha(e, \varphi, \gamma, a, i)\)  of type \(A\gamma_i[\varphi \vee i=e \mapsto a_i]\) for all \(i \co \I\).
Similarly, a trivial fibrancy structure on such a family is an operation that for every \(\gamma \co \Gamma\) completes a partial element \(u \co [\varphi] \to A\gamma\) to a total element \(A\gamma[\varphi \mapsto u]\).
This notion coincides with fiberwise contractibility on fibrant families.
A homogeneous fibrancy structure on such a family is a fibrancy structure on the family \(A\gamma\) over \(1\) for all \(\gamma \co \Gamma\).

For a presheaf \(\Gamma\) over \(\cat{C}\) and \(A\) type over \(\Gamma\), we can describe \(\Fib*_{\PSh*{\cat{C}_0}}(\Gamma, A)\), the type of global sections of the type of fibrancy structures, similar to~\cite[Section 3.3]{CoquandRuchSattler2021}.
An element of \(\Fib*_{\PSh*{\cat{C}_0}}(\Gamma, A)\) corresponds to a family of fibrancy structures \(\Fib*(\Gamma_x, A_x)\) for \(x \co \cat{C}_0\), such that the restriction operation for \(A\) preserves the operation given by the fibrancy structure.

We also have the presheaf model \(\PSh*{\cat{C}_0}\) where contexts are simply families over \(\cat{C}_0\).
There is a forgetful pseudomorphism \(U \co \PSh*{\cat{C}} \rightarrow \PSh*{\cat{C}_0}\).
We say a type \(A\) is \emph{levelwise fibrant} if \(UA\) is fibrant in the sense of \(\PSh*{\cat{C}_0}\).
This means it has the same structure as above, but restriction does not satisfy the compatibility condition.
We write \(\Ty*^{\Fib*_0}_{\PSh*{\cat{C}}}\) for the presheaf of levelwise fibrant types.
Similarly, we have levelwise notions for trivial fibrancy structures and equivalences.
We record the following observation about \(U\) preserving products.
Note that this mean externally that \(U\) preserves products indexed by cubical sets.

\begin{lemma}\label[lemma]{prop:forgetful:preservation-properties}
  The pseudomorphism \(U\) preserves products and path types, and is therefore in particular left exact.
\end{lemma}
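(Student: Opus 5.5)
The plan is to compute the relevant constructions explicitly in both presheaf models and compare them levelwise. Recall that in \(\PSh*{\cat{C}}\) a context \(\Gamma\) gives types \(\Gamma_x\) for \(x \co \cat{C}_0\) together with restriction maps, and a type \(A\) over \(\Gamma\) gives \(A_x(\rho)\) for \(\rho \co \Gamma_x\). The pseudomorphism \(U\) simply forgets the restriction maps: \((U\Gamma)_x = \Gamma_x\) and \((UA)_x(\rho) = A_x(\rho)\). Comprehension is defined pointwise in both models, so the comparison \(U(\Gamma.A) \to U\Gamma.UA\) is already the identity. To say that \(U\) preserves a type former is to say that the canonical comparison map from \(U\) of the construction to the construction in \(\PSh*{\cat{C}_0}\) has an inverse (ideally it is the identity).

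For products indexed by a type \(J\) of the internal language, I use the levelwise definition \((\prod_{j\co J} A_j)_x(\rho) = \prod_{j \co J} (A_j)_x(\rho)\), with restriction applied componentwise. Applying \(U\) forgets the componentwise restriction, which leaves exactly the product formed in \(\PSh*{\cat{C}_0}\). Path types are defined as subtypes of products over the constant presheaf \(\I\), cut out by the endpoint equations. These equations are checked levelwise, and \(\I\) is constant in both models, so preservation of path types follows from preservation of products together with the fact that \(U\) preserves subtypes given by levelwise equations. Here it matters that the interval and cofibration classifier in \(\PSh*{\cat{C}}\) are the constant lifts from \(\cSet\), so \(U\) sends them to the constant ones in \(\PSh*{\cat{C}_0}\).

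Left exactness then follows. Contractibility of \(A\) over \(\Gamma\) in the \(\HoTT\) sense is expressed by a \(\Sigma\)-type, which \(U\) preserves because comprehension is strict, built from path types and \(\Pi\)-types over the type itself. Concretely, a centre \(c\) and a contraction \(h \co \prod_{a} \mathrm{Path}(c,a)\) are elements of such types, and \(U\) acts on elements by forgetting naturality, so it sends them to a centre and a contraction of \(UA\). Alternatively, contractibility can be phrased with trivial fibrancy structures, where \(U\) visibly keeps the same lifting operations at each level.

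The main obstacle I expect is bookkeeping around which \(\Pi\)-types are involved. The dependent product \(\Pi_A B\) of \(\PSh*{\cat{C}}\) is not levelwise; it quantifies over all later stages, so \(U\) does not preserve it. For that reason the argument for contractibility has to be routed through products indexed by the constant type \(\I\) (path types) and through elements, rather than through general dependent products. Concretely, I would verify that the function \(\prod_{a} \mathrm{Path}(c,a)\) is used only via its element-level data, which \(U\) maps to the corresponding data in \(\PSh*{\cat{C}_0}\).
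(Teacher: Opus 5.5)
Your proposal is correct and follows essentially the route the paper takes; the paper states this lemma without proof, so the comparison is with its proof of the analogous lemma for \(R\). There, products are computed levelwise, so \(U\), being restriction along the object inclusion, preserves them strictly. Path types are subtypes of products over the constant interval cut out by endpoint equations, i.e.\ \(\Sigma\)-types with extensional identity types, which any pseudomorphism preserves. Your derivation of left exactness is the right way to make the ``therefore'' precise: you transport the centre and the contraction as elements over \(\Gamma\) and \(\Gamma.A\), rather than going through the non-levelwise \(\Pi\)-type.
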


These internal models correspond in the expected way to the external models, when externalized in the \emph{empty} context of \(\cSet\).
Global sections of \(\PSh*{\cat{C}}\), \(\Ty*_{\PSh*{\cat{C}}}(\Gamma)\), and \(\El*_{\PSh*{\cat{C}}}(\Gamma, A)\) correspond to elements of \(\PSh{\cat*{C}}\), \(\Ty_{\PSh{\cat*{C}}}(\Gamma)\), and \(\El_{\PSh{\cat*{C}}}(\Gamma, A)\).
The internal lifting of \(\I\) and \(\Phi\) corresponds externally to the construction from \cref{sec:setup:external-setup}.
In the empty context of \(\cSet\), types of \(\PSh*{\cat{C}}_{\Fib*}\) correspond to externally to types of \(\PSh{\cat*{C}}_{\Fib}\).

\section{The cobar modality}\label[section]{sec:cobar}

The goal of this section is to define a descent data operation \(\sD\) on the model \(\PSh{\cat*{C}}_{\Fib}\) where \(\cat{C}\) is a category internally to \(\cSet\).
This descent data operation has the key property that for \(\sD\)-modal types levelwise equivalences coincide with equivalences.
The inner model \(\PSh{\cat*{C}}_{\sD}\) of \(\sD\)-modal types is in this sense the correct model.
This operation will be crucial in \cref{sec:lifting-of-descent-data-operations} to lift a descent data operation from \(\cSet_{\Fib}\) to \(\PSh{\cat*{C}}_{\Fib}\).
This generalizes and refines some results by Coquand, Ruch, and Sattler~\cite{CoquandRuchSattler2021}.

We apply the internal-external yoga by defining the operation and proving its key properties internally.
In this entire section, we work internally to \(\cSet\).

\subsection{Cofree presheaves}

The descent data operation \(\LO{D}\) is obtained as a semisimplicial version of the cobar construction of another lex operation \(\LO{E}\).
This operation comes from the pseudomorphism that cofreely equips a family of types with a functorial action.
The pseudomorphism \(\PM{U} \co \PSh*{\cat{C}} \to \PSh*{\cat{C}_0}\) is given by restriction along the inclusion of objects \(\cat{C}_0 \to \cat{C}\).
As such, it always has a right adjoint pseudomorphism.

\begin{definition}
  The pseudomorphism \(\PM{R} \co \PSh*{\cat{C}_0} \to \PSh*{\cat{C}}\) is the right adjoint to \(\PM{U}\), given on contexts \(\Gamma \co \PSh*{\cat{C}_0}\), types \(A \co \Ty*_{\PSh*{\cat{C}_0}}(\Gamma)\), and elements \(a \co \El*_{\PSh*{\cat{C}_0}}(\Gamma, A)\) by
  \[
    (\PM{R} \Gamma)_x = \prod_{\substack{y \co \cat{C}_0 \\ f \co \cat{C}_1(y, x)}} \Gamma_y,
    \qquad
    (\PM{R} A)\gamma = \prod_{\substack{y \co \cat{C}_0 \\ f \co \cat{C}_1(y, x)}} A\gamma f,
    \qquad
    (\PM{R} a)\gamma_f = a(\gamma f).
  \]
  The unit \(\eta_\Gamma \co \Gamma \to \PM{R}\PM{U}\Gamma\) and counit \(\varepsilon_\Gamma \co \PM{U}\PM{R}\Gamma \to \Gamma\) are given by \(\paren{\eta_\Gamma\gamma}_f = \gamma f\), \(\paren{\varepsilon_\Gamma\gamma}_x = \gamma_{\id_x}\).
\end{definition}

\begin{remark}\label[remark]{prop:cofree-presheaf:dependent-right-adjoint}
  As a right adjoint pseudomorphism, we have an induced right adjoint on types.
  For \(\Gamma \co \PSh*{\cat{C}}\) and \(A \co \Ty*_{\PSh*{\cat{C}_0}}(\PM{U}\Gamma)\) we have \(\LO{\sR} A \coloneqq (RA)\eta_\Gamma \co \Ty*_{\PSh*{\cat{C}}}(\Gamma)\).
  This construction is natural in \(\Gamma\) and equipped with a natural isomorphism
  \(
    \El*_{\PSh*{\cat{C}_0}}(\PM{U}\Gamma, A)
    \cong \El*_{\PSh*{\cat{C}}}(\Gamma, \LO{R} A)
  \).
  Furthermore, \(\LO{R}\) extends to a functor, acting on functions between types.
  Since homotopies are paths in hom-types, the bijection \(\Ty*[\PSh*{\cat{C}_0}](\PM{U}\Gamma)(\PM{U}A, B) \cong \Ty*[\PSh*{\cat{C}}](\Gamma)(A, \LO{R}B)\) extends to homotopies.
  Compared to an ordinary right adjoint on types, we have more structure.
  For a family \(B \co \Ty*_{\PSh*{\cat{C}_0}}(\PM{U}\Gamma.A)\) we have another family \(\LO{R}_\Gamma B \coloneqq (RB)\eta_\Gamma^+ \co \Ty*_{\PSh*{\cat{C}}}(\Gamma.\LO{R} A)\).
  This operation preserves \(\Sigma\)-types and constant families in the sense that \(\LO{R}(\Sigma_A B) \cong \Sigma_{\LO{R} A} \paren{ \LO*{R}_{\Gamma} B }\) and \(\LO*{R} (B\p_A) = \LO{R}(B)\p_{\LO{R} A}\).
\end{remark}

Similar to \(U\), the pseudomorphism \(R\) also preserves products.

\begin{lemma}\label[lemma]{prop:cofree-presheaf:preservation-properties}
  The pseudomorphism \(\PM{R}\) preserves products and path types, and is therefore in particular left exact.
\end{lemma}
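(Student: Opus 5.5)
The plan is to verify the two preservation properties directly from the explicit formulas for \(\PM{R}\), and then deduce left exactness from the general fact that a pseudomorphism between these models preserving products and path types sends contractible types to contractible types. For products, take \(A \co \Ty*_{\PSh*{\cat{C}_0}}(\Gamma)\) and \(B \co \Ty*_{\PSh*{\cat{C}_0}}(\Gamma.A)\). By definition,
\[
  (\PM{R}\,\Pi_A B)\gamma \;=\; \prod_{y,\, f \co \cat{C}_1(y,x)} \prod_{a \co A(\gamma f)} B(\gamma f, a).
\]
On the other hand, \(\Pi_{\PM{R}A}\PM{R}B\) evaluated at \(\gamma\) in the presheaf model \(\PSh*{\cat{C}}\) is a natural family. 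It assigns to each \(g \co \cat{C}_1(z,x)\) and each \(a \co (\PM{R}A)(\gamma g) = \prod_{y, h \co \cat{C}_1(y,z)} A(\gamma g h)\) an element of \(\prod_{y,h} B(\gamma g h, a_h)\). The comparison map \(\PM{R}\Pi_AB \to \Pi_{\PM{R}A}\PM{R}B\) is the canonical one. Its inverse sends a natural family \(t\) to \(f \mapsto (a \mapsto \ldots)\), by the same Yoneda-style argument that shows \(\PM{R}\) is right adjoint to \(\PM{U}\). It is cleaner to argue abstractly: maps \(\Delta \to \PM{R}\Pi_AB\) over \(\PM{R}\Gamma\) correspond to maps \(\PM{U}\Delta \to \Pi_AB\). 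By the adjunction \(\PM{U}\dashv \PM{R}\), together with the fact that \(\PM{U}\) preserves context extension strictly, these correspond to sections over \(\PM{U}\Delta.A\). The same holds for \(\Pi_{\PM{R}A}\PM{R}B\). Since both types represent the same functor, they are isomorphic. To get an honest strict isomorphism stable under substitution, I would write out both maps explicitly and check naturality.

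For products indexed by a type \(J\) of \(\cSet\) (used for path types), the claim is immediate: \(\prod_j\) is levelwise and commutes with \(\prod_{y,f}\). Path types are subtypes of \(\prod_{i\co\I}\) cut out by endpoint conditions. Since \(\I\) is a constant presheaf, \(\PM{R}(\mathrm{Path}_A(a,b))\) and \(\mathrm{Path}_{\PM{R}A}(\PM{R}a,\PM{R}b)\) are both the subtype of \(\prod_{i}\prod_{y,f}A(\gamma f)\) with the endpoints given by \(a(\gamma f)\) and \(b(\gamma f)\). We also need compatibility with fibrancy structures. Filling in \(\PM{R}A\) is defined pointwise in \((y,f)\), so \(\PM{R}\) sends fibrancy structures to fibrancy structures. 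This step relies on the cofibrations and \(\I\) being constant, so that the partial elements \([\varphi]\) commute with \(\prod_{y,f}\).

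Finally, contractibility \(\mathrm{isContr}(A) = \Sigma_{a\co A}\Pi_{b\co A}\mathrm{Path}(a,b)\) is built from \(\Sigma\), \(\Pi\) and path types. \(\PM{R}\) preserves \(\Sigma\) as a pseudomorphism, and preserves \(\Pi\) and paths by the above. Hence \(\PM{R}(\mathrm{isContr}A)\cong \mathrm{isContr}(\PM{R}A)\), and a center of contraction is mapped to a center of contraction, which gives left exactness. The main obstacle I expect is the first step: producing a \emph{strict}, substitution-stable inverse for the \(\Pi\) comparison. The naturality of elements of \(\Pi\) in the presheaf model must be matched exactly against the family indexed by all arrows \(f\) into \(x\). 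I would handle this by restricting a natural family along \(f\) and evaluating at \(\id\), mirroring the counit \(\varepsilon\).
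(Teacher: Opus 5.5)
Your second paragraph is exactly the paper's proof. In the paper, ``products'' means products \(\prod_{j\co J}A_j\) indexed by a type \(J\) of \(\cSet\) (externally, products indexed by cubical sets). These are computed levelwise and so commute with \(\prod_{y,f}\). Path types are preserved because they are cut out of \(\prod_{i\co\I}\) by \(\Sigma\)-types and extensional identity types, which every pseudomorphism preserves.

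The genuine gap is your first step. There you read ``products'' as the model's dependent products and claim \(\PM{R}(\Pi_AB)\cong\Pi_{\PM{R}A}\PM{R}B\). This is false, so the ``main obstacle'' you anticipate cannot be overcome. Your abstract argument breaks at ``the same holds for \(\Pi_{\PM{R}A}\PM{R}B\)''. Transposing a section of \(\PM{R}B\) over \(\Delta.(\PM{R}A)\delta\) along \(\PM{U}\dashv\PM{R}\) gives a section over \(\PM{U}\Delta.\PM{U}\PM{R}A\), not over \(\PM{U}\Delta.A\), and the counit \(\varepsilon\co\PM{U}\PM{R}A\to A\) is not invertible. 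For a concrete counterexample, take \(\cat{C}\) to be the constant arrow category \(0\to1\) and \(\Gamma=1\). Then \((\PM{R}A)_0=A_0\) and \((\PM{R}A)_1=A_0\times A_1\), with restriction the projection, so \(\PM{R}(B^A)_1=B_0^{A_0}\times B_1^{A_1}\). On the other hand, \((\PM{R}B)^{\PM{R}A}\) at \(1\), where \(\yo 1\) is terminal, is the set of natural maps \(\PM{R}A\to\PM{R}B\), namely \(B_0^{A_0}\times B_1^{A_0\times A_1}\). For \(A_0=B_1=2\) and \(A_1=B_0=1\) these sets have \(2\) and \(4\) elements. The canonical comparison map \(\PM{R}(\Pi_AB)\to\Pi_{\PM{R}A}\PM{R}B\) exists, but it is not an isomorphism in general.

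Your left-exactness step relies on this false claim, but it does not need it. Contractibility data consists of a center \(c\in\El(\Gamma,A)\) and a contraction in \(\El(\Gamma.A,\mathrm{Path}_{A\p}(c\p,\q))\); the \(\Pi\) in \(\mathrm{isContr}\) can be curried away at the level of elements. Applying \(\PM{R}\) gives \(\PM{R}c\in\El(\PM{R}\Gamma,\PM{R}A)\). Using the comprehension isomorphism \(\PM{R}(\Gamma.A)\cong\PM{R}\Gamma.\PM{R}A\) together with preservation of path types, the image of the contraction is a contraction of \(\PM{R}A\) in context \(\PM{R}\Gamma.\PM{R}A\). So if you drop the first paragraph and replace the \(\Pi\)-step this way, you recover the paper's argument. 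Your remarks on fibrancy structures belong to the separate \cref{prop:cofree-presheaf:fibrancy} and are not needed for this lemma.
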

\begin{proof}
  Preservation of products follows directly from the definition since products are calculated levelwise and products commute with products.
  Preservation of paths follows since pseudomorphisms preserve \(\Sigma\)-types and extensional identity types.
\end{proof}
We can adapt a result in~\cite[Proposition~14]{CoquandRuchSattler2021} to show that \(\PM{R}\) lifts to fibrant types.
By \cref{prop:cofree-presheaf:preservation-properties}, the claim reduces to showing the case of trivial fibrancy.

\begin{lemma}\label[lemma]{prop:cofree-presheaf:fibrancy}
  Naturally in \(\Gamma \co \PSh*{\cat{C}_0}\), given \(A \co \Ty*(\Gamma)\), there are maps
  \begin{alignat*}{5}
    &\Fib*[\PSh*{\cat{C}_0}](\Gamma, A) &&\longrightarrow \Fib*[\PSh*{\cat{C}}](\PM{R}\Gamma, \PM{R} A),
    \quad
    &\alpha &\longmapsto \PM{R}\alpha, \\
    &\TrivFib*[\PSh*{\cat{C}_0}](\Gamma, A) &&\longrightarrow \TrivFib*[\PSh*{\cat{C}}](\PM{R}\Gamma, \PM{R} A),
    \quad
    &\alpha &\longmapsto \PM{R}\alpha.
  \end{alignat*}
\end{lemma}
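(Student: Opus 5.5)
We construct both maps directly. Recall that \((\PM{R}A)\gamma = \prod_{y,\, f\co\cat{C}_1(y,x)} A(\gamma_f)\) for \(\gamma \co (\PM{R}\Gamma)_x\). Both maps are built pointwise in the index \(f\), exactly as in~\cite[Proposition~14]{CoquandRuchSattler2021}. The remark after \cref{prop:cofree-presheaf:preservation-properties} suggests one could also do the fibrant case by decomposing fibrancy into trivial fibrancy plus path-type data. We nevertheless treat both cases directly, since the argument is uniform.

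For trivial fibrancy, let \(\alpha \co \TrivFib*_{\PSh*{\cat{C}_0}}(\Gamma, A)\) be given. Take \(x \co \cat{C}_0\), \(\gamma \co (\PM{R}\Gamma)_x\), \(\varphi \co \Phi\), and a partial element \(u \co [\varphi] \to (\PM{R}A)\gamma\). We define
\[
  (\PM{R}\alpha)(\gamma,\varphi,u)_f \coloneqq \alpha(\gamma_f, \varphi, \lambda p.\, (u\,p)_f).
\]
This is a total element of \(A(\gamma_f)\) extending \(u\) componentwise, hence an element of the product extending \(u\). Since products are computed componentwise, the extension condition holds strictly.

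For fibrancy, let \(\alpha \co \Fib*_{\PSh*{\cat{C}_0}}(\Gamma,A)\) be given. For \(e\), \(\varphi\), a path \(\gamma \co \I \to (\PM{R}\Gamma)_x\), and \(a \co \prod_{i} [i=e\vee\varphi] \to (\PM{R}A)\gamma_i\), we set
\[
  (\PM{R}\alpha)(e,\varphi,\gamma,a,i)_f \coloneqq \alpha(e,\varphi,\lambda i.\,(\gamma_i)_f,\lambda i\,p.\,(a_i\,p)_f, i).
\]
Here we use that the constant presheaves \(\I\) and \(\Phi\) commute with the product indexing: a path in \((\PM{R}\Gamma)_x\) is the same as a family of paths \((\gamma_i)_f\) in \(\Gamma_y\). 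The boundary conditions again hold componentwise.

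It remains to verify two things, which I expect to be the only real content.

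\begin{itemize}
  \item \textbf{Compatibility with restriction in \(\PM{R}A\).} This is required because fibrancy in \(\PSh*{\cat{C}}\) demands the structure commute with restriction. Restriction along \(g \co \cat{C}_1(x',x)\) sends a family indexed by \(f\co y\to x\) to the family indexed by \(f' \co y \to x'\) given by the components at \(g f'\). It therefore merely reindexes components. Our operation is defined uniformly in the index \(f\), without using \(x\) or the restriction structure of \(\Gamma\) or \(A\), which are just families over \(\cat{C}_0\). Hence it commutes strictly with reindexing.
  \item \textbf{Naturality in \(\Gamma\).} For \(\sigma \co \Delta \to \Gamma\), the structures \(\PM{R}(\alpha\sigma)\) and \((\PM{R}\alpha)(\PM{R}\sigma)\) agree, since both evaluate \(\alpha\) at \(\sigma(\delta_f)\) in each component.
\end{itemize}

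The main subtlety is only the bookkeeping showing that the components \((\gamma_i)_f\) depend on \(i\) uniformly. This holds because \(\I\) is constant over \(\cat{C}\), so paths in \(\PM{R}\Gamma\) are exactly families of paths.
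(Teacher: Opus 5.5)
Your construction is correct. For trivial fibrancy it is the direct componentwise argument the paper relies on. For the fibrant case you take a different route.

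The paper does not write down fillers for \(\PM{R}A\). It reduces the fibrant case to the trivial one via \cref{prop:cofree-presheaf:preservation-properties}. A fibrancy structure on \(A\) over \(\Gamma\) can be recast as a trivial fibrancy structure on a derived family built from \(\I\)-indexed products, \(\Sigma\)-types and extensional equality. For instance, over \(\gamma\co\I\to\Gamma\) and a point of \(A\gamma_e\), take the type of elements of \(\prod_{i\co\I}A\gamma_i\) with that value at \(e\). Since \(\PM{R}\) preserves all of these strictly, the trivial case transfers formally.

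Your version is more elementary and gives explicit fillers, with restriction-compatibility and naturality checked by hand. It works precisely because \((\PM{R}A)\gamma\) is a plain product on which restriction only reindexes components, and because \(\I\) and \(\Phi\) are constant over \(\cat{C}\).

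The paper's reduction buys modularity. It applies to any pseudomorphism with these strict preservation properties. The analogous statement for \(\PM{D}\) (\cref{prop:cobar-pseudomorphism:preserves-fibrancy}) is where this pays off: the face conditions in the definition of \(\PM{D}\) prevent naive component-by-component filling, so the trivial-fibrancy case needs a genuine level-by-level argument. Your approach, by contrast, is specific to \(\PM{R}\) (and hence \(\PM{E}\)).
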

The composite \(\PM{E} \coloneqq \PM{R}\PM{U}\) induces a monad on \(\PSh*{\cat{C}}\).
As such, it is pointed by \(\eta_\Gamma \co \Gamma \to \PM{R} \Gamma\) and therefore \(\LO{E}\) is a lex operation.
Furthermore, \(\LO{E}A = (\PM{R} (\PM{U} A))\eta_\Gamma = \LO{R}(\PM{U} A)\).
As a corollary of \cref{prop:cofree-presheaf:fibrancy}, the pseudomorphism \(\PM{E}\) and the induced lex operation \(\LO{E}\) lift to the model of fibrant types \(\PSh*{\cat{C}}_{\Fib*}\).

\subsection{The cobar descent data operation}

We now define the \emph{cobar} pointed pseudomorphism \(\PM{D}\) and prove the associated lex operation \(\LO{D}\) to be a descent data operation.
This refines the construction in~\cite[Definition~17]{CoquandRuchSattler2021} and is similar to the construction by Shulman~\cite[Section~7]{Shulman2019}.
In contrast to the construction by Shulman, this operation will be a lex modality on an intermediate model of \(\HoTT\).

The functor \(\PM{D}\) is defined as a weighted limit of a semisimplicial diagram induced by our generalized version of \(\sE\).
We fix a cosemisimplicial type \(\sP_n\) of \emph{weights}.
An element \(w \co \sP_n\) is given by a sequence \(i \co \I^{n+1}\) satisfying \(\bigvee_j i_j = 1\).
The \(k\)-th face map \(\face[n]{k} \co \sP_n \to \sP_{n+1}\) adds a \(0\) in the \(k\)-th component.

\begin{definition}
  Given a context \(\Gamma \co \PSh*{\cat{C}}\), we set
  \[
    \paren[]{\PM{D}\Gamma}_x
    \coloneqq
    \set[\bigg]{
      \gamma \co \prod_{n \geq 0} (\PM{E}^{n+1}\Gamma)^{\sP_n}_x
    }[
      \gamma_{n+1} \cc \face{k} = \paren{\PM{E}^k\eta}_x \cc \gamma_n
    ].
  \]
  On morphisms, \(\PM{D}\) is given pointwise in terms of \(\PM{E}\).
  The unit \(\tau_\Gamma \co \Gamma \to \PM{D} \Gamma\) is given by \((\tau_\Gamma\gamma)_n(i) \coloneqq \eta^{n+1}(\gamma)\).
  Given \(A \co \Ty*(\Gamma)\), we set
  \[
    (\PM{D} A)\gamma
    \coloneqq
    \set[\Bigg]{
      a \co \prod_{\substack{ n \geq 0 \\ i \co \sP_n }}  (\PM{E}^{n+1}A)(x, \gamma_n i)
    }[
      a_{n+1} \cc \face{k} = \paren{\PM{E}^k\eta}_{(x, \gamma)} \cc a_n
    ].
  \]
  For \(a \co \El*(\Gamma, A)\), we set \(\paren[\big]{(\PM{D} a)\gamma}_n(i) \coloneqq (\PM{E}^{n+1}a)(x, \gamma_ni)\).
\end{definition}

Like \(\PM{E}\) and \(\PM{U}\), the pointed pseudomorphism \(\PM{D}\) has certain strict preservation properties.
These are crucial for proving that it possesses certain homotopical properties.

\begin{lemma}\label[lemma]{prop:cobar-pseudomorphism:preservation-properties}
  The pseudomorphism \(\PM{D}\) preserves products and path types, and is therefore in particular left exact.
\end{lemma}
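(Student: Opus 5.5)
The plan is to proceed exactly as for \(\PM{R}\) in \cref{prop:cofree-presheaf:preservation-properties}. First I show that \(\PM{D}\) commutes strictly (up to the canonical comparison isomorphism) with products indexed by a type \(J\) of \(\cSet\). Then I deduce preservation of path types, and finally conclude left exactness.

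For products, fix \(\Gamma\), a type \(J\), and a family \(A_j \co \Ty*(\Gamma)\). By \cref{prop:forgetful:preservation-properties} and \cref{prop:cofree-presheaf:preservation-properties}, both \(\PM{U}\) and \(\PM{R}\) preserve products, so \(\PM{E} = \PM{R}\PM{U}\) does as well. By induction on \(n\), each iterate \(\PM{E}^{n+1}\) then preserves products, and the maps \(\PM{E}^k\eta\) act componentwise on a product. Now unfold the definition of \((\PM{D}\prod_j A_j)\gamma\). It consists of families \(a_n(i) \co (\PM{E}^{n+1}\prod_j A_j)(x,\gamma_n i) \cong \prod_j (\PM{E}^{n+1}A_j)(x,\gamma_n i)\), subject to the equations \(a_{n+1}\cc\face{k} = (\PM{E}^k\eta)\cc a_n\). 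These equations hold for the tuple exactly when they hold in each component \(j\), since equality in a product is componentwise and \(\PM{E}^k\eta\) acts componentwise. Swapping the products over \(n\), \(i \co \sP_n\) and \(j \co J\) therefore gives the isomorphism with \(\prod_j (\PM{D}A_j)\gamma\). It remains to check that this isomorphism is the canonical comparison map. That holds because \(\PM{D}\) acts on elements, and hence on application and abstraction, pointwise through \(\PM{E}^{n+1}\).

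For path types, path types are defined as a subtype of products over the interval \(\I\), cut out by endpoint equations. Products over \(\I\) are preserved by the previous step. The endpoint conditions are extensional equalities, and every pseudomorphism preserves \(\Sigma\)-types and extensional identity types. Hence \(\PM{D}\) preserves path types, exactly as in the proof of \cref{prop:cofree-presheaf:preservation-properties}.

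For left exactness, contractibility is expressed as a \(\Sigma\) of a center and a \(\Pi\) of paths. Since \(\PM{D}\) preserves \(\Sigma\), \(\Pi\) (the products treated above) and paths, it sends contractible types to contractible types. Alternatively, one can simply invoke the cited theorem that pointed pseudo-endomorphisms are left exact, with pointing \(\tau\). The main obstacle I expect is the product step. There I must verify carefully that the weighted-limit equations involving \(\PM{E}^k\eta\) commute with the product isomorphisms for \(\PM{E}^{n+1}\), which comes down to naturality of \(\eta\) with respect to the product comparison. I also expect some bookkeeping over dependence on \(\gamma_n i\) rather than on \(\gamma\).
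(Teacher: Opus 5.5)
Your argument is essentially the paper's proof: \(\PM{E} = \PM{R}\PM{U}\) preserves products, products commute with the limit defining \(\PM{D}\), and path types follow from products over \(\I\) together with the preservation of \(\Sigma\)-types and extensional identity types by any pseudomorphism. One small correction to your left-exactness paragraph: the products you treated are indexed by cubical sets, whereas the \(\Pi\) in \(\IsContr(A)\) ranges over the presheaf type \(A\) itself, which your product step does not cover. You do not need it, though: a contraction is an element in context \(\Gamma.A\), \(\PM{D}\) transports it along \({\downarrow}\) to an element in context \(\PM{D}\Gamma.\PM{D}A\), and path preservation makes this a contraction of \(\PM{D}A\); alternatively, your appeal to the theorem on pointed pseudo-endomorphisms also works.
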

\begin{proof}
  As in \cref{prop:cofree-presheaf:preservation-properties}, it suffices to show preservation of products.
  This follows from \(E\) preserving products by \cref{prop:cofree-presheaf:preservation-properties} and the fact that products commute with limits.
\end{proof}

\subsubsection{Fibrancy and rectification of equivalences}

The pseudomorphism \(\PM{D}\) strictifies levelwise (trivial) fibrancy structures.
In particular, it lifts to the model of fibrant types.
The argument from~\cite{CoquandRuchSattler2021} translates to our setting.

\begin{lemma}\label[lemma]{prop:cobar-pseudomorphism:preserves-fibrancy}
  Naturally in \(\Gamma \co \PSh*{\cat{C}}\), given \(A \co \Ty*(\Gamma)\) there are maps
  \begin{alignat*}{5}
    &\Fib*[\PSh*{\cat{C}_0}](\PM{U}\Gamma, \PM{U}A) &&\longrightarrow \Fib*[\PSh*{\cat{C}}](\PM{D}\Gamma, \PM{D} A),
    \quad
    &\alpha &\longmapsto \PM{D}\alpha, \\
    &\TrivFib*[\PSh*{\cat{C}_0}](\PM{U}\Gamma, \PM{U}A) &&\longrightarrow \TrivFib*[\PSh*{\cat{C}}](\PM{D}\Gamma, \PM{D} A),
    \quad
    &\alpha &\longmapsto \PM{D}\alpha.
  \end{alignat*}
\end{lemma}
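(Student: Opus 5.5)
The plan is to follow the strategy of \cite[Proposition~18]{CoquandRuchSattler2021}, working internally to \(\cSet\). First reduce to the trivial fibrancy case. By \cref{prop:cobar-pseudomorphism:preservation-properties}, \(\PM{D}\) strictly preserves products and path types. A fibrancy structure on a family can be expressed (as in the Orton--Pitts presentation) as a trivial fibrancy structure on an associated family built from \(A\) using products over \(\I\) and \(\Phi\)-restrictions, and all of these are preserved levelwise by \(\PM{U}\). Applying \(\PM{D}\) to this associated family gives the corresponding family for \(\PM{D}A\) up to the strict isomorphisms coming from preservation of products. This is the same reduction used before \cref{prop:cofree-presheaf:fibrancy}. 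Naturality in \(\Gamma\) is inherited at every step, since all constructions are stable under substitution.

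For trivial fibrancy, fix \(x \co \cat{C}_0\), \(\gamma \co (\PM{D}\Gamma)_x\), and a partial element \(u \co [\varphi] \to (\PM{D}A)\gamma\). We must produce a total \(a\), meaning a family \(a_n(i) \co (\PM{E}^{n+1}A)(x,\gamma_n i)\) with \(a_n(i) = u_n(i)\) under \(\varphi\) and satisfying the face equations \(a_{n+1}\face{k} = (\PM{E}^k\eta)\,a_n\). From the given \(\alpha\) and \cref{prop:cofree-presheaf:fibrancy} we get trivial fibrancy of \(\PM{E}^{n+1}A = \LO{R}(\PM{U}\PM{E}^n A)\) over \(\PM{E}^{n+1}\Gamma\) for every \(n\). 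Note this is the fully natural (non-levelwise) structure, because \(\PM{R}\) turns levelwise structures into natural ones. We then construct \(a_n\) by induction on \(n\). At stage \(n\), the constraints on \(a_n\) over \(i \co \sP_n\) form a partial element defined on the cofibration \(\varphi \vee \bigvee_k (i \text{ in the image of } \face{k})\), i.e.\ \(\varphi \vee \bigvee_j (i_j = 0)\). On the face \(i_k = 0\) the value is forced to be \((\PM{E}^k\eta)\,a_{n-1}(i')\). These pieces agree on overlaps by the semisimplicial identities, the naturality of \(\eta\), and the fact that \(u\) already satisfies the face equations. We then fill with the trivial fibrancy structure of \(\PM{E}^{n+1}A\) at the point \((x,\gamma_n i)\). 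The total family \(a\) is the limit of this tower, and it lies in \((\PM{D}A)\gamma\) by construction.

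The main obstacle is compatibility with restriction along morphisms of \(\cat{C}\): the filler must be a \(\PSh*{\cat{C}}\)-structure, not merely a levelwise one. The filling at stage \(n\) uses the structure on \(\PM{E}^{n+1}A\), which is natural in the \(\cat{C}\)-variable since \(\PM{R}\) produces it uniformly. The restriction of \(\PM{D}A\) is pointwise that of \(\PM{E}^{n+1}A\), so the constructed \(a\) is preserved by restriction. This is precisely why the cofree \(\PM{E}\) appears in every degree, including degree \(0\). I would check this carefully, together with the claim that the boundary \(\bigvee_j (i_j=0)\) restricted to \(\sP_n\) is a cofibration in \(\Phi\), which holds by closure of \(\Phi\) under finite disjunctions and endpoint equations. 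Finally, the fibrancy-structure case either follows from the reduction above, or directly by the same induction with \(\varphi\) replaced by the open-box boundary \(\varphi \vee i = e\).
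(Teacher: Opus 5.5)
Your proposal is correct and follows the intended argument. The paper only says that the proof of \cite{CoquandRuchSattler2021} translates; that proof reduces to trivial fibrancy via preservation of products and path types, exactly as the paper does for \(\PM{R}\), and then builds the filler degree by degree from the natural trivial fibrancy of each \(\PM{E}^{n+1}A\) and the cofibrancy of the boundary \(\bigvee_j (i_j = 0)\) of \(\sP_n\). One minor slip: \(\PM{E}^{n+1}A\) is the pseudomorphism action \(\PM{R}(\PM{U}\PM{E}^nA)\) over \(\PM{E}^{n+1}\Gamma\), not the dependent right adjoint \(\LO{R}\), and obtaining its structure for all \(n\) is an induction that alternates \cref{prop:cofree-presheaf:fibrancy} with forgetting along \(\PM{U}\).
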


A crucial consequence of the preservation of trivially fibrant types is that \(\PM{D}\) strictifies levelwise equivalences between fibrant types.
We strengthen the result from~\cite[Corollary~22]{CoquandRuchSattler2021} slightly to levelwise equivalences between levelwise fibrant types.
This is essentially an application of Ken Brown's lemma~\cite[Factorization Lemma]{Brown1973} in our setting.

\begin{proposition}\label[proposition]{prop:cobar-strictifies-levelwise-equivalences}
  Naturally in \(\Gamma \co \PSh*{\cat{C}}\), given \(A, B \co \Ty*^{\Fib*_0}(\Gamma)\), \(f \co A \to B\), there is a map
  \[
    \El*_{\PSh*{\cat{C}_0}}(\PM{U}\Gamma, \IsEquiv(\PM{U}f)) \longrightarrow \El*_{\PSh*{\cat{C}}}(\PM{D} \Gamma, \IsEquiv(\PM{D} f)).
  \]
\end{proposition}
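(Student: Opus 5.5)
We follow Ken Brown's lemma: factor \(f\) through a mapping path space, so that a levelwise equivalence becomes a composite of a section of a levelwise trivial fibration and a levelwise trivial fibration. Then we apply \cref{prop:cobar-pseudomorphism:preserves-fibrancy} to the trivial fibrations and use that \(\PM{D}\) preserves the relevant type formers strictly (\cref{prop:cobar-pseudomorphism:preservation-properties}).

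Concretely, form the type \(P \coloneqq \Sigma_{a\co A}\Sigma_{b\co B}\, \mathsf{Path}_B(f a, b)\) over \(\Gamma\). It carries three maps: the projection \(p \co P \to B\) sending \((a,b,q)\) to \(b\), the projection \(r \co P \to A\), and the section \(s \co A \to P\), \(s(a) = (a, fa, \mathsf{refl})\). These satisfy \(r s = \id\) and \(f = p s\) strictly. Since \(U\) preserves \(\Sigma\) and path types by \cref{prop:forgetful:preservation-properties}, \(P\) is levelwise fibrant, with its levelwise fibrancy built naturally from that of \(A\) and \(B\).

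\begin{enumerate}
\item The map \(r\), viewed as the family \(a \mapsto \Sigma_{b}\mathsf{Path}(fa,b)\) over \(\Gamma.A\), is levelwise a contractible singleton type. Hence it carries a levelwise trivial fibrancy structure, natural and requiring no hypothesis on \(f\).
\item Given a levelwise proof that \(Uf\) is an equivalence, the family \(p\), i.e.\ \(b \mapsto \mathsf{fib}_f(b)\) over \(\Gamma.B\), has levelwise contractible fibres. Contractibility of fibrant families yields a levelwise trivial fibrancy structure. This is the step where the input element of \(\IsEquiv(\PM{U}f)\) is consumed, and we must check that the conversion is natural in \(\Gamma\).
\item Applying \cref{prop:cobar-pseudomorphism:preserves-fibrancy} to these structures gives trivial fibrancy structures on \(\PM{D}\) of both families. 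Because \(\PM{D}\) is a pseudomorphism preserving \(\Sigma\), path types and comprehension up to the canonical isomorphisms, \(\PM{D}P\) is the corresponding path-space construction for \(\PM{D}f\). Thus \(\PM{D}p\) and \(\PM{D}r\) are trivial fibrations over \(\PM{D}\Gamma\), hence equivalences.
\item From \(\PM{D}r\circ\PM{D}s = \id\) we conclude by 2-out-of-3 that \(\PM{D}s\) is an equivalence. Then \(\PM{D}f = \PM{D}p \circ \PM{D}s\) is an equivalence, since equivalences compose.
\end{enumerate}

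The fibrancy needed for the homotopical reasoning in the last two steps comes from \(\PM{D}\) applied to the levelwise fibrancy structures, by the first map of \cref{prop:cobar-pseudomorphism:preserves-fibrancy}. All constructions are uniform, so the resulting map is natural in \(\Gamma\).

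The main obstacle is step 3. It requires strict bookkeeping: identifying \(\PM{D}\) of the family \(\mathsf{fib}_f\) over \(\Gamma.B\) with a family over \(\PM{D}\Gamma.\PM{D}B\) via the inverse comparison map \(\Inv\), and checking that \(\PM{D}s\) and \(\PM{D}p\) really compose to \(\PM{D}f\) under this identification. I would handle this by working with the context extensions directly, as pseudomorphisms preserve composition and substitution strictly. I would then transport the trivial fibrancy structures along the comparison isomorphism.
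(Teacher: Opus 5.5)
Your proposal is correct and follows the route the paper indicates: Ken Brown's factorization of \(f\) through the mapping path space, with the levelwise trivial fibrancy structures on the two projections pushed through \(\PM{D}\) by \cref{prop:cobar-pseudomorphism:preserves-fibrancy}, followed by 2-out-of-3. The bookkeeping you flag in step 3 is lighter than you expect. You do not need to identify \(\PM{D}P\) with the path space of \(\PM{D}f\); it suffices to use strict functoriality of \(\PM{D}\) on \(rs = \id\) and \(ps = f\), together with the context isomorphisms exhibiting \(\Gamma.P\) as an extension of both \(\Gamma.A\) and \(\Gamma.B\).
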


\subsubsection{Descent data operation structure}

The pointed pseudomorphism \(\PM{D}\) induces a lex operation \(\LO{D}\) with unit \(\tau_A \co A \to \LO{D}A\).
We show that \(\LO{D}\) is a descent data operation on \(\PSh*{\cat{C}}_{\Fib*}\) following~\cite{CoquandRuchSattler2021}.

\begin{lemma}\label[lemma]{prop:cobar-unit-is-levelwise-equivalence}
  Naturally in \(\Gamma \co \PSh*{\cat{C}}\), given \(A \co \Ty*(\Gamma)\), there is a map
  \(
    \Fib*_{\PSh*{\cat{C}_0}}(\PM{U}\Gamma, \PM{U}A) \to \El*_{\PSh*{\cat{C}_0}}(\PM{U}\Gamma, \IsEquiv(\PM{U}\tau_A)).
  \)
\end{lemma}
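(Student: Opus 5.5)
The strategy is to show that after forgetting to \(\PSh*{\cat{C}_0}\), the unit \(\tau_A\) acquires an explicit retraction that is a deformation retraction fibrewise. The point is that \(\PM{U}\PM{E} = \PM{U}\PM{R}\PM{U}\) carries the counit \(\varepsilon\) of the adjunction \(\PM{U} \dashv \PM{R}\). This provides an \emph{extra degeneracy} for the augmented semisimplicial object \(\PM{U}\Gamma \to \PM{U}\PM{E}^{\bullet+1}\Gamma\), and similarly for \(A\). Weighted limits of semisimplicial diagrams with an extra degeneracy (contracting homotopy) are equivalent to the augmentation. This is the argument of~\cite{CoquandRuchSattler2021} for the corresponding statement, and I would follow it, working internally to \(\cSet\) and fixing \(x \co \cat{C}_0\) throughout, since everything is computed at a single level.

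\textbf{Step 1: the retraction.} At level \(x\) and over \(\gamma \co \Gamma_x\), an element of \((\PM{D}A)(\tau\gamma)\) is a compatible family \(a_n(i) \co (\PM{E}^{n+1}A)(x, \eta^{n+1}\gamma)\) for \(i \co \sP_n\). I define the retraction \(r\) by
\[
  r(a) \coloneqq \varepsilon(a_0(1)),
\]
where \(1 \co \sP_0\) is the unique weight and \(\varepsilon\) evaluates at \(\id_x\). Then \(r \circ \tau = \id\) holds strictly, by the triangle identity \(\varepsilon\eta = \id\).

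\textbf{Step 2: the homotopy \(\tau r \sim \id\).} This is built from the extra degeneracy. Define \(s \co \PM{U}\PM{E}^{n+1} \to \PM{U}\PM{E}^{n}\) by applying \(\varepsilon\) at the outermost level; these satisfy the augmented semisimplicial identities against the face maps \(\PM{E}^k\eta\). The homotopy uses the interval coordinate \(t \co \I\) together with the connections. To the weight \(i \in \sP_n\), I associate the weight \((t, (1-t)\wedge i)\)-style shifted weight \((t \vee \cdot, \dots)\) in \(\sP_{n+1}\), following the formula in~\cite[Lemma~19]{CoquandRuchSattler2021}. I then set
\[
  h_t(a)_n(i) \coloneqq s\bigl(a_{n+1}(\text{shifted weight})\bigr).
\]
At \(t = 0\) the shifted weight is \(\face{0} i\); using compatibility of \(a\) with face maps and \(\varepsilon\eta = \id\), this gives back \(a_n(i)\). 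At \(t = 1\) the weight concentrates on the new vertex, and the expression reduces to \(\eta^{n+1}(r(a))\), that is, \(\tau(r(a))\). I then check that \(h_t(a)\) again satisfies the face compatibility conditions, which is a routine calculation with the cosemisimplicial identities. I expect this step to be the main obstacle. Its subtleties are choosing the weight formula so that the condition \(\bigvee_j i_j = 1\) is preserved, using the connections from \cref{ass:cubical-setup}, and observing that the homotopy is only natural in \(x\) for the identity-indexed components. That is exactly why the statement is levelwise, over \(\PSh*{\cat{C}_0}\), and no more.

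\textbf{Step 3: from retraction to equivalence.} Steps 1 and 2 make \(\PM{U}\tau_A\) a strict section with a homotopy inverse, hence a homotopy equivalence of levelwise types. To conclude \(\IsEquiv(\PM{U}\tau_A)\), i.e.\ bi-invertibility or contractible fibres, I need the types involved to be fibrant. \(\PM{U}A\) is fibrant by the given \(\alpha\). \(\PM{U}\PM{D}A\) is fibrant via \cref{prop:cobar-pseudomorphism:preserves-fibrancy} applied to \(\alpha\), restricted along \(\PM{U}\). So the standard conversion from quasi-inverse to \(\IsEquiv\) applies, yielding a map from \(\alpha\) to the required element. Naturality in \(\Gamma\) holds because every construction is given by pointwise formulas stable under substitution.
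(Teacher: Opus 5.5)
Your Steps 1 and 3 are sound. The map \(r(a) \coloneqq \varepsilon(a_0(1))\) is a strict retraction of \(\PM{U}\tau_A\) by the triangle identity. The maps \(s \coloneqq \varepsilon_{\PM{U}\PM{E}^{n}A}\) satisfy \(s \cc \eta = \id\) (triangle identity) and \(s \cc \PM{E}^{k+1}\eta = \PM{E}^{k}\eta \cc s\) (naturality of \(\varepsilon\)). And \(\PM{U}(\LO{D}A)\) is fibrant over \(\PM{U}\Gamma\) via \(\PM{D}\alpha\) from \cref{prop:cobar-pseudomorphism:preserves-fibrancy}.

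The gap is in Step 2. A single homotopy of the form \(s(a_{n+1}(w(t,i)))\) from \(a\) to \(\tau(r(a))\) needs weights \(w(t,i) \in \sP_{n+1}\) with \(w(0,i) = \face{0} i\) and \(w(1,i) = (1,0,\dots,0)\). Such weights do not exist. For \(n = 0\), \(w(-,1)\) would be a path in the horn \(\sP_1 = \{(i_0,i_1) \mid i_0 \vee i_1 = 1\}\) from \((0,1)\) to \((1,0)\). These two vertices are only joined by the zig-zag through \((1,1)\), not by a path. Concretely, your weight \((t, (1-t) \wedge i)\) uses a reversal, which \cref{ass:cubical-setup} does not provide (only connections). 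Even with a reversal, \(t \vee (1-t) \neq 1\), so the weight leaves \(\sP_{n+1}\).

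The repair is to use two reweightings. Each lands in \(\sP_{n+1}\) and satisfies \(w(t, \face{k} i) = \face{k+1} w(t, i)\) (using \(t \wedge 0 = 0\)), so both define elements of \(\LO{D}A\) over \(\gamma\):
\[ h_t(a)_n(i) \coloneqq s\bigl(a_{n+1}(t, i_0, \dots, i_n)\bigr), \qquad h'_t(a)_n(i) \coloneqq s\bigl(a_{n+1}(1, t \wedge i_0, \dots, t \wedge i_n)\bigr). \]
Their endpoints are as follows.
\begin{itemize}
\item \(h_0(a) = a\), since \(a_{n+1} \cc \face{0} = \eta \cc a_n\) and \(s \cc \eta = \id\).
\item \(h'_0(a) = \tau(r(a))\), since \((1, 0, \dots, 0) = \face{n+1} \cdots \face{1}(1)\) and \(s\) commutes past each \(\PM{E}^{k+1}\eta\).
\item \(h_1(a) = h'_1(a) = s(a_{n+1}(1, i))\).
\end{itemize}
This gives a zig-zag of homotopies \(\id \sim g \sim \tau \cc r\), where \(g(a)_n(i) \coloneqq s(a_{n+1}(1, i))\). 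You then invert and compose it using the Kan composition of \(\PM{U}(\LO{D}A)\) coming from \(\PM{D}\alpha\), which keeps everything natural in \(\Gamma\).

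So the hypothesis \(\alpha\) is needed already to build the homotopy \(\tau \cc r \sim \id\), not only (as in your Step 3) to turn a quasi-inverse into an element of \(\IsEquiv(\PM{U}\tau_A)\). With this change, the rest of your argument goes through.
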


Using the above, we show that both \(\LO{D}\tau_A\) and \(\tau_{\LO{D} A}\) are equivalences.
In turn, this implies that \(\LO{D}\) is a descent data operation.

\begin{corollary}\label[corollary]{prop:cobar-lex-operation:elimination-unique}
  Naturally in \(\Gamma\), given \(A \co \Ty*_{\PSh*{\cat{C}}}(\Gamma)\), there is a map
  \(
    \Fib*_{\PSh*{\cat{C}_0}}(\PM{U}\Gamma, \PM{U}A) \to \El*_{\PSh*{\cat{C}}}(\Gamma, \IsEquiv(\LO{D}\tau_A)).
  \)
\end{corollary}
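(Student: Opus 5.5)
The plan is to obtain the statement from \cref{prop:cobar-unit-is-levelwise-equivalence} and \cref{prop:cobar-strictifies-levelwise-equivalences}, via the identification \(\LO{D}\tau_A = (\PM{D}\tau_A)\tau_\Gamma\). Since \(\LO{D}\) is a lex operation, \(\LO{D}\tau_A \co \LO{D}A \to \LO{D}\LO{D}A\) is the functorial action of \(\LO{D}\) on \(\tau_A \co A \to \LO{D}A\). Unfolding the definitions, \(\LO{D}\tau_A\) is the substitution along \(\tau_\Gamma \co \Gamma \to \PM{D}\Gamma\) of the map \(\PM{D}\tau_A \co \PM{D}A \to \PM{D}(\LO{D}A)\) over \(\PM{D}\Gamma\). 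It therefore suffices to produce an element of \(\El*_{\PSh*{\cat{C}}}(\PM{D}\Gamma, \IsEquiv(\PM{D}\tau_A))\) and then restrict it along \(\tau_\Gamma\). This restriction is legitimate because \(\IsEquiv\) is stable under substitution, and \(\PM{D}\) preserves \(\Pi\), \(\Sigma\), and path types by \cref{prop:cobar-pseudomorphism:preservation-properties}.

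The argument then proceeds in four steps.
\begin{enumerate}
  \item Apply \cref{prop:cobar-unit-is-levelwise-equivalence} to the given \(\alpha \co \Fib*_{\PSh*{\cat{C}_0}}(\PM{U}\Gamma, \PM{U}A)\). This yields a levelwise witness in \(\El*_{\PSh*{\cat{C}_0}}(\PM{U}\Gamma, \IsEquiv(\PM{U}\tau_A))\).
  \item Check that \(A\) and \(\LO{D}A\) are levelwise fibrant, naturally in \(\alpha\). For \(A\) this is exactly \(\alpha\). For \(\LO{D}A\), \cref{prop:cobar-pseudomorphism:preserves-fibrancy} gives a fibrancy structure \(\PM{D}\alpha\) on \(\PM{D}A\) over \(\PM{D}\Gamma\). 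Substituting along \(\tau_\Gamma\) makes \(\LO{D}A\) fibrant, and applying the forgetful \(\PM{U}\) makes it levelwise fibrant. A genuine fibrancy structure yields a levelwise one by forgetting the compatibility condition, since \(\PM{U}\) preserves the relevant structure by \cref{prop:forgetful:preservation-properties}.
  \item With \(A, \LO{D}A \co \Ty*^{\Fib*_0}(\Gamma)\) and \(f = \tau_A\), apply \cref{prop:cobar-strictifies-levelwise-equivalences} to the witness from the first step. This gives an element of \(\El*_{\PSh*{\cat{C}}}(\PM{D}\Gamma, \IsEquiv(\PM{D}\tau_A))\).
  \item Substitute this element along \(\tau_\Gamma\) to obtain the desired element of \(\El*_{\PSh*{\cat{C}}}(\Gamma, \IsEquiv(\LO{D}\tau_A))\).
\end{enumerate}
Each construction used is natural in \(\Gamma\), so the composite map is natural as well.

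I expect the main obstacle to be the bookkeeping in identifying \(\LO{D}\tau_A\) with \((\PM{D}\tau_A)\tau_\Gamma\) \emph{strictly}. The functorial action of a lex operation on a map \(f \co A \to B\) is defined through the action on families and elements, \(\widetilde{\LO{D}}\) and the inverse \(\Inv\). One must check that the codomain \(\PM{D}(\LO{D}A)\), which is \(\PM{D}\) applied to a type already substituted along \(\tau_\Gamma\), matches the type that \cref{prop:cobar-strictifies-levelwise-equivalences} expects. It must also be checked that \(\IsEquiv\) commutes strictly with \(\PM{D}\) and with substitution. I would handle this by using that \(\PM{D}\) is a pseudomorphism preserving products and paths, hence \(\IsEquiv\), on the nose. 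A secondary point is to ensure that the levelwise fibrancy of \(\LO{D}A\) depends only on \(\alpha\), so that the resulting map is a genuine function out of \(\Fib*_{\PSh*{\cat{C}_0}}(\PM{U}\Gamma, \PM{U}A)\).
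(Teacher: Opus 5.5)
Your proposal is correct and is essentially the paper's proof, which consists only of citing \cref{prop:cobar-unit-is-levelwise-equivalence} and \cref{prop:cobar-strictifies-levelwise-equivalences}. Your additional steps spell out bookkeeping the paper leaves implicit: levelwise fibrancy of \(\LO{D}A\) obtained from \(\PM{D}\alpha\) by substitution along \(\tau_\Gamma\), and restricting the witness over \(\PM{D}\Gamma\) along \(\tau_\Gamma\) via \(\LO{D}\tau_A = (\PM{D}\tau_A)\tau_\Gamma\) (where invoking \(\PM{D}\)'s preservation of \(\Pi\), \(\Sigma\), and path types is unnecessary, since stability of \(\IsEquiv\) under substitution already suffices).
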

\begin{proof}
  By \cref{prop:cobar-strictifies-levelwise-equivalences,prop:cobar-unit-is-levelwise-equivalence}.
\end{proof}

The second claim is proven again following~\cite{CoquandRuchSattler2021} by exhibiting a zig-zag of homotopies \(\LO{D}\tau_A \sim \tau_{\LO{D} A}\) for \(A\co \Ty*_{\PSh*{\cat{C}}}^{\Fib*_0}(\Gamma)\).
We remark that it again suffices to assume that \(A\) is levelwise fibrant since \(\LO{D} A\) and \(\LO{D}^2 A\) are then fibrant by \cref{prop:cobar-pseudomorphism:preserves-fibrancy}.

\begin{lemma}\label[lemma]{prop:cobar-lex-operation:free-algebras-exist}
  Naturally in \(\Gamma\), given \(A \co \Ty*_{\PSh*{\cat{C}}}(\Gamma)\), there is a map \(\Fib*_{\PSh*{\cat{C}_0}}(\PM{U}\Gamma, \PM{U}A) \to \El*_{\PSh*{\cat{C}}}(\Gamma, \IsEquiv(\tau_{\LO{D} A}))\).
\end{lemma}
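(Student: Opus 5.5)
The strategy is to reduce the claim to \cref{prop:cobar-lex-operation:elimination-unique} by the 2-out-of-3 property. The text just before the statement suggests exhibiting a zig-zag of homotopies between \(\LO{D}\tau_A\) and \(\tau_{\LO{D}A}\), viewed as maps \(\LO{D}A \to \LO{D}^2A\). Given a levelwise fibrancy structure \(\alpha\) on \(\PM{U}A\), the types \(\LO{D}A\) and \(\LO{D}^2A\) are fibrant by \cref{prop:cobar-pseudomorphism:preserves-fibrancy}, applied once and then again to the resulting structure. Being an equivalence is then invariant under homotopy, and by \cref{prop:cobar-lex-operation:elimination-unique} we already have a section of \(\IsEquiv(\LO{D}\tau_A)\) constructed from \(\alpha\). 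Transporting along the zig-zag gives the required element of \(\IsEquiv(\tau_{\LO{D}A})\). Since every step is an explicit construction from \(\alpha\), naturality in \(\Gamma\) follows automatically.

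To construct the zig-zag, I would mimic the simplicial argument of \cite{CoquandRuchSattler2021}. An element of \(\LO{D}^2A\) is a doubly indexed compatible family with components in \(\PM{E}^{m+1}\PM{E}^{n+1}A = \PM{E}^{m+n+2}A\) over weights \((i,j) \co \sP_m \times \sP_n\). Here \(\LO{D}\tau_A\) places \(a_n(j)\) under \(\eta^{m+1}\), and \(\tau_{\LO{D}A}\) applies \(\eta^{n+1}\) on the other side.

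Both maps should be compared with a third, diagonal map \(\LO{D}A \to \LO{D}^2A\). It sends \(a\) to the family whose \((m,n)\)-component at \((i,j)\) is \(a_{m+n+1}\) evaluated at the concatenated weight \((i,j) \in \sP_{m+n+1}\). This is well-typed because \(\bigvee i = 1\) already implies \(\bigvee (i,j) = 1\). Compatibility with face maps holds because a face inserting \(0\) in \(i\) or in \(j\) is a face of the concatenation, and the \(\PM{E}^k\eta\) indices match up.

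The homotopy from \(\LO{D}\tau_A\) to the diagonal map is given by scaling one block of the weight with an interval variable \(t\). I would use \((i,j) \mapsto (i, t \wedge j)\) via connections. At \(t=0\) the \(j\)-block is all zeros, so by the face relations the value reduces to the degenerate image of \(a_m(i)\) under \(\eta^{n+1}\), which is \(\tau_{\LO{D}A}\) or \(\LO{D}\tau_A\) depending on the convention. At \(t=1\) we get the diagonal map. Symmetrically, \((t \wedge i, j)\) connects the diagonal map to the other one. These weights remain in \(\sP_{m+n+1}\) because the untouched block still has join \(1\).

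I expect the main obstacle to be checking that the weights with a whole block set to zero really reduce, via the strict face equations \(a_{n+1}\face{k} = (\PM{E}^k\eta) a_n\), to the intended maps. A block of zeros is an iterated face, so this needs the relevant composite of \(\eta\)'s to agree strictly with the \(\eta^{n+1}\) appearing in \(\tau\), including the order of the \(\PM{E}\) factors. If the orderings do not match strictly, I would insert one further homotopy using naturality of \(\eta\), and possibly the monad structure of \(\PM{E}\). In the worst case I would fall back on the levelwise argument: check \(\PM{U}\tau_{\LO{D}A}\) against \(\PM{U}\LO{D}\tau_A\) levelwise, where \cref{prop:cobar-unit-is-levelwise-equivalence} applies, and then conclude with \cref{prop:cobar-strictifies-levelwise-equivalences}.
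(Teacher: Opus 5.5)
Your proposal is correct and is essentially the paper's argument: the paper also obtains the claim from \cref{prop:cobar-lex-operation:elimination-unique} via a zig-zag of homotopies \(\LO{D}\tau_A \sim \tau_{\LO{D}A}\) following \cite{CoquandRuchSattler2021}, using that \(\LO{D}A\) and \(\LO{D}^2A\) are fibrant by \cref{prop:cobar-pseudomorphism:preserves-fibrancy}, and your diagonal map with the connection homotopies \((i, t \wedge j)\) and \((t \wedge i, j)\) is a valid way to realize that zig-zag. The obstacle you anticipate does not arise: naturality of \(\eta\) holds strictly, so all composites of units \(X \to \PM{E}^{n+1}X\) agree on the nose, and the \(t = 0\) ends are exactly \(\LO{D}\tau_A\) and \(\tau_{\LO{D}A}\). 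Your fallback, by contrast, would not suffice, because \cref{prop:cobar-strictifies-levelwise-equivalences} applied to the levelwise equivalence \(\tau_{\LO{D}A}\) only shows that \(\LO{D}\tau_{\LO{D}A}\) is an equivalence, and upgrading \(\tau_{\LO{D}A}\) itself would presuppose that \(\LO{D}A\) is \(\LO{D}\)-modal, which is what is being proved.
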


\subsection{Core properties of \texorpdfstring{\(\LO{D}\)}{D}}\label[subsection]{sec:cobar:core-properties}

We record a number of useful results about \(\LO{D}\).
In general, working with \(\LO{D}\)-modal types allows us to strictify levelwise into coherent structures.
The key property of \(\LO{D}\), which again translates from~\cite[Section~5.2, Theorem]{CoquandRuchSattler2021}, is: between modal types, levelwise equivalences and equivalences coincide.

\begin{proposition}\label[proposition]{prop:cobar-strictifies-equivalences-between-levelwise-modal-types}
  Naturally in \(\Gamma \co \PSh*{\cat{C}}\), given \(A, B \co \Ty*_{\PSh*{\cat{C}}}^{\LO{D}}(\Gamma)\) and \(f \co A \to B\), there is a logical equivalence
  \(
    \El*_{\PSh*{\cat{C}_0}}(\PM{U}\Gamma, \IsEquiv(\PM{U}f)) \longleftrightarrow \El*_{\PSh*{\cat{C}}}(\Gamma, \IsEquiv(f)).
  \)
\end{proposition}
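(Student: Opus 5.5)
The backward direction is the easy one, and I would dispatch it first. By \cref{prop:forgetful:preservation-properties}, \(\PM{U}\) is left exact and preserves path types and \(\Sigma\)-types. It therefore preserves contractibility and, with it, the type \(\IsEquiv\): we have \(\PM{U}(\IsEquiv(f)) = \IsEquiv(\PM{U}f)\) up to the canonical isomorphism. Applying \(\PM{U}\) to a section of \(\IsEquiv(f)\) over \(\Gamma\) thus yields a section of \(\IsEquiv(\PM{U}f)\) over \(\PM{U}\Gamma\). This construction is clearly natural in \(\Gamma\).

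For the forward direction I would use the strictification result \cref{prop:cobar-strictifies-levelwise-equivalences} and then transport back along the units. Since \(A\) and \(B\) are \(\LO{D}\)-modal, they are fibrant, and in particular levelwise fibrant. Given a levelwise equivalence structure on \(\PM{U}f\), \cref{prop:cobar-strictifies-levelwise-equivalences} gives a section of \(\IsEquiv(\PM{D}f)\) over \(\PM{D}\Gamma\). Reindexing along the unit \(\tau_\Gamma \co \Gamma \to \PM{D}\Gamma\) turns this into a section of \(\IsEquiv(\LO{D}f)\) over \(\Gamma\), since \(\LO{D}f = (\PM{D}f)\tau_\Gamma\). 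Here I would check that \(\IsEquiv\) commutes strictly with substitution, which holds because it is built from \(\Pi\), \(\Sigma\) and path types.

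Next, naturality of the unit gives a strictly commuting square \(\tau_B \circ f = \LO{D}f \circ \tau_A\). By modality of \(A\) and \(B\), the maps \(\tau_A\) and \(\tau_B\) come with equivalence structures. Two-out-of-three for equivalences, applied internally in the fibrant model \(\PSh*{\cat{C}}_{\Fib*}\), then shows that \(\tau_B \circ f\) is an equivalence and hence that \(f\) is one. Concretely, \(f\) is homotopic to \(\tau_B^{-1}\circ \LO{D}f \circ \tau_A\), which is a composite of equivalences. All of this is a uniform construction on the given data, so naturality in \(\Gamma\) follows.

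I expect the main obstacle to be bookkeeping rather than anything conceptual. One must verify that the reindexed section really lands in \(\IsEquiv(\LO{D}f)\), using the footnoted description \(\LO{D}A = (\PM{D}A)\tau_\Gamma\) together with the strict naturality of \(\tau\). One must also check that \cref{prop:cobar-strictifies-levelwise-equivalences} applies uniformly in \(\Gamma\), so that reindexing along \(\tau_\Gamma\) is legitimate. If the direct reindexing fails, the fallback is to work over \(\Gamma.\LO{D}A\) and use the dependent action \(\widetilde{\LO{D}}\) on families instead.
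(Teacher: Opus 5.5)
Your forward direction is the paper's proof. \cref{prop:cobar-strictifies-levelwise-equivalences} makes \(\LO{D}f\) an equivalence. Then \(2\)-out-of-\(3\) applied to the strict naturality square \(\LO{D}f \circ \tau_A = \tau_B \circ f\), with \(\tau_A, \tau_B\) equivalences by modality, gives that \(f\) is an equivalence.

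The paper leaves the backward direction implicit. Your argument for it reaches the right conclusion, but one step is false as stated. \(\PM{U}\) does not preserve \(\Pi\)-types: \cref{prop:forgetful:preservation-properties} only covers products indexed by a fixed type, and function types of presheaves are not computed levelwise. So \(\PM{U}(\IsEquiv(f)) \cong \IsEquiv(\PM{U}f)\) fails in general.

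You only need the map on sections, which you can get as follows:
\begin{enumerate}
\item Rewrite \(\El(\Gamma, \IsEquiv(f))\) as \(\El(\Gamma.B, \IsContr(\Fiber{f}))\) and apply \(\PM{U}\).
\item Use \(\PM{U}(\Gamma.B) \cong \PM{U}\Gamma.\PM{U}B\).
\item Use left exactness of \(\PM{U}\), which is exactly a statement about sections of contractibility types.
\item Use \(\PM{U}(\Fiber{f}) \cong \Fiber{\PM{U}f}\), which follows from preservation of \(\Sigma\)- and path types.
\end{enumerate}
Equivalently, you can use the canonical comparison map \(\PM{U}(\Pi_B C) \to \Pi_{\PM{U}B}\PM{U}C\) in place of the isomorphism. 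This suffices because \(\Pi\) occurs only positively in \(\IsEquiv\).
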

\begin{proof}
  Assume \(U f\) is an equivalence.
  By \cref{prop:cobar-strictifies-levelwise-equivalences}, the map \(\LO{D} f\) is an equivalence.
  Then \(f\) is an equivalence by \(2\)-out-of-\(3\) as \(\LO{D} f \cc \tau_A = \tau_B \cc f\) by naturality of the unit.
\end{proof}

As a consequence of strictifying trivially fibrant types, \ie, \((-2)\)-truncated types, we have that \(\LO{D}\) sends levelwise \(n\)-truncated types to \(n\)-truncated types.
This is a refinement of the usual fact that lex modalities preserve \(n\)-types for this particular modality.

\begin{proposition}\label[proposition]{prop:cobar-strictifies-n-types}
  For all \(n \geq -2\), naturally in \(\Gamma \co \PSh*{\cat{C}}\), given \(A \co \Ty*_{\PSh*{\cat{C}}}^{\Fib*}(\Gamma)\), there is a map
  \(
    \El*_{\PSh*{\cat{C}_0}}(\PM{U}\Gamma, \IsTrunc{n}(\PM{U}A))
    \to
    \El*_{\PSh*{\cat{C}}}(\Gamma, \IsTrunc{n}(\LO{D} A)).
  \)
\end{proposition}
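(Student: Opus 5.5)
We proceed by induction on \(n \geq -2\), and generalise the statement to families: naturally in \(\Gamma\), for \(A \co \Ty*^{\Fib*}_{\PSh*{\cat{C}}}(\Gamma)\), a witness that \(\PM{U}A\) is levelwise \(n\)-truncated yields a witness that \(\LO{D}A\) is \(n\)-truncated. Naturality in \(\Gamma\) is essential, because the inductive step applies the hypothesis in an extended context.

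\textbf{Base case \(n=-2\).} Here \(\IsTrunc{-2}\) is contractibility. Since \(\PM{U}A\) is fibrant, levelwise contractibility corresponds, naturally in \(\Gamma\), to a levelwise trivial fibrancy structure \(\TrivFib*[\PSh*{\cat{C}_0}](\PM{U}\Gamma, \PM{U}A)\). The second map of \cref{prop:cobar-pseudomorphism:preserves-fibrancy} sends this to a trivial fibrancy structure on \(\PM{D}A\) over \(\PM{D}\Gamma\). Reindexing along the unit \(\tau_\Gamma \co \Gamma \to \PM{D}\Gamma\) gives a trivial fibrancy structure on \(\LO{D}A = (\PM{D}A)\tau_\Gamma\) over \(\Gamma\). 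Since \(\LO{D}A\) is fibrant, again by \cref{prop:cobar-pseudomorphism:preserves-fibrancy}, this is the same as \(\LO{D}A\) being contractible.

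\textbf{Inductive step.} Suppose \(\PM{U}A\) is levelwise \((n+1)\)-truncated. Pass to the context \(\Gamma' \coloneqq \Gamma.A.A\p\) and the path family \(P \coloneqq \mathrm{Path}_A(\q\p,\q)\) over it. Because \(\PM{U}\) preserves path types (\cref{prop:forgetful:preservation-properties}), \(\PM{U}P\) is the levelwise path family, so the hypothesis makes \(\PM{U}P\) levelwise \(n\)-truncated over \(\PM{U}\Gamma'\). The induction hypothesis, applied in context \(\Gamma'\), then shows that \(\LO{D}P\) is \(n\)-truncated.

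Next, \(\PM{D}\) preserves path types (\cref{prop:cobar-pseudomorphism:preservation-properties}) and commutes up to the inverse comparison \(\Inv\) with context extension. Hence \(\LO{D}P\) is, up to reindexing along \(\Gamma.A.A\p \to \Gamma.\LO{D}A.\LO{D}A\p\), the path family of \(\LO{D}A\). Precisely, writing \(\widetilde{\LO{D}}\) for the action on families, \(\widetilde{\LO{D}}(P)\) over \(\Gamma.\LO{D}A.\LO{D}A\p\) is strictly \(\mathrm{Path}_{\LO{D}A}\).

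The difficulty is that the induction hypothesis only gives truncatedness of \(\LO{D}P\) along the points \(\tau(a),\tau(b)\). We need it for arbitrary points of \(\LO{D}A\).

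\textbf{Main obstacle.} To get around this, we do not apply the induction hypothesis to \(\LO{D}P\) over \(\Gamma'\). Instead we apply the structure-level statement directly to the family \(\PM{D}P\) over \(\PM{D}\Gamma'\), using \cref{prop:cobar-pseudomorphism:preserves-fibrancy} in context \(\PM{D}\Gamma'\) with the \(\Inv\) isomorphism \(\PM{D}\Gamma' \cong \PM{D}\Gamma.\PM{D}A.\PM{D}A\p\). So we strengthen the induction statement to: levelwise \(n\)-truncatedness of \(\PM{U}A\) over \(\PM{U}\Gamma\) gives \(n\)-truncatedness of \(\PM{D}A\) over \(\PM{D}\Gamma\), naturally in \(\Gamma\). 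The base case is exactly \cref{prop:cobar-pseudomorphism:preserves-fibrancy}. The inductive step uses \(\PM{D}(\mathrm{Path}_A) \cong \mathrm{Path}_{\PM{D}A}\) transported along \(\Inv\), which now covers all pairs of points of \(\PM{D}A\). The statement for \(\LO{D}\) then follows by reindexing along \(\tau_\Gamma\).

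The remaining care is to check that these identifications are strict enough to transport \(\IsTrunc{n}\) witnesses. This holds because they are strict isomorphisms of families that respect fibrancy structures.
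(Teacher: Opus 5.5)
Your proposal is correct. Once you make the strengthening in your ``Main obstacle'' paragraph, it is exactly the paper's proof: reduce to showing that levelwise \(n\)-truncatedness of \(A\) gives \(n\)-truncatedness of \(\PM{D}A\) over \(\PM{D}\Gamma\), handle \(n=-2\) via \cref{prop:cobar-pseudomorphism:preserves-fibrancy}, and do the inductive step by applying the hypothesis to \(\Id_A\) over \(\Gamma.A.A\p\) together with the fact that \(\PM{D}\) preserves path types (\cref{prop:cobar-pseudomorphism:preservation-properties}). Your diagnosis of why inducting directly on \(\LO{D}\) fails, namely that it only controls paths between points of the form \(\tau(a)\), is precisely why the paper opens with ``it suffices to show \dots\ that \(\PM{D}A \co \Ty*(\PM{D}\Gamma)\) is \(n\)-truncated''.
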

\begin{proof}
  It suffices to show for \(n\)-truncated \(A \co \Ty*(\Gamma)\) that \(\PM{D} A \co \Ty*(\PM{D} \Gamma)\) is \(n\)-truncated.
  We proceed by induction.
  The case \(n = -2\) follows from \cref{prop:cobar-pseudomorphism:preserves-fibrancy}.
  For the inductive step, we have an element of \(\El*_{\PSh*{\cat{C}_0}}(\PM{U}\Gamma, \IsTrunc{n+1}(\PM{U}A))\).
  Hence, we obtain an element of \(\El*_{\PSh*{\cat{C}}}(\PM{D}\Gamma.\PM{D} A.\PM{D} A\p, \IsTrunc{n}(\PM{D}\Id_A))\) by induction.
  The goal follows since \(\PM{D}\) preserves path types by \cref{prop:cobar-pseudomorphism:preservation-properties}.
\end{proof}

\section{Lifting descent data operations to presheaves}\label[section]{sec:lifting-of-descent-data-operations}

We work externally unless stated otherwise.
In this section, we show how to lift a descent data operation from \(\CubicalSet_{\Fib}\) to a descent data operation on \(\PSh{\cat*{C}}_{\Fib}\).

\begin{assumption}\label[assumption]{ass:base-model-lex-operation}
  Let \(\LexOp\), \(\vartheta_A \co A \to \LexOp A\) be a lex operation on \(\cSet\) lifting to one on \(\cSet_{\Fib}\).
\end{assumption}
This lex operation on \(\cSet\) can be lifted to a lex operation \(\LexOp[\cat{C}]\) on \(\PSh{\cat*{C}}\).

\begin{construction}[Internally to \(\CubicalSet\)]\label[construction]{prop:lifting-of-lex-operations-over-fibrant-sites:lex-operation-on-presheaves-exists}
  On contexts \(\Delta, \Gamma \co \PSh*{\cat{C}}\), substitutions \(\sigma \co \Delta \to \Gamma\), types \(A \co \Ty*(\Gamma)\), and elements \(a \co \El*(\Gamma, A)\) the pseudomorphism \(\PM{T}_{\cat{C}}\) with unit \(\vartheta\) is given by
  \begin{gather*}
    \paren{\PM{T}_{\cat{C}}\Gamma}_x \coloneqq \LexOp\Gamma_x,
    \quad
    \paren{\PM{T}_{\cat{C}}\sigma}_x \coloneqq \LexOp\sigma_x,
    \quad
    (\vartheta_\Gamma)_x\gamma \coloneqq \vartheta_{\Gamma_x}\gamma,
    \\
    (\PM{T}_{\cat{C}}A)(x, \gamma) \coloneqq (\LexOp* A)(x, \gamma),
    \quad
    \paren{\PM{T}_{\cat{C}}a}(x, \gamma) \coloneqq \paren{\widetilde{\LexOp}a}(x, \gamma).
  \end{gather*}
\end{construction}

\begin{remark}\label[remark]{prop:lifting-of-lex-operations-over-fibrant-sites:lex-operation-on-presheaves-preserved}
  The morphism \(\PM{U} \co \PSh*{\cat{C}} \to \PSh*{\cat{C}_0}\) preserves the above lifted lex operation, in the sense that \(\PM{U}(\PM{T}_{\cat{C}} A) = \PM{T}_{\cat{C}_0} (\PM{U} A)\), \(\PM{U}(\vartheta_A) = \vartheta_{\PM{U}A}\), \etc, since it is given by precomposition.
\end{remark}

We furthermore assumed that \(\sT\) lifts to the inner model \(\cSet_{\Fib}\).
Even if \(A \in \Ty_{\PSh{\cat*{C}}}(\Gamma)\) is fibrant, we cannot expect \(\PM{T}_{\cat{C}}A\) to be fibrant over \(\PM{T}_{\cat{C}}\Gamma\).
To partially bridge this gap (cf.~\cref{prop:lifting-of-lex-operations-over-fibrant-sites:lex-operation-on-presheaves-preserves-levelwise-fibrancy}), we need a condition on \(\cat{C}\).

\begin{assumption}\label[assumption]{ass:fibrant-cat}
  We assume that \(\cat{C}_1\) is a fibrant family of types over \(\cat{C}_0 \times \cat{C}_0\).
\end{assumption}

In \cref{sec:lifting-of-descent-data-operations:fibrant-sites}, we show that with this additional assumption, if \(A \in \Ty_{\PSh{\cat*{C}}}(\Gamma)\) is levelwise fibrant, then so is \(\PM{T}_{\cat{C}}A \in \Ty_{\PSh{\cat*{C}}}(\PM{T}_{\cat{C}}\Gamma)\).
We then consider the composite \(\PM{D}\PM{T}_{\cat{C}}\) of \(\PM{T}\) with the cobar pseudomorphism.
By \cref{prop:cobar-pseudomorphism:preserves-fibrancy}, this yields in particular a lex operation on \(\PSh{\cat*{C}}_{\Fib}\).
In \cref{sec:lifting-of-descent-data-operations:descent-data-operations}, we show that \(\LO{D}\LO{T}_{\cat{C}}\) is a descent data operation if \(\LO{T}\) is and characterize its modal types.
The inner model \(\PSh{\cat*{C}}_{\LO{D}\LO{T}}\) inherits properties of \(\cSet_{\LO{T}}\).
In particular, in \cref{sec:lifting-of-descent-data-operations:properties} we show that it inherits dependent choice.

\subsection{Levelwise strengthening of fibrancy}\label[subsection]{sec:lifting-of-descent-data-operations:fibrant-sites}

Internally to \(\cSet\), fibrancy of \(B \co \Ty*_{\PSh*{\cat{C}_0}}(\Delta)\) means we have \(\Fib*(\Delta_x, B_x)\) for all \(x \co \cat{C}_0\).
This does not imply \(\Fib*(\Sigma_{\cat{C}_0}\Delta, B)\)~\cite[Remark~5.9]{OrtonPitts2018}.
However, in the case where \(\Delta = \PM{U}\Gamma\) and \(B = \PM{U}A\), the following shows this implication under \cref{ass:fibrant-cat}.

\begin{theorem}[Internal to \(\cSet\)]\label[theorem]{prop:fibrant-sites:strengthen-fibrancy}
  Given a fibrancy structure\footnote{Since the cofibration used in the proof is empty, a transport structure actually suffices for this claim.} \(\Fib*(\cat{C}_0 \times \cat{C}_0, \cat{C}_1)\), naturally in \(\Gamma \co \PSh*{\cat{C}}\), for \(A \co \Ty*_{\PSh*{\cat{C}}}(\Gamma)\) there is a map
  \(
    \Fib*[\PSh*{\cat{C}_0}](\PM{U}\Gamma, \PM{U}A) \to \Fib*[\PSh*{\cat{C}_0}](\Sigma_{\cat{C}_0}\PM{U}\Gamma, \PM{U}A)
  \).
\end{theorem}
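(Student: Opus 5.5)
Given a family \(\gamma \co \I \to \Sigma_{\cat{C}_0}\PM{U}\Gamma\), write \(\gamma_i = (x_i, \rho_i)\) with \(x \co \I \to \cat{C}_0\) and \(\rho_i \co \Gamma_{x_i}\). Together with \(e\), \(\varphi\) and a partial family \(a \co \prod_{i\co\I}[i=e\vee\varphi] \to A(x_i,\rho_i)\), this is the data we must fill. The idea is to use the fibrancy of \(\cat{C}_1\) to move the whole problem over the \emph{single} object \(x_e\). There the given fibrancy structure on \(\PM{U}A\), which lives fiberwise over one object \(\Gamma_{x_e}\), applies. Afterwards we transport back using the presheaf action of \(A\).

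First, transport \(\id_{x_e} \co \cat{C}_1(x_e, x_e)\) along the line \(i \mapsto (x_e, x_i)\) in \(\cat{C}_0 \times \cat{C}_0\), using the fibrancy (indeed transport) structure on \(\cat{C}_1\) with the empty cofibration. This gives a line of morphisms \(f_i \co \cat{C}_1(x_e, x_i)\). In general \(f_e \neq \id\); to fix this, use the transport structure with cofibration \((i=e)\), or run a second correction transport. This is exactly why only a transport structure (regularity-free, cofibration empty) is needed, per the footnote. Symmetrically, transport backwards to get \(g_i \co \cat{C}_1(x_i, x_e)\). Actually only one direction is needed, as follows.

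Restricting along \(f_i\) yields a line \(\rho_i f_i \co \Gamma_{x_e}\) over the fixed object \(x_e\). It also gives maps \(A(x_i,\rho_i) \to A(x_e, \rho_i f_i)\), \(u \mapsto u f_i\), so \(a\) becomes a partial family \(a_i f_i\) in the family \(A_{x_e}\) over \(\Gamma_{x_e}\). The given structure \(\Fib*(\Gamma_{x_e}, A_{x_e})\) fills this to a line \(b_i \co A(x_e, \rho_i f_i)\). But we need an element of \(A(x_i,\rho_i)\), and there is no map back unless \(f_i\) is invertible. So I would instead reverse the roles: transport in the other direction to get \(h_i \co \cat{C}_1(x_i, x_{1-e})\) or similar. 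Even so, restriction only goes from the target object to the source object.

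The real approach is therefore to fill at the objects \(x_i\) themselves, using the fibrancy of \(A\) over \(\Gamma_{x_i}\), with a composition that varies in the object. I would use the standard reduction of filling to coercion plus homogeneous composition. Homogeneous composition in \(A(x_i,\rho_i)\) at a fixed \(i\) is available from the pointwise structure. For coercion we must move elements from \(A(x_j,\rho_j)\) to \(A(x_i,\rho_i)\). Doing so uses a line of morphisms \(f \co \cat{C}_1(x_i, x_j)\) obtained by transporting identities in \(\cat{C}_1\), restriction \(u \mapsto uf\) landing over \(\rho_j f\), and then fiberwise transport in \(A_{x_i}\) along a path from \(\rho_j f\) to \(\rho_i\) in \(\Gamma_{x_i}\). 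That path comes from the line \(k \mapsto \rho_k f^{(k)}\) with \(f^{(i)} = \id\).

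Concretely, \(\mathsf{coe}^{j\to i}(u)\) is built from the filler \(f^{(k)} \co \cat{C}_1(x_i, x_k)\) for \(k\) between \(j\) and \(i\), starting at \(\id\) at \(k=i\). These ``connection-like'' two-variable lines are where Assumption~\ref{ass:cubical-setup}'s connections are used. The main obstacle is making this coercion strictly satisfy the boundary condition \(\mathsf{coe}^{i\to i} = \id\) on the nose where required by the filling problem. I would handle this with the standard trick of doing a homogeneous composition afterwards that corrects on \(\varphi \vee i=e\), rather than demanding regularity. Naturality in \(\Gamma\) is then routine since everything is defined by restriction and the given structures.
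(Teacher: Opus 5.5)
\textbf{There is a genuine gap: your final route (coercion plus homogeneous composition) leaves its decisive step unproved.}

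In that decomposition, the result must equal \(a_{\overline{e}}\) on \(\varphi\) \emph{strictly}. This forces \(\mathsf{coe}^{\overline{e}\to\overline{e}}(v) = v\) on the nose, and your coercion does not satisfy it. Take the degenerate case. Transporting \(\id\) through the now constant family \(\cat{C}_1(x_{\overline{e}},x_{\overline{e}})\) gives a line that is \(\id\) only at its starting point. So the morphism you restrict along at the other end need not be \(\id\), and the subsequent fiberwise transport in \(A_{x_{\overline{e}}}\) is not the identity either, since there is no regularity. You flag this as the main obstacle and defer it to a ``standard trick'' of a correcting homogeneous composition. That trick needs a path from \(\mathsf{coe}^{\overline{e}\to\overline{e}}(v)\) to \(v\), uniform in the data, which you never construct. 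Building it requires a fiberwise composition in \(A_{x_{\overline{e}}}\) whose tube consists of the restricted elements \(v f^{(k)}\). That is exactly the idea that makes the whole detour unnecessary.

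\textbf{The missing idea is the choice of base object.} You based the problem at the \emph{source} \(x_e\), then reversed to \(h_i \co \cat{C}_1(x_i, x_{1-e})\), which is the wrong reversal. Instead, set \(\overline{e} = 1-e\) and fill \(\id_{x_{\overline{e}}}\) from \(\overline{e}\) along \(i \mapsto (x_{\overline{e}}, x_i)\) with empty cofibration. This gives \(f_i \co \cat{C}_1(x_{\overline{e}}, x_i)\) with \(f_{\overline{e}} = \id\) strictly. A filler always agrees with its cap at the starting endpoint, so your remark that ``in general \(f_e \neq \id\)'' is mistaken. The footnote only says that this fill in \(\cat{C}_1\) has \(\varphi = \bot\).

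Now restrict the \emph{entire} problem along \(f\): the line \(\rho_i f_i\) in \(\Gamma_{x_{\overline{e}}}\), and the partial element \(a_i f_i\) on \(i = e \vee \varphi\), tube included. Apply the given structure \(\alpha_{x_{\overline{e}}}\) and read off the value at \(\overline{e}\). It lies in \(A(x_{\overline{e}}, \rho_{\overline{e}} f_{\overline{e}}) = A(x_{\overline{e}}, \rho_{\overline{e}})\) and equals \(a_{\overline{e}} f_{\overline{e}} = a_{\overline{e}}\) on \(\varphi\).

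No map back to \(A(x_i,\rho_i)\) for intermediate \(i\) is needed. Thanks to connections, a composition structure (output only at \(\overline{e}\)) already yields a fibrancy structure; that is where connections enter, not in two-variable lines. Your coercion is exactly this construction with \(\varphi = \bot\). Feeding the restricted tube into the same fiberwise composition removes the strictness issue entirely, and naturality in \(\Gamma\) reduces to \(\sigma(\rho f) = (\sigma\rho) f\).
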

\begin{proof}
  Let \(\kappa \co \Fib*(\cat{C}_0 \times \cat{C}_0, \cat{C}_1)\) and \(\alpha \co \Fib*_{\PSh*{\cat{C}_0}}(\PM{U}\Gamma, \PM{U}A)\).
  Since our interval has connections, it suffices to give a composition structure~\cite[Section~4.4]{CohenCoquandHuberMortberg2015}.
  Consider a composition problem \(e \co \set{0,1}\), \(\varphi \co \Cof\), \(x \co \I \to \cat{C}_0\), \(\gamma \co \prod_{i \co \I} \Gamma_{x_i}\), and \(a \co \prod_{i \co \I} [i = e \vee \varphi] \to A\gamma_i\).
  Set \(\overline{e} \coloneqq 1 - e\).
  Our goal is to construct an element of \(A\gamma_{\overline{e}}[\varphi \mapsto a_{\overline{e}}]\).
  Define \(f \coloneqq \kappa(\overline{e}, \bot, \pair{x_{\overline{e}}, x}, \id_{x_{\overline{e}}}) \co \prod_{i \co \I} \cat{C}_1(x_{\overline{e}}, x_i)\), which satisfies by construction \(f_{\overline{e}} = \kappa(\overline{e}, \bot, \pair{x_{\overline{e}}, x}, \id_{x_{\overline{e}}}, \overline{e}) = \id_{x_{\overline{e}}}\).
  We have \(f_i \co \cat{C}_1(x_{\overline{e}}, x_i)\).
  Using the restriction operations for \(\Gamma\) and \(A\), define
  \[
    (\gamma f)_i \coloneqq \gamma_i f_i \co \I \to \Gamma_{x_{\overline{e}}}, \quad
    (a f)_i \coloneqq a_if_i \co \prod_{i \co \I} [i = e \vee \varphi] \to A\paren[\big]{x_{\overline{e}}, (\gamma f)_i}[i = e \vee \varphi \mapsto a_if_i].
  \]
  These satisfy \((\gamma f)_{\overline{e}} = \gamma_{\overline{e}}\id_{x_{\overline{e}}} = \gamma_{\overline{e}}\) and \((af)_{\overline{e}} = a_{\overline{e}} \id_{x_{\overline{e}}} = a_{\overline{e}}\) by construction.
  Hence, we can define a composition operation \(\overline{\alpha}\) by
  \[
    \overline{\alpha}(e, \varphi, \pair{x, \gamma}, a) \coloneqq \alpha_{x_{\overline{e}}}(e, \varphi, \gamma f, a f, \overline{e}) \co A\gamma_{\overline{e}}[\varphi \mapsto a_{\overline{e}}].
  \]

  We claim that this construction is natural in \(\Gamma\).
  Consider \(\sigma \co \Delta \to \Gamma\) in \(\PSh*{\cat{C}}\) and a composition problem \(\varphi \co \Phi\), \(x \co \I \to \cat{C}_0\), \(\delta \co \prod_{i \co \I} \Delta_{x_i}\), \(a \co \prod_{i \co \I} [i = e \vee \varphi] \to A\sigma_{x_i}\delta_i\).
  We view \(\sigma\) as a map \(\cat{C}_0.\Delta \to \cat{C}_0.\Gamma, (x, \delta) \mapsto (x, \sigma_x\delta)\).
  For \(f\co \prod_{i \co \I} \cat{C}_1(x_{\overline{e}}, x_i)\) defined as above, we have
  \[
    \sigma_{x_{\overline{e}}}(\delta f)
    = \lambda_i.\sigma_{x_{\overline{e}}}(\delta f)_i
    = \lambda_i.\sigma_{x_{\overline{e}}}(\delta_if_i)
    = \lambda_i.(\sigma_{x_i}\delta_i)f_i
    = (\lambda_i.\sigma_{x_i} \delta_i) f.
  \]
  Writing \((\sigma \delta)_i \coloneqq \sigma_{x_i} \delta_i\) we thus have \(\sigma(\delta f) = (\sigma \delta) f\), and therefore
  \begin{multline*}
    \overline{\alpha\sigma}(e, \varphi, \pair{x, \delta}, a)
    = \alpha\sigma_{x_{\overline{e}}}(e, \varphi, \delta f, a f, \overline{e})
    = \alpha_{x_{\overline{e}}}\sigma_{x_{\overline{e}}}(e, \varphi, \delta f, a f, \overline{e})
    = \alpha_{x_{\overline{e}}}(e, \varphi, \sigma(\delta f), a f, \overline{e}) \\
    = \alpha_{x_{\overline{e}}}(e, \varphi, (\sigma \delta)f, a f, \overline{e})
    = \overline{\alpha}(e, \varphi, \pair{x, \sigma \delta}, a)
    = \overline{\alpha}(e, \varphi, \sigma\pair{x, \delta}, a)
    = \overline{\alpha}\sigma(e, \varphi, \pair{x, \delta}, a). \qedhere
  \end{multline*}
\end{proof}

\begin{remark}\label[remark]{prop:lifting-of-lex-operations-over-fibrant-sites:weaken-fibrancy}
  By \cref{prop:fibrant-sites:strengthen-fibrancy}, we can see the pseudomorphism \(U\) as factoring over a model on \(\PSh{\cat*{C}_0}\) where we take the types in context \(\Gamma\) to be \(\Ty_{\cSet}^{\Fib}(\cat{C}_0.\Gamma)\).
  We write \(\cSet_{\Fib} \comma \cat{C}_0\) for this model because it is, up to equivalence of the category of contexts, the slice model of \(\cSet_{\Fib}\) over \(\cat{C}_0\).
  Note that every fibrant type of \(\cSet_{\Fib} \comma \cat{C}_0\) is also one of \(\PSh{\cat*{C}_0}_{\Fib}\).
\end{remark}

Combining \cref{prop:fibrant-sites:strengthen-fibrancy} with \cref{ass:base-model-lex-operation}, stating that \(\LO{T}\) lifts to fibrant types on \(\cSet\), we obtain that \(\PM{T}_{\cat{C}}\) lifts to levelwise fibrant types.

\begin{lemma}\label[lemma]{prop:lifting-of-lex-operations-over-fibrant-sites:lex-operation-on-presheaves-preserves-levelwise-fibrancy}
  Naturally in \(\Gamma \in \PSh{\cat*{C}}\), given \(A \in \Ty(\Gamma)\), we have a function from 
  \(
    \Fib[\PSh{\cat*{C}_0}](\PM{U}\Gamma, \PM{U}A)
  \)
  to
  \(
    \Fib[\PSh{\cat*{C}_0}](\PM{U}(\PM{T}_{\cat{C}}\Gamma), \PM{U}({\PM{T}_{\cat{C}}} A))
  \).
\end{lemma}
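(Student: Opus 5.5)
We work internally to \(\cSet\) and let \(\Sigma_{\cat{C}_0}\PM{U}\Gamma\) be the dependent sum of the family \(\PM{U}\Gamma\) over \(\cat{C}_0\), viewed as a single type. First, \cref{prop:fibrant-sites:strengthen-fibrancy} (which uses \cref{ass:fibrant-cat}) turns the given \(\alpha \co \Fib*[\PSh*{\cat{C}_0}](\PM{U}\Gamma, \PM{U}A)\) into a fibrancy structure \(\overline{\alpha} \co \Fib*(\Sigma_{\cat{C}_0}\PM{U}\Gamma, \PM{U}A)\). Here \(\PM{U}A\) is regarded as a single family over \(\Sigma_{\cat{C}_0}\PM{U}\Gamma\), that is, as a type of the slice model \(\cSet_{\Fib}\comma\cat{C}_0\) of \cref{prop:lifting-of-lex-operations-over-fibrant-sites:weaken-fibrancy}. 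Externally, \(\Sigma_{\cat{C}_0}\PM{U}\Gamma\) is a context of \(\cSet\), and \(\PM{U}A\) is a type over it with fibrancy structure \(\overline\alpha\).

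Next, apply \cref{ass:base-model-lex-operation}: \(\LexOp\) lifts to \(\cSet_{\Fib}\). It therefore gives a fibrancy structure \(\LexOp\overline{\alpha}\) on \(\LexOp(\PM{U}A)\), and this structure is natural in the context. The family is a type over \(\Sigma_{\cat{C}_0}\PM{U}\Gamma\). We must relate it to \(\PM{U}(\PM{T}_{\cat{C}}A)\), which by \cref{prop:lifting-of-lex-operations-over-fibrant-sites:lex-operation-on-presheaves-preserved} equals \(\PM{T}_{\cat{C}_0}(\PM{U}A)\).

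By \cref{prop:lifting-of-lex-operations-over-fibrant-sites:lex-operation-on-presheaves-exists}, \((\PM{T}_{\cat{C}}A)(x,\gamma) = (\LexOp* A)(x,\gamma)\) for \(\gamma\co\LexOp\Gamma_x\). This is the lex operation \(\LexOp*\) applied to \(A\) in context \(\Gamma_x\), that is, \((\PM{T}A_x)\vartheta_{\Gamma_x}\), now read over \(\PM{T}\Gamma_x\) rather than reindexed along \(\vartheta\). So the family we need is \(x\mapsto \PM{T}(A_x)\) over \(\PM{T}(\Gamma_x)\), as \(x\) varies. The natural way to handle it is through the pseudomorphism \(\PM{T}\) applied to the context \(\Sigma_{\cat{C}_0}\PM{U}\Gamma\) together with the projection to \(\cat{C}_0\). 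Use the comparison \(\PM{T}(\Sigma_{\cat{C}_0}\PM{U}\Gamma) \to \Sigma_{x}\PM{T}(\Gamma_x)\). Alternatively, and more cleanly, work with the pulled-back data along each fibre inclusion \(\Gamma_x \to \Sigma_{\cat{C}_0}\PM{U}\Gamma\). Here \(\PM{T}\) is a pseudomorphism, and it is assumed to send fibrant types to fibrant types naturally in the context, so it preserves this reindexing. Hence, for each \(x\), \(\PM{T}(\overline\alpha\) restricted to \(\Gamma_x)\) is a fibrancy structure on \(\PM{T}(A_x)\) over \(\PM{T}(\Gamma_x)\). These structures are given uniformly in \(x\), which is exactly an element of \(\Fib*[\PSh*{\cat{C}_0}](\PM{U}\PM{T}_{\cat{C}}\Gamma, \PM{U}\PM{T}_{\cat{C}}A)\).

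The main obstacle is the middle step: \(\LexOp\) acting on the whole context \(\Sigma_{\cat{C}_0}\PM{U}\Gamma\) need not agree with acting fibrewise on each \(\Gamma_x\). The point of strengthening fibrancy in \cref{prop:fibrant-sites:strengthen-fibrancy} must be that the external lifting of \(\LexOp\) to \(\cSet_{\Fib}\) only acts on honest external contexts. The fibrewise family over \(x\co\cat{C}_0\) has to be presented as reindexing from one external context, presumably \(\cat{C}_0.\Gamma\) itself with the \(\PM{T}\)-action on \(\El\) transported along \(\pair{x,\gamma}\), and \(\overline\alpha\) is what makes it a fibrant type there. I would first try to express the construction entirely externally, using naturality of the \(\Fib\)-lifting of \(\LexOp\) under substitution along the maps \(\yo(I)\times\ldots\) that classify \(x\). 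Finally, naturality in \(\Gamma\) follows from the naturality in \cref{prop:fibrant-sites:strengthen-fibrancy} and of the lift of \(\LexOp\).
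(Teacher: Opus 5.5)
Your first step (strengthening \(\alpha\) to \(\overline{\alpha}\) over \(\cat{C}_0.\PM{U}\Gamma\) via \cref{prop:fibrant-sites:strengthen-fibrancy}) is correct. So is your diagnosis that the external lift of \(\PM{T}\) only acts on types in honest external contexts. However, the step that actually needs an argument is left open, and your concrete suggestions for it do not work.

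Applying the lift to \(\PM{U}A\) as a type in context \(\cat{C}_0.\PM{U}\Gamma\) gives \(\LO{T}(\PM{U}A) = (\PM{T}\PM{U}A)\vartheta_{\cat{C}_0.\PM{U}\Gamma}\). This lives over \(\cat{C}_0.\PM{U}\Gamma\), whereas \(\PM{U}(\PM{T}_{\cat{C}}A)\) lives over \(\cat{C}_0.\LO{T}(\PM{U}\Gamma)\). To transport fibrancy of \(\PM{T}(\PM{U}A)\) by reindexing, you need a map \emph{into} \(\PM{T}(\cat{C}_0.\PM{U}\Gamma)\). Your comparison \(\PM{T}(\Sigma_{\cat{C}_0}\PM{U}\Gamma)\to\Sigma_x\PM{T}(\Gamma_x)\) points the other way; a map in that direction would, for \(\Gamma = 1\), amount to a map \(\PM{T}\cat{C}_0 \to \cat{C}_0\). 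Your preferred alternative is to restrict \(\overline{\alpha}\) along the fibre inclusions \(\Gamma_x \to \Sigma_{\cat{C}_0}\PM{U}\Gamma\) for internal \(x\). That just returns the fibrewise structure \(\alpha\) you started with, to which the external lift cannot be applied. It discards exactly what \cref{prop:fibrant-sites:strengthen-fibrancy} provided.

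The missing idea is the action of the lex operation on families, taken in the external context \(\cat{C}_0\). Regard \(\PM{U}\Gamma \in \Ty(\cat{C}_0)\) and \(\PM{U}A \in \Ty(\cat{C}_0.\PM{U}\Gamma)\), and set \(\widetilde{\LO{T}}(\PM{U}A) = (\PM{T}\PM{U}A)({\downarrow}\circ\vartheta^+) \in \Ty(\cat{C}_0.\LO{T}(\PM{U}\Gamma))\). Here \({\downarrow}\circ\vartheta^+ \colon \cat{C}_0.\LO{T}(\PM{U}\Gamma) \to \PM{T}\cat{C}_0.\PM{T}(\PM{U}\Gamma) \cong \PM{T}(\cat{C}_0.\PM{U}\Gamma)\) is the canonical comparison. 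By \cref{ass:base-model-lex-operation}, \(\overline{\alpha}\) yields a fibrancy structure on \(\PM{T}(\PM{U}A)\), and hence, by reindexing, one on \(\widetilde{\LO{T}}(\PM{U}A)\), naturally in \(\Gamma\).

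Your worry that acting on the total context need not agree with acting fibrewise then dissolves. A substitution-stable lex operation internalizes as an operation on the universe. So the internal fibrewise definition \(x \mapsto \LO{T}\Gamma_x\) of \(\PM{T}_{\cat{C}_0}\) \emph{is}, externally, exactly this family action in context \(\cat{C}_0\). You therefore have a fibrancy structure on \(\PM{T}_{\cat{C}_0}(\PM{U}A)\) over \(\cat{C}_0.\PM{T}_{\cat{C}_0}(\PM{U}\Gamma)\) in \(\cSet\). To finish, weaken this to levelwise fibrancy (\cref{prop:lifting-of-lex-operations-over-fibrant-sites:weaken-fibrancy}) and use \(\PM{U}\PM{T}_{\cat{C}} = \PM{T}_{\cat{C}_0}\PM{U}\) (\cref{prop:lifting-of-lex-operations-over-fibrant-sites:lex-operation-on-presheaves-preserved}). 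This three-step chain (strengthen, apply the lifted lex operation, weaken) is the paper's argument; your last paragraph points towards it but stops short.
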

\begin{proof}
  We have the following chain of natural logical implications:
  \begin{align*}
    \Fib[\PSh{\cat*{C}_0}](\PM{U}\Gamma, \PM{U}A)
    &\longrightarrow \Fib[\CubicalSet](\cat{C}_0.\PM{U}\Gamma, \PM{U}A) \tag{\cref{prop:fibrant-sites:strengthen-fibrancy}} \\
    &\longrightarrow \Fib[\CubicalSet](\cat{C}_0.\LexOp(\PM{U}\Gamma), \LexOp* (\PM{U} A)) \tag{\cref{ass:base-model-lex-operation}} \\
    &\longrightarrow \Fib[\PSh{\cat*{C}_0}](\LexOp[\cat{C}_0](\PM{U}\Gamma), \LexOp[\cat{C}_0] (\PM{U}A)) \tag{\cref{prop:lifting-of-lex-operations-over-fibrant-sites:lex-operation-on-presheaves-exists}, \cref{prop:lifting-of-lex-operations-over-fibrant-sites:weaken-fibrancy}}
  \end{align*}
  We conclude using \cref{prop:lifting-of-lex-operations-over-fibrant-sites:lex-operation-on-presheaves-preserved}.
\end{proof}

Similar to preservation of fibrancy, the descent data operation structure on \(\LexOp\) only lifts levelwise.
Hence, when lifting \(\PM{T}\) to presheaves, we only consider the composite with \(\PM{D}\).

\begin{lemma}\label[lemma]{prop:lifting-of-lex-operations-over-fibrant-sites:lex-operation-on-presheaves-descent-data-operation-lifts-levelwise}
  If \(\LexOp\) is a descent data operation on \(\cSet_{\Fib}\), then naturally in \(\Gamma \in \PSh{{\cat*{C}}}\), the maps \(\LexOp[\cat{C}] \vartheta_A, \vartheta_{\LexOp[\cat{C}] A} \co \LexOp[\cat{C}] A \to \LexOp_{\cat{C}}^2 A\) are levelwise equivalences for all \(A \in \Ty_{\PSh{{\cat*{C}}}}^{\Fib_0}(\Gamma)\).
\end{lemma}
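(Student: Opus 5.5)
The plan is to reduce the claim along \(\PM{U}\) to a statement about the slice model \(\cSet_{\Fib} \comma \cat{C}_0\), where it follows from the descent data operation structure of \(\LexOp\) on \(\cSet_{\Fib}\) applied to the family \(\PM{U}A\) over \(\cat{C}_0.\PM{U}\Gamma\). Concretely, "levelwise equivalence" means that \(\PM{U}(\LexOp[\cat{C}]\vartheta_A)\) and \(\PM{U}(\vartheta_{\LexOp[\cat{C}]A})\) are equivalences in \(\PSh{\cat*{C}_0}\). By \cref{prop:lifting-of-lex-operations-over-fibrant-sites:lex-operation-on-presheaves-preserved}, these maps are strictly \(\LexOp[\cat{C}_0]\vartheta_{\PM{U}A}\) and \(\vartheta_{\LexOp[\cat{C}_0]\PM{U}A}\). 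It therefore suffices to produce, naturally in \(\Gamma\), elements of \(\IsEquiv\) for these two maps over \(\PM{U}\Gamma\), given a levelwise fibrancy structure on \(A\).

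First, I apply \cref{prop:fibrant-sites:strengthen-fibrancy} (using \cref{ass:fibrant-cat}) to turn the levelwise fibrancy of \(A\) into a fibrancy structure on \(\PM{U}A\) as a type over the total context \(\cat{C}_0.\PM{U}\Gamma\) in \(\cSet\). This makes \(\PM{U}A\) a type of \(\cSet_{\Fib}\). Since \(\LexOp\) is a descent data operation there, the maps \(\LexOp\vartheta_{\PM{U}A}\) and \(\vartheta_{\LexOp\PM{U}A}\) are equivalences in context \(\cat{C}_0.\PM{U}\Gamma\), fibrancy of \(\LexOp\PM{U}A\) and \(\LexOp^2\PM{U}A\) being supplied by \cref{ass:base-model-lex-operation}.

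Next, I unfold \cref{prop:lifting-of-lex-operations-over-fibrant-sites:lex-operation-on-presheaves-exists} to check that \(\LexOp[\cat{C}_0]\) on \(\PSh{\cat*{C}_0}\) is, on the nose, \(\LexOp\) applied in the slice over \(\cat{C}_0\). Its action on the family \(\PM{U}A\) over \(\PM{U}\Gamma\) is \(\LexOp*\) applied to \(\PM{U}A\) over \(\cat{C}_0.\PM{U}\Gamma\), and the units agree. Thus the two maps above are literally the corresponding maps of the lex operation on \(\cSet\), read in context \(\cat{C}_0.\PM{U}\Gamma\) (or after reindexing along \(\vartheta\) into \(\cat{C}_0.\LexOp\PM{U}\Gamma\)).

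Finally, I transport the equivalence data back along the factorization of \cref{prop:lifting-of-lex-operations-over-fibrant-sites:weaken-fibrancy}. An element of \(\IsEquiv\) in \(\cSet\) over \(\cat{C}_0.\PM{U}\Gamma\) is the same as an element over \(\PM{U}\Gamma\) in \(\PSh{\cat*{C}_0}\), because paths and \(\Pi\)/\(\Sigma\)-types in the slice are computed fiberwise. Naturality in \(\Gamma\) holds because each step is natural: the strengthening in \cref{prop:fibrant-sites:strengthen-fibrancy} was shown natural, and the descent data structure of \(\LexOp\) is natural in the context. The main obstacle I expect is bookkeeping the strict identification in the second step, particularly for \(\vartheta_{\LexOp[\cat{C}]A}\). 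That map lives over \(\LexOp[\cat{C}]\Gamma\) rather than \(\Gamma\), so I must use \cref{prop:lifting-of-lex-operations-over-fibrant-sites:lex-operation-on-presheaves-preserves-levelwise-fibrancy} to know that \(\LexOp[\cat{C}]A\) is again levelwise fibrant. I must also check that \(\cat{C}_0.\LexOp(\PM{U}\Gamma)\) agrees with the context on which \(\LexOp\) acts in the base.
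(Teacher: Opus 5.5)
Your proposal is correct and follows essentially the same route as the paper: strengthen the levelwise fibrancy of \(\PM{U}A\) to fibrancy over \(\cat{C}_0.\PM{U}\Gamma\) via \cref{prop:fibrant-sites:strengthen-fibrancy}, apply the descent data structure of \(\LexOp\) in \(\cSet_{\Fib} \comma \cat{C}_0\), and note that the resulting maps are by construction \(\PM{U}(\LexOp[\cat{C}]\vartheta_A)\) and \(\PM{U}(\vartheta_{\LexOp[\cat{C}]A})\). The obstacle you anticipate is milder than you expect. Both maps are lex-operation maps, so they already live over \(\Gamma\), because the reindexing along \(\vartheta\) is built into the lex operation. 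Moreover, the fibrancy of \(\LexOp\PM{U}A\) comes from \cref{ass:base-model-lex-operation}, so you do not need a separate appeal to \cref{prop:lifting-of-lex-operations-over-fibrant-sites:lex-operation-on-presheaves-preserves-levelwise-fibrancy}.
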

\begin{proof}
  Since \(\cat{C}\) is fibrant, we have \(\PM{U}A \in \Ty_{\CubicalSet}^{\Fib}(\cat{C}_0.\Gamma)\) by \cref{prop:fibrant-sites:strengthen-fibrancy}. 
  Because \(\LexOp\) is a descent data operation, the maps \(\LexOp\vartheta_{\PM{U}A}, \vartheta_{\LexOp \PM{U}A} \co \LexOp A \to \LexOp^2 A\) are equivalences in the sense of \(\CubicalSet \comma \cat{C}_0\).
  By construction, these coincide with \(\PM{U}\paren{\LexOp[\cat{C}] \vartheta_A}, \PM{U}\paren{\vartheta_{\LexOp[\cat{C}] A}} \co \PM{U}\paren{\LexOp[\cat{C}] A} \to \PM{U}\paren{\LexOp_{\cat{C}}^2 A}\).
\end{proof}

Another useful fact is that \(\LexOp\) preserves levelwise equivalences.

\begin{lemma}\label[lemma]{prop:lifting-of-lex-operations-over-fibrant-sites:lex-operation-on-presheaves-preserves-levelwise-equivalences-between-levelwise-fibrant-types}
  Naturally in \(\Gamma \in \PSh{\cat*{C}}\), given \(A, B \in \Ty^{\Fib_0}_{\PSh{\cat*{C}}}(\Gamma)\) and \(f \co A \to B\), there is a function
  \(
    \El_{\PSh{\cat*{C}_0}}(\PM{U}\Gamma, \IsEquiv(\PM{U}f))
    \to
    \El_{\PSh{\cat*{C}_0}}(\PM{U}\Gamma, \IsEquiv(\PM{U}(\LexOp[\cat{C}]f)))
  \).
\end{lemma}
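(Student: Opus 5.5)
The plan is to reduce the claim to the base model \(\cSet_{\Fib}\) over \(\cat{C}_0\), in the same way as the proof of \cref{prop:lifting-of-lex-operations-over-fibrant-sites:lex-operation-on-presheaves-descent-data-operation-lifts-levelwise}.

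\textbf{Step 1: move to the slice model.} Since \(A\) and \(B\) are levelwise fibrant, \cref{prop:fibrant-sites:strengthen-fibrancy} (using \cref{ass:fibrant-cat}) upgrades the levelwise fibrancy structures on \(\PM{U}A\) and \(\PM{U}B\). We obtain \(\PM{U}A, \PM{U}B \in \Ty^{\Fib}_{\cSet}(\cat{C}_0.\PM{U}\Gamma)\), naturally in \(\Gamma\). The map \(\PM{U}f\) is a map between these types over \(\cat{C}_0.\PM{U}\Gamma\).

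\textbf{Step 2: transfer the equivalence hypothesis.} The hypothesis is an element of \(\El_{\PSh{\cat*{C}_0}}(\PM{U}\Gamma, \IsEquiv(\PM{U}f))\). Levelwise, \(\IsEquiv\) is built from \(\Sigma\), \(\Pi\) and path types computed fiberwise over \(x \co \cat{C}_0\). So this element is the same data as an element of \(\El_{\cSet}(\cat{C}_0.\PM{U}\Gamma, \IsEquiv(\PM{U}f))\), i.e.\ a proof that \(\PM{U}f\) is an equivalence in \(\cSet_{\Fib}\comma\cat{C}_0\).

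I would use the bi-invertible formulation of \(\IsEquiv\) (a section and a retraction, each with a homotopy). This formulation only involves function types and path types. The fibrancy structures then matter only for defining \(\Pi\)/path types as fibrant types, not for the underlying data, so the correspondence is a literal bijection natural in \(\Gamma\).

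\textbf{Step 3: apply \(\LexOp\).} By \cref{ass:base-model-lex-operation}, \(\LexOp\) is a lex operation on \(\cSet_{\Fib}\). Lex operations preserve equivalences between fibrant types: the functorial action of \(\LO{T}\) on maps preserves composition and identities, and it sends paths (elements of path types) to paths because a pointed pseudomorphism preserves path types up to the comparison isomorphisms. Hence it sends section, retraction and homotopies of \(\PM{U}f\) to those of \(\LexOp(\PM{U}f)\). This construction is natural in the context.

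If one prefers, this step can also be seen from left exactness, as the preservation of contractibility of fibers. The direct transport of bi-invertibility data is cleaner, though.

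\textbf{Step 4: return to presheaves.} By \cref{prop:lifting-of-lex-operations-over-fibrant-sites:lex-operation-on-presheaves-exists} and \cref{prop:lifting-of-lex-operations-over-fibrant-sites:lex-operation-on-presheaves-preserved}, \(\LexOp(\PM{U}f)\) over \(\cat{C}_0.\PM{U}\Gamma\) coincides with \(\PM{U}(\LexOp[\cat{C}]f)\). Its domain and codomain are levelwise fibrant by \cref{prop:lifting-of-lex-operations-over-fibrant-sites:lex-operation-on-presheaves-preserves-levelwise-fibrancy}. The inverse of the correspondence from Step 2 then gives an element of \(\El_{\PSh{\cat*{C}_0}}(\PM{U}\Gamma, \IsEquiv(\PM{U}(\LexOp[\cat{C}]f)))\).

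\textbf{Expected obstacle.} The main difficulty is the bookkeeping in Step 4. One has to check that the context \(\LexOp[\cat{C}_0](\PM{U}\Gamma)\), on which \(\LexOp[\cat{C}]f\) lives, is correctly matched with the precomposition by the unit \(\vartheta\). The equivalence produced in the slice lives over \(\cat{C}_0.\LexOp(\PM{U}\Gamma)\) and must be restricted along \(\vartheta\) to \(\cat{C}_0.\PM{U}\Gamma\). Naturality of all constructions in \(\Gamma\) then follows because each step is natural.
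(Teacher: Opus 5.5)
This is essentially the paper's proof. You strengthen fibrancy via \cref{prop:fibrant-sites:strengthen-fibrancy} to regard \(\PM{U}f\) as an equivalence of fibrant types in \(\cSet_{\Fib} \comma \cat{C}_0\), use that the lex operation \(\LexOp\) preserves equivalences, and transfer back along \(\PM{U}(\LexOp[\cat{C}]f) = \LexOp[\cat{C}_0](\PM{U}f)\) from \cref{prop:lifting-of-lex-operations-over-fibrant-sites:lex-operation-on-presheaves-preserved}.

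The one inaccuracy is your justification in Step~3. A pointed pseudomorphism need not preserve path types up to isomorphism, since these are \(\I\)-indexed products, which \(\LexOp\) need not preserve. What you need, that \(\LexOp\) sends homotopies to homotopies, still holds in either of two ways. One is to apply \(\LexOp\) to a homotopy \(\I \times X \to Y\), use the comparison \(\LexOp(\I \times X) \cong \LexOp \I \times \LexOp X\), and restrict along \(\vartheta_{\I}\). The other is path induction and left exactness, which is the paper's route via \cite{CoquandRuchSattler2021} and is also your stated fallback.
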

\begin{proof}
  Using \cref{prop:fibrant-sites:strengthen-fibrancy}, we view \(\PM{U}f \co \PM{U}A \to \PM{U}B\) as an equivalence between fibrant types in \(\CubicalSet_{\Fib} \comma \cat{C}_0\) (\ie, in context \(\cat{C}_0.\Gamma\) of \(\cSet_{\Fib}\)).
  Hence, the map \(\LexOp (\PM{U} f) \co \LexOp (\PM{U} A) \to \LexOp (\PM{U} A)\) is an equivalence since every lex operation preserves equivalences by path induction (see~\cite[Section 2.2, Theorem]{CoquandRuchSattler2021}).
  This map corresponds to \(\LexOp[\cat{C}_0](\PM{U} f) \co \LexOp[\cat{C}_0](\PM{U} A) \to \LexOp[\cat{C}_0](\PM{U} B)\) in \(\PSh{\cat{C}_0} \comma \PM{U}\Gamma\).
  By \cref{prop:lifting-of-lex-operations-over-fibrant-sites:lex-operation-on-presheaves-preserved}, \(\PM{U}(\LexOp[\cat{C}]f) = \LexOp[\cat{C}_0](\PM{U} f)\), so the claim follows.
\end{proof}

\subsection{Lifting of descent data operations}\label[subsection]{sec:lifting-of-descent-data-operations:descent-data-operations}

Internally to \(\CubicalSet\), we can construct the composite pseudomorphism \(\PM{D} \circ \PM{T}_{\cat{C}}\).
We will omit the subscript \(\cat{C}\) if it is clear from context, such as when we compose with \(\PM{D}\).
The unit for this composite is given by
\(
  \tau_{\PM{T} \Gamma} \circ \vartheta_\Gamma = \PM{D}\vartheta_\Gamma \circ \tau_\Gamma \co \Gamma \to \PM{DT} \Gamma.
\)
This yields a lex operation \(\LO{DT}\) on \(\PSh{\cat*{C}}\) with unit \(\pi_{A} \co A \to \LO{DT}A\) given by \(\tau_{\LO{T} A} \circ \vartheta_A = \LO{D}\vartheta_A \circ \tau_A\).

\begin{assumption}\label{ass:base-model-descent-data-operation}
  The lex operation \(\LexOp\) on \(\cSet\) is a descent data operation.
\end{assumption}

Under the above assumption, we show that \(\LO{D}\LexOp\) is a descent data operation on \(\PSh{\cat*{C}}_{\Fib}\).
In particular, the composite lex operation already lifts to \(\PSh{\cat*{C}}_{\Fib}\) by \cref{prop:lifting-of-lex-operations-over-fibrant-sites:lex-operation-on-presheaves-preserves-levelwise-fibrancy,prop:cobar-pseudomorphism:preserves-fibrancy}.

\begin{proposition}\label[proposition]{prop:lifting-of-lex-operations-over-fibrant-sites:composite-lex-operation-exists}
  Naturally in \(\Gamma \in \PSh{\cat*{C}}\), there is for all \(A \in \Ty_{\PSh{\cat*{C}}}(\Gamma)\) a function from
  \(
    \Fib[\PSh{\cat*{C}_0}](\PM{U}\Gamma, \PM{U}A)
  \)
  to
  \(\
  \Fib[\PSh{\cat*{C}}]({\PM{DT} }\Gamma, {\PM{DT} } A).
  \)
\end{proposition}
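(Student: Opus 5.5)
The plan is to compose the two fibrancy-transfer results already available. Given a levelwise fibrancy structure \(\alpha \in \Fib[\PSh{\cat*{C}_0}](\PM{U}\Gamma, \PM{U}A)\), I first apply \cref{prop:lifting-of-lex-operations-over-fibrant-sites:lex-operation-on-presheaves-preserves-levelwise-fibrancy}. This produces a levelwise fibrancy structure in \(\Fib[\PSh{\cat*{C}_0}](\PM{U}(\PM{T}_{\cat{C}}\Gamma), \PM{U}(\PM{T}_{\cat{C}}A))\). That step uses \cref{ass:fibrant-cat} through \cref{prop:fibrant-sites:strengthen-fibrancy}, and uses \cref{ass:base-model-lex-operation} for the fibrancy of \(\LexOp\) on \(\cSet_{\Fib}\).

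Next I feed this structure into \cref{prop:cobar-pseudomorphism:preserves-fibrancy}, instantiated at the context \(\PM{T}_{\cat{C}}\Gamma\) and the type \(\PM{T}_{\cat{C}}A \in \Ty(\PM{T}_{\cat{C}}\Gamma)\). That lemma is stated internally to \(\cSet\) and is natural in the context. Externalizing in the empty context gives a map
\[
\Fib[\PSh{\cat*{C}_0}](\PM{U}(\PM{T}_{\cat{C}}\Gamma), \PM{U}(\PM{T}_{\cat{C}}A)) \longrightarrow \Fib[\PSh{\cat*{C}}](\PM{D}\PM{T}_{\cat{C}}\Gamma, \PM{D}\PM{T}_{\cat{C}}A).
\]
Since \(\PM{DT}\) is by definition the composite pseudomorphism \(\PM{D}\circ\PM{T}_{\cat{C}}\), the target is literally \(\Fib[\PSh{\cat*{C}}](\PM{DT}\Gamma, \PM{DT}A)\). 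Composing the two maps therefore yields the desired function.

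The remaining obligation is naturality in \(\Gamma\), which is where some care is needed. For \(\sigma \co \Delta \to \Gamma\), the first map is natural in \(\Gamma\) by \cref{prop:lifting-of-lex-operations-over-fibrant-sites:lex-operation-on-presheaves-preserves-levelwise-fibrancy}. The second map is natural in its context argument, and we apply it along the substitution \(\PM{T}_{\cat{C}}\sigma \co \PM{T}_{\cat{C}}\Delta \to \PM{T}_{\cat{C}}\Gamma\).

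To combine these, I use that \(\PM{T}_{\cat{C}}\) is a pseudomorphism and hence strictly preserves substitution: \(\PM{T}_{\cat{C}}(A\sigma) = (\PM{T}_{\cat{C}}A)(\PM{T}_{\cat{C}}\sigma)\). Similarly \(\PM{D}\) preserves substitution, so \(\PM{DT}(A\sigma) = (\PM{DT}A)(\PM{DT}\sigma)\). The fibrancy structures on both sides are reindexed along exactly these substitutions, so the composite commutes with reindexing.

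The only mild subtlety is the passage between the internal statement of \cref{prop:cobar-pseudomorphism:preserves-fibrancy} and the external one. I expect this to be routine: the paper has already observed that global sections of the internal \(\PSh*{\cat{C}}\)-structures correspond to the external ones, including fibrancy structures.
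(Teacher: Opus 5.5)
Your proposal is correct and is the paper's argument. The paper justifies this proposition in one line, by composing \cref{prop:lifting-of-lex-operations-over-fibrant-sites:lex-operation-on-presheaves-preserves-levelwise-fibrancy} with \cref{prop:cobar-pseudomorphism:preserves-fibrancy} instantiated at \(\PM{T}_{\cat{C}}\Gamma\) and \(\PM{T}_{\cat{C}}A\), and your remarks on externalization and on naturality (via \(\PM{T}_{\cat{C}}\) strictly preserving substitution) fill in exactly the routine details it leaves implicit.
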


To show that \(\LO{D}\LO{T}\) is also a descent data operation, we show that \(\pi_{\sD\LexOp  A}, \sD\LexOp \pi_A \co \LO{D}\LO{T}  A \to (\LO{D}\LO{T})^2 A\) are equivalences.
In particular, we show this if \(A\) is only levelwise fibrant.

\begin{lemma}\label[lemma]{prop:lifting-of-lex-operations-over-fibrant-sites:elimination-unique}
  Naturally in \(\Gamma \in \PSh{\cat*{C}}\), for all \(A \in \Ty_{\PSh{\cat*{C}}}^{\Fib_0}(\Gamma)\) the map \(\sD\LexOp \pi_A \co \sD\LexOp A \to (\sD\LexOp) ^2A\) has the structure of an equivalence.
\end{lemma}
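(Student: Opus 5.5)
We want \(\sD\LexOp\pi_A\) to be an equivalence, and the plan is to reduce this to a levelwise statement that can then be strictified by \(\PM{D}\). Unfold \(\pi_A = \LO{D}\vartheta_A \circ \tau_A\), so that \(\LexOp\pi_A = \LexOp\LO{D}\vartheta_A \circ \LexOp\tau_A\). Then
\[
  \sD\LexOp\pi_A = \sD\LexOp\LO{D}\vartheta_A \circ \sD\LexOp\tau_A .
\]
It suffices to show that each of the two factors is an equivalence. For this it is enough that \(\LexOp\tau_A\) and \(\LexOp\LO{D}\vartheta_A\) are \emph{levelwise} equivalences between levelwise fibrant types. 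Indeed, \cref{prop:cobar-strictifies-levelwise-equivalences} turns a levelwise equivalence between levelwise fibrant types into an equivalence \(\PM{D}(-)\) over \(\PM{D}\Gamma\). Reindexing along the unit \(\tau_\Gamma\) then gives that \(\LO{D}(-)\) is an equivalence over \(\Gamma\), naturally in \(\Gamma\). Levelwise fibrancy of all types involved follows from \cref{prop:lifting-of-lex-operations-over-fibrant-sites:lex-operation-on-presheaves-preserves-levelwise-fibrancy} and \cref{prop:cobar-pseudomorphism:preserves-fibrancy}, since fibrant implies levelwise fibrant via \(U\).

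\emph{Step 1.} The map \(\tau_A\) is a levelwise equivalence between levelwise fibrant types by \cref{prop:cobar-unit-is-levelwise-equivalence}. Then \(\LexOp\tau_A\) is a levelwise equivalence by \cref{prop:lifting-of-lex-operations-over-fibrant-sites:lex-operation-on-presheaves-preserves-levelwise-equivalences-between-levelwise-fibrant-types}.

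\emph{Step 2.} For the factor \(\LexOp\LO{D}\vartheta_A\), consider the naturality square of \(\tau\) with respect to \(\vartheta_A\):
\[
  \LO{D}\vartheta_A\circ\tau_A=\tau_{\LexOp A}\circ\vartheta_A .
\]
Apply \(\LexOp\) to this square. By Step 1, \(\LexOp\tau_A\) is a levelwise equivalence, and the same argument gives this for \(\LexOp\tau_{\LexOp A}\). By \cref{prop:lifting-of-lex-operations-over-fibrant-sites:lex-operation-on-presheaves-descent-data-operation-lifts-levelwise}, \(\LexOp\vartheta_A\) is a levelwise equivalence. Two-out-of-three, applied levelwise (inside \(\cSet_{\Fib}\comma\cat{C}_0\) via \cref{prop:fibrant-sites:strengthen-fibrancy}), then shows that \(\LexOp\LO{D}\vartheta_A\) is a levelwise equivalence.

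Finally, apply \(\sD\) to both factors, using \cref{prop:cobar-strictifies-levelwise-equivalences}, and compose. Naturality in \(\Gamma\) holds throughout, because every cited result is natural and two-out-of-three and composition of equivalence structures are given by uniform constructions.

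\textbf{Main obstacle.} The delicate point is bookkeeping of contexts. The lemma statements live over \(\Gamma\), but \(\PM{D}\) and \(\PM{T}\) move the context to \(\PM{D}\Gamma\) and \(\PM{T}\Gamma\), and \(\LexOp A\) lives over \(\PM{T}\Gamma\). One must check that reindexing along the units, for example along \(\vartheta_\Gamma\) for \(\LexOp\) of \(\LO{D}(-)\), preserves levelwise equivalence structures. This holds because \(U\) commutes with substitution.

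A second point to verify is that \(\LexOp\) applied to \(\LO{D}A\) really is taken with \(\LO{D}A\) levelwise fibrant, so that \cref{prop:lifting-of-lex-operations-over-fibrant-sites:lex-operation-on-presheaves-preserves-levelwise-equivalences-between-levelwise-fibrant-types} applies. This holds since \(\LO{D}A\) is fibrant by \cref{prop:cobar-pseudomorphism:preserves-fibrancy}.
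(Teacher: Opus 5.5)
Your proof is correct and uses the same reduction and lemmas as the paper. The paper instead factors the unit as \(\pi_A = \tau_{\LexOp A}\circ\vartheta_A\), so that \(\LexOp\pi_A = \LexOp\tau_{\LexOp A}\circ\LexOp\vartheta_A\) is directly a composite of levelwise equivalences and one application of \cref{prop:cobar-strictifies-levelwise-equivalences} finishes. Your Step 2 already proves this, since the right-hand side of the \(\LexOp\)-image of your naturality square is exactly \(\LexOp\pi_A\); Step 1, the 2-out-of-3 cancellation, and handling the two factors separately are therefore redundant.
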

\begin{proof}
  Our goal is to show that the diagonal in the following defining square is an equivalence:
  \begin{center}\(\begin{tikzcd}[column sep=3em]
    {\sD \LexOp A}
    \ar[r, "\sD\LexOp \tau_A"]
    \ar[d, "\sD\LexOp \vartheta_A"']
  &
    {\sD \LexOp \sD A}
    \ar[d, "\sD\LexOp\sD\vartheta_A"]
  \\
    {\sD \LexOp^2 A}
    \ar[r, "\sD\LexOp\tau_{\LexOp A}"]
  &
    {(\sD \LexOp)^2 A\rlap{.}}
  \end{tikzcd}\)\end{center}
  By \cref{prop:lifting-of-lex-operations-over-fibrant-sites:lex-operation-on-presheaves-preserves-levelwise-fibrancy,prop:lifting-of-lex-operations-over-fibrant-sites:composite-lex-operation-exists}, the types \(\LexOp A\),  \(\LexOp^2A\), and \(\LexOp\sD\LexOp A\) are levelwise fibrant.
  Hence, it suffices to show by \cref{prop:cobar-strictifies-levelwise-equivalences} that the composite \(\LexOp A \xrightarrow{\LexOp \vartheta_A} \LexOp^2 A \xrightarrow{\LexOp \tau_{\LexOp A}} \LexOp \sD \LexOp A\) is levelwise an equivalence.
  \(\LexOp\vartheta_A\) is a levelwise equivalence by \cref{prop:lifting-of-lex-operations-over-fibrant-sites:lex-operation-on-presheaves-descent-data-operation-lifts-levelwise}.
  Since \(\LexOp A\) is levelwise fibrant, the map \(\tau_{\LexOp A}\) is a levelwise equivalence by \cref{prop:cobar-unit-is-levelwise-equivalence}.
  By \cref{prop:lifting-of-lex-operations-over-fibrant-sites:lex-operation-on-presheaves-preserves-levelwise-equivalences-between-levelwise-fibrant-types}, \(\LexOp\) preserves the levelwise equivalence.
\end{proof}

\begin{lemma}\label[lemma]{prop:lifting-of-lex-operations-over-fibrant-sites:free-algebras-exist}
  Naturally in \(\Gamma \in \PSh{\cat*{C}}\), for all \(A \in \Ty_{\PSh{\cat*{C}}}^{\Fib_0}(\Gamma)\) the map \(\pi_{\sD\LexOp A} \co \sD\LexOp A \to (\sD\LexOp)^2A\) has the structure of an equivalence.
\end{lemma}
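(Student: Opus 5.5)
Since \(\pi_{\sD\LexOp A} = \tau_{\LexOp\sD\LexOp A} \circ \vartheta_{\sD\LexOp A}\), it suffices to show that each of the two factors is an equivalence. We then compose the two equivalence structures. The plan is to reduce the first factor to a statement about \(\LO{D}\)-modal types, and to handle the second by exploiting that \(\sD\LexOp A\) is itself of the form \(\sD(-)\).

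For the outer factor \(\tau_{\LexOp\sD\LexOp A}\): by \cref{prop:lifting-of-lex-operations-over-fibrant-sites:composite-lex-operation-exists}, \(\sD\LexOp A\) is fibrant, hence levelwise fibrant. By \cref{prop:lifting-of-lex-operations-over-fibrant-sites:lex-operation-on-presheaves-preserves-levelwise-fibrancy}, \(\LexOp\sD\LexOp A\) is levelwise fibrant too. However, \cref{prop:cobar-unit-is-levelwise-equivalence} only gives that \(\tau\) is a \emph{levelwise} equivalence, and the claim is an equivalence in \(\PSh{\cat*{C}}\). So I would instead treat the composite differently. Both \(\sD\LexOp A\) and \((\sD\LexOp)^2 A = \sD(\LexOp\sD\LexOp A)\) are \(\LO{D}\)-modal: for the second this is \cref{prop:cobar-lex-operation:free-algebras-exist} applied to the levelwise fibrant type \(\LexOp\sD\LexOp A\), and for the first it is the same lemma applied to \(\LexOp A\). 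By \cref{prop:cobar-strictifies-equivalences-between-levelwise-modal-types}, it therefore suffices to show that \(\PM{U}\pi_{\sD\LexOp A}\) is a levelwise equivalence. The main step is thus to check levelwise that \(\tau_{\LexOp\sD\LexOp A}\circ\vartheta_{\sD\LexOp A}\) is an equivalence.

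Levelwise, \(\tau_{\LexOp\sD\LexOp A}\) is an equivalence by \cref{prop:cobar-unit-is-levelwise-equivalence}, since its argument is levelwise fibrant. For \(\vartheta_{\sD\LexOp A}\), consider the naturality square of \(\vartheta\) along the levelwise equivalence \(\tau_{\LexOp A}\colon \LexOp A\to\sD\LexOp A\) (again given by \cref{prop:cobar-unit-is-levelwise-equivalence}). This square gives \(\vartheta_{\sD\LexOp A}\circ\tau_{\LexOp A} = \LexOp\tau_{\LexOp A}\circ\vartheta_{\LexOp A}\). On the right-hand side, \(\vartheta_{\LexOp A}\) is a levelwise equivalence by \cref{prop:lifting-of-lex-operations-over-fibrant-sites:lex-operation-on-presheaves-descent-data-operation-lifts-levelwise}, using that \(A\) is levelwise fibrant. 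Also \(\LexOp\tau_{\LexOp A}\) is a levelwise equivalence by \cref{prop:lifting-of-lex-operations-over-fibrant-sites:lex-operation-on-presheaves-preserves-levelwise-equivalences-between-levelwise-fibrant-types}. Hence the composite is a levelwise equivalence, and \(\vartheta_{\sD\LexOp A}\) is one by 2-out-of-3, applied levelwise in \(\cSet_{\Fib}\comma\cat{C}_0\), which is legitimate by \cref{prop:fibrant-sites:strengthen-fibrancy}.

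The main obstacle is the bookkeeping of fibrancy hypotheses: every type to which we apply 2-out-of-3 or the cited lemmas must be levelwise fibrant, and each such fact must come from \cref{prop:lifting-of-lex-operations-over-fibrant-sites:lex-operation-on-presheaves-preserves-levelwise-fibrancy} or \cref{prop:cobar-pseudomorphism:preserves-fibrancy}. We also need naturality in \(\Gamma\), which follows because every step is a natural construction and composing equivalence structures is natural. Finally, \cref{prop:cobar-strictifies-equivalences-between-levelwise-modal-types} converts the levelwise equivalence into an equivalence in \(\PSh{\cat*{C}}\), using the modality of the source and target established above.
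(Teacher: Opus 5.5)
Your argument is correct and is essentially the paper's proof: the naturality square of \(\vartheta\) along \(\tau_{\LexOp A}\) (the left square in the paper's diagram) shows that \(\vartheta_{\sD\LexOp A}\) is a levelwise equivalence, and composing with the levelwise equivalence \(\tau_{\LexOp\sD\LexOp A}\) makes \(\pi_{\sD\LexOp A}\) a levelwise equivalence. Since \(\sD\LexOp A\) and \((\sD\LexOp)^2 A\) are fibrant and \(\sD\)-modal, \cref{prop:cobar-strictifies-equivalences-between-levelwise-modal-types} then upgrades this to an equivalence; you should just delete your opening plan of showing each factor is a genuine equivalence (which you effectively abandon anyway), since \(\LexOp\sD\LexOp A\) need not be fibrant.
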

\begin{proof}
  In the following diagram, the left square commutes by naturality of \(\vartheta\) and the right square is the defining square for \(\pi_{\sD\LexOp A}\):
  \begin{center}\(\begin{tikzcd}[column sep=3em]
    {\LexOp A}
    \ar[r, "\tau_{\LexOp A}"]
    \ar[d, "\vartheta_{\LexOp A}"']
  &
    {\sD \LexOp A}
    \ar[r, "\tau_{\sD \LexOp A}"]
    \ar[d, "\vartheta_{\sD \LexOp A}"]
  &
    {\sD^2 \LexOp A}
    \ar[d, "\sD\vartheta_{\sD \LexOp A}"]
  \\
    {\LexOp^2 A}
    \ar[r, "\LexOp \tau_{\LexOp A}"]
  &
    {\LexOp \sD \LexOp A}
    \ar[r, "\tau_{\LexOp \sD \LexOp A}"]
  &
    {(\sD \LexOp)^2 A \rlap{.}}
  \end{tikzcd}\)\end{center}
  By \cref{prop:lifting-of-lex-operations-over-fibrant-sites:lex-operation-on-presheaves-preserves-levelwise-fibrancy,prop:cobar-pseudomorphism:preserves-fibrancy}, all types above are levelwise fibrant.
  All horizontal maps are levelwise equivalences by \cref{prop:cobar-unit-is-levelwise-equivalence,prop:lifting-of-lex-operations-over-fibrant-sites:lex-operation-on-presheaves-preserves-levelwise-equivalences-between-levelwise-fibrant-types}.
  The vertical map on the left is a levelwise equivalence by \cref{prop:lifting-of-lex-operations-over-fibrant-sites:lex-operation-on-presheaves-descent-data-operation-lifts-levelwise}.
  Hence, by \(2\)-out-of-\(3\) for levelwise equivalences between levelwise fibrant types, all maps above are levelwise equivalences.

  The types \(\sD \LexOp A\) and \((\sD \LexOp)^2 A\) are fibrant by \cref{prop:lifting-of-lex-operations-over-fibrant-sites:lex-operation-on-presheaves-preserves-levelwise-fibrancy,prop:cobar-pseudomorphism:preserves-fibrancy}, and \(\sD\)-modal by \cref{prop:cobar-lex-operation:free-algebras-exist}.
  The claim follows since levelwise equivalences and equivalences between \(\sD\)-modal types coincide by \cref{prop:cobar-strictifies-equivalences-between-levelwise-modal-types}.
\end{proof}

We can now show that \(\LO{D}\LO{T}\) is a descent data operation and characterize its modal types.
As a corollary, we obtain an inner model \(\PSh{\cat*{C}}_{\LO{D}\LO{T}}\) of \(\LO{D}\LO{T}\)-modal types in \(\PSh{\cat*{C}}_{\Fib}\).

\begin{theorem}\label[theorem]{prop:lifting-of-lex-operations-over-fibrant-sites:modal-type-characterization}
  The lex operation \(\sD\LexOp\) on \(\PSh{\cat*{C}}_{\Fib}\) is a descent data operation.
  Furthermore, naturally in \(\Gamma \in \PSh{\cat*{C}}\), for all \(A \in \Ty^{\Fib}_{\PSh{\cat*{C}}}(\Gamma)\), there is a logical equivalence
  \[
    \El_{\PSh{\cat*{C}}}(\Gamma, \IsEquiv(\pi_A))
    \longleftrightarrow
    {
      \El_{\PSh{\cat*{C}}}(\Gamma, \IsEquiv(\tau_A))
      \times
      \El_{{\CubicalSet \comma \cat{C}_0}}(\PM{U}\Gamma, \IsEquiv(\vartheta_{\PM{U}A}))
    }.
  \]
\end{theorem}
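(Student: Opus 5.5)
The first claim follows directly from the two preceding lemmas. For a fibrant \(A\), in particular a levelwise fibrant one, \cref{prop:lifting-of-lex-operations-over-fibrant-sites:elimination-unique} gives that \(\sD\LexOp\pi_A\) is an equivalence, and \cref{prop:lifting-of-lex-operations-over-fibrant-sites:free-algebras-exist} gives the same for \(\pi_{\sD\LexOp A}\). By \cref{prop:lifting-of-lex-operations-over-fibrant-sites:composite-lex-operation-exists}, \(\sD\LexOp\) restricts to \(\PSh{\cat*{C}}_{\Fib}\). Both lemmas are natural in \(\Gamma\), so \(\sD\LexOp\) is a descent data operation there.

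For the characterization I use the factorization \(\pi_A = \tau_{\LexOp A}\circ\vartheta_A = \sD\vartheta_A\circ\tau_A\), together with the following facts:
\begin{itemize}
  \item \(\tau_{\LexOp A}\) is a levelwise equivalence by \cref{prop:cobar-unit-is-levelwise-equivalence}, because \(\LexOp A\) is levelwise fibrant by \cref{prop:lifting-of-lex-operations-over-fibrant-sites:lex-operation-on-presheaves-preserves-levelwise-fibrancy}.
  \item \(\PM{U}\vartheta_A = \vartheta_{\PM{U}A}\) by \cref{prop:lifting-of-lex-operations-over-fibrant-sites:lex-operation-on-presheaves-preserved}.
  \item By \cref{prop:fibrant-sites:strengthen-fibrancy}, a levelwise equivalence between levelwise fibrant types is the same as an equivalence in \(\CubicalSet \comma \cat{C}_0\) over \(\PM{U}\Gamma\).
\end{itemize}

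\emph{Backward direction.} Assume \(\tau_A\) is an equivalence and \(\vartheta_{\PM{U}A}\) is an equivalence in \(\CubicalSet\comma\cat{C}_0\). Then \(\PM{U}\vartheta_A\) is a levelwise equivalence, and so is \(\PM{U}\pi_A = \PM{U}\tau_{\LexOp A}\circ\PM{U}\vartheta_A\). Both \(A\) and \(\sD\LexOp A\) are \(\sD\)-modal: \(A\) because \(\tau_A\) is an equivalence, and \(\sD\LexOp A\) by \cref{prop:cobar-lex-operation:free-algebras-exist}, since \(\LexOp A\) is levelwise fibrant. Hence \cref{prop:cobar-strictifies-equivalences-between-levelwise-modal-types} upgrades \(\pi_A\) to an equivalence.

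\emph{Forward direction.} Assume \(\pi_A\) is an equivalence. Then \(A\) is a retract of \(\sD\LexOp A\), which is \(\sD\)-modal. \(\sD\)-modal types are closed under retracts, being the modal types of a modality, so \(A\) is \(\sD\)-modal and \(\tau_A\) is an equivalence.

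For the second component, I apply \(\PM{U}\) (which is lex by \cref{prop:forgetful:preservation-properties}) to get that \(\PM{U}\pi_A = \PM{U}\tau_{\LexOp A}\circ\vartheta_{\PM{U}A}\) is a levelwise equivalence. Since \(\PM{U}\tau_{\LexOp A}\) is a levelwise equivalence, 2-out-of-3 for levelwise equivalences between levelwise fibrant types shows that \(\vartheta_{\PM{U}A}\) is a levelwise equivalence. By the third fact above, this is the required element of \(\El_{\CubicalSet\comma\cat{C}_0}(\PM{U}\Gamma,\IsEquiv(\vartheta_{\PM{U}A}))\).

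\emph{Main obstacle.} Everything must be produced as structure, naturally in \(\Gamma\), not merely as a property. The retract argument for modality of \(A\) needs an explicit construction: I would instead take the inverse of \(\sD\pi_A\) and use 2-out-of-3 on \(\sD\pi_A\circ\tau_A = \tau_{\sD\LexOp A}\circ\pi_A\). Here \(\tau_{\sD\LexOp A}\) is an equivalence since \(\sD\LexOp A\) is \(\sD\)-modal, and \(\sD\pi_A\) is an equivalence because lex operations preserve equivalences. This yields an equivalence structure on \(\tau_A\) uniformly in \(\Gamma\).
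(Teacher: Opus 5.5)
Your proof is correct and follows the paper's argument. The first claim comes from \cref{prop:lifting-of-lex-operations-over-fibrant-sites:elimination-unique,prop:lifting-of-lex-operations-over-fibrant-sites:free-algebras-exist,prop:lifting-of-lex-operations-over-fibrant-sites:composite-lex-operation-exists}, and the characterization from \(\pi_A = \tau_{\LexOp A}\circ\vartheta_A\), 2-out-of-3, the levelwise equivalence \(\tau_{\LexOp A}\), and the coincidence of levelwise equivalences and equivalences between \(\sD\)-modal types; your explicit argument via \(\sD\pi_A\circ\tau_A = \tau_{\sD\LexOp A}\circ\pi_A\) simply unpacks the paper's remark that every \(\sD\LexOp\)-modal type is \(\sD\)-modal because every type in the image of \(\sD\LexOp\) is.
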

\begin{proof}
  The lex operation lifts to the model of fibrant types by \cref{prop:lifting-of-lex-operations-over-fibrant-sites:composite-lex-operation-exists}.
  To show that \(\sD\LexOp\) is a descent data operation, we have to show that \(\pi_{\sD\LexOp A}\) and \(\sD\LexOp \pi_A\) are equivalences for all \(A \in \Ty^{\Fib}_{\PSh{\cat*{C}}}(\Gamma)\).
  This is the case by \cref{prop:lifting-of-lex-operations-over-fibrant-sites:elimination-unique,prop:lifting-of-lex-operations-over-fibrant-sites:free-algebras-exist}.

  Every \(\sD\LexOp\)-modal type is \(\sD\)-modal as every type in the image of \(\LO{D}\LexOp\) is so by \cref{prop:cobar-lex-operation:free-algebras-exist}.
  Hence, it suffices to show for a \(\sD\)-modal type \(A\) that it is \(\sD\LexOp \)-modal exactly if \(\vartheta_A \co A \to \LexOp A\) is a levelwise equivalence.
  Since the unit is given by \(\pi_A = \tau_{\LexOp A} \cc \vartheta_A \co A \to \sD\LexOp  A\), the claim follows by \(2\)-out-of-\(3\) since levelwise equivalences between \(\sD\)-modal types coincide with equivalences by \cref{prop:cobar-strictifies-equivalences-between-levelwise-modal-types}, and \(\tau_{\LexOp A}\) is a levelwise equivalence by \cref{prop:cobar-unit-is-levelwise-equivalence}.
\end{proof}

\begin{remark}\label[remark]{rem:hit-existence-in-inner-presheaves}
  If one wishes to justify higher inductive types in the model \(\PSh{\cat*{C}}_{\LO{D}\LO{T}}\), one can mirror the construction of initial algebras from \cite{CoquandRuchSattler2021,CoquandHuberMortberg2018}.
  For this, it suffices if \(\LO{T}\) is given as a parametric right adjoint.
  This is, for example, the case for the modality from \cite{Sattler2025Types}.
\end{remark}

\subsection{Properties of the model of \texorpdfstring{\(\sD\LexOp\)}{DT}-modal types}\label[subsection]{sec:lifting-of-descent-data-operations:properties}

By \cref{prop:fibrant-sites:strengthen-fibrancy}, the pseudomorphism \(\PM{U}\) and its right adjoint on types \(\LO{R}\) lift to a pseudomorphism with right adjoint on types between \(\cSet_{\Fib} \comma \cat{C}_0\) and \(\PSh{\cat*{C}}_{\Fib}\).
In this subsection, we show that it lifts further to one between \(\cSet_{\LexOp} \comma \cat{C}_0\) and \(\PSh{\cat*{C}}_{\sD\LexOp}\).
From this, we conclude that propositional truncations are computed levelwise in \(\PSh{\cat*{C}}_{\sD\LexOp}\).
This allows us to lift dependent choice from the base model.

\begin{lemma}[Internal to \(\CubicalSet\)]\label[lemma]{prop:cofree-presheaf:cobar-modal}
  Naturally in \(\Gamma \co \PSh*{\cat{C}}\), given \(A \co \Ty*^{\Fib*}_{\PSh*{\cat{C}_0}}(\PM{U}\Gamma)\), the type \(\sR A \co \Ty*^{\Fib*}_{\PSh*{\cat{C}}}(\Gamma)\) is \(\sD\)-modal.
\end{lemma}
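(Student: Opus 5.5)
The strategy is to show that $\tau_{\sR A} \co \sR A \to \sD\sR A$ is an equivalence. Since $\sR A$ is a right adjoint image, it carries a coalgebra-like structure for the comonad-free side, and concretely a retraction of the unit. The key observation is that $\sR A \cong \sE$-image data: by \cref{prop:cofree-presheaf:dependent-right-adjoint}, $\sE(\PM{U}\sR A)$ is $\sR\PM{U}\sR A$, and the counit $\varepsilon \co \PM{U}\PM{R} \to \id$ gives a map $\sR(\varepsilon_A) \co \sE\sR A = \sR\PM{U}\sR A \to \sR A$. This map is a strict retraction of the unit $\eta_{\sR A}$ by the triangle identity. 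So $\sR A$ is a (strict) $\sE$-algebra, indeed free.

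First I would use this retraction to build a retraction of $\tau_{\sR A}$. An element of $\sD\sR A$ over $\gamma$ is a compatible family $a_n$ over the weights $\sP_n$ with values in $\sE^{n+1}\sR A$. I would define $r \co \sD\sR A \to \sR A$ by evaluating at the vertex $n = 0$, $i = 1 \in \sP_0$, which gives an element of $\sE\sR A$, and then applying $\sR(\varepsilon_A)$. Then $r\circ\tau_{\sR A} = \sR(\varepsilon_A)\circ\eta_{\sR A} = \id$ strictly. Naturality in $\Gamma$ is clear since everything is defined pointwise and the restrictions of $\sR$ commute with the counit. (One must check the typing over $\Gamma$: the family version $\sR A = (RA)\eta_\Gamma$ means that $\sD\sR A$ lives over $\Gamma$ via $\tau_\Gamma$, and the counit is applied in the fibre. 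This is routine.)

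Second, I would show that $\tau_{\sR A}\circ r \sim \id$. The easiest route avoids explicit homotopies. By \cref{prop:cobar-lex-operation:free-algebras-exist}, $\sD\sR A$ is $\sD$-modal, and a retract of a modal type is modal; this is a standard fact for modalities in the sense of \cite{RijkeShulmanSpitters2020}, since the modal types of a (lex) modality are closed under retracts. Here $\sR A$ is a retract of $\sD\sR A$ via $\tau_{\sR A}$ and $r$. For this I need $\sR A$ fibrant, which holds by \cref{prop:cofree-presheaf:fibrancy}. I also need $\sD$ to be a descent data operation on $\PSh*{\cat{C}}_{\Fib*}$, which holds by \cref{prop:cobar-lex-operation:elimination-unique,prop:cobar-lex-operation:free-algebras-exist}. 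The retract argument is as follows. Idempotence gives $\eta$ for the modal type; from the retraction, $\tau_{\sR A}$ has a left inverse $r$. By naturality, $\sD r \circ \tau_{\sD\sR A} = \tau_{\sR A}\circ r$, and the standard retract lemma then shows $\tau_{\sR A}$ is an equivalence.

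The main obstacle is making the retract argument produce equivalence \emph{data} naturally in $\Gamma$, rather than mere logical implications. Since $\IsEquiv$ is a proposition, and the modality-theoretic proof is carried out uniformly in the internal language, I expect this to be fine. As an alternative if the retract lemma is awkward, I would instead use \cref{prop:cobar-strictifies-equivalences-between-levelwise-modal-types}-style reasoning: check that $\PM{U}\tau_{\sR A}$ is an equivalence via \cref{prop:cobar-unit-is-levelwise-equivalence}. That alone does not suffice without modality of $\sR A$, which is why I prefer the retract route.
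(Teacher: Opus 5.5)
Your proof is correct, but you construct the retraction of $\tau_{\sR A}$ by a different route than the paper.

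\textbf{What is shared.} Both arguments reduce $\sD$-modality of the fibrant type $\sR A$ to exhibiting a left inverse of $\tau_{\sR A}$. The paper takes this standard fact for granted. Your retract argument proves it: naturality of $\tau$ makes $\tau_{\sR A}$ a retract, in the arrow category, of the equivalence $\tau_{\sD\sR A}$ from \cref{prop:cobar-lex-operation:free-algebras-exist}.

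\textbf{The paper's retraction.} It transposes across $\PM{U} \dashv \sR$, whose hom-bijection extends to homotopies (\cref{prop:cofree-presheaf:dependent-right-adjoint}). A left inverse then amounts to a map $p \colon \PM{U}\sD\sR A \to A$ with $p \circ \PM{U}\tau_{\sR A} \sim \varepsilon_A$. The choice $p \coloneqq \varepsilon_A \circ (\PM{U}\tau_{\sR A})^{-1}$ works because $\PM{U}\tau_{\sR A}$ is an equivalence by \cref{prop:cobar-unit-is-levelwise-equivalence}.

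\textbf{Your retraction.} You open up the definition of $\PM{D}$. The component at $n = 0$, $i = 1$ lands in $\sE\sR A = \sR\PM{U}\sR A$; this, not $\sE(\PM{U}\sR A)$ as you wrote, is the relevant object. The free $\sE$-algebra structure $\sR(\varepsilon_A)$ turns this projection into a retraction. It is strict by the triangle identity and strictly stable under substitution.

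\textbf{Comparison.} Your route gives a strict retraction without needing \cref{prop:cobar-unit-is-levelwise-equivalence} for this step. It also shows more generally that every fibrant (homotopy) $\sE$-algebra is $\sD$-modal. The paper's route is shorter and uses $\PM{D}$ only through the levelwise equivalence of its unit.

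Your closing remark dismisses exactly the paper's argument. That $\PM{U}\tau_{\sR A}$ is an equivalence does suffice, because the paper combines it with the adjunction to build the left inverse. It does not use it to invoke \cref{prop:cobar-strictifies-equivalences-between-levelwise-modal-types}, which would indeed be circular.
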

\begin{proof}
  We have to find a left inverse of \(\tau_{\sR A} \co \sR A \to \sD(\sR A)\).
  This is equivalent to finding a map \(p \co \PM{U}(\sD(\sR A)) \to A\) with \(p \cc \PM{U} \tau_{\sR A} \sim \varepsilon_A\).
  Such a map exists since \(\PM{U}\tau_{\sR A}\) is an equivalence by \cref{prop:cobar-unit-is-levelwise-equivalence}.
\end{proof}

Next, we show that \(\LO{R}A\) is also levelwise \(\LO{T}\)-modal if \(A\) is \(\LO{T}\)-modal.

\begin{lemma}\label[lemma]{prop:cofree-presheaf:preserves-levelwise-modal-types-for-base-model-modality}
  Naturally in \(\Gamma \in \PSh{\cat*{C}}\), given  \(A \in \Ty^{\LexOp}_{\CubicalSet \comma \cat{C}_0}(\PM{U}\Gamma)\), the map \(\vartheta_{\sR A} \co \sR A \to \LexOp (\sR A)\) has a homotopy retraction.%
  \footnote{%
  Note that \(\LexOp (\sR A)\) is not necessarily fibrant, which is why do not refer to this as being \(\LexOp\)-modal.
  This preservation of \(\LexOp\)-algebras up to homotopy should be viewed as a technical, strict property of \(\sR\).%
  }
\end{lemma}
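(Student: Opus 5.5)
We construct the retraction levelwise and then assemble it with the right-adjoint structure of \(\sR\) from \cref{prop:cofree-presheaf:dependent-right-adjoint}. That remark gives a bijection \(\Ty(\Gamma)(\LexOp(\sR A), \sR A) \cong \Ty(\PM{U}\Gamma)(\PM{U}\LexOp(\sR A), A)\), and this bijection extends to homotopies.

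So it suffices to build a map \(r \co \PM{U}(\LexOp(\sR A)) \to A\) over \(\PM{U}\Gamma\) such that \(r \cc \PM{U}\vartheta_{\sR A} \sim \varepsilon_A\). Here \(\varepsilon_A \co \PM{U}\sR A \to A\) is the counit, that is, evaluation at \(\id_x\). Transposing \(r\) then gives a map \(\LexOp(\sR A) \to \sR A\). Transposing the homotopy gives a homotopy between its composite with \(\vartheta_{\sR A}\) and the transpose of \(\varepsilon_A\), which is \(\id_{\sR A}\). The argument is the same pattern as \cref{prop:cofree-presheaf:cobar-modal}, with \(\tau\) replaced by \(\vartheta\).

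To construct \(r\) we work in \(\CubicalSet_{\Fib} \comma \cat{C}_0\), i.e.\ in context \(\cat{C}_0.\PM{U}\Gamma\) of \(\cSet_{\Fib}\).

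\begin{itemize}
\item By \cref{prop:lifting-of-lex-operations-over-fibrant-sites:lex-operation-on-presheaves-preserved}, \(\PM{U}(\LexOp(\sR A)) = \LexOp(\PM{U}\sR A)\) and \(\PM{U}\vartheta_{\sR A} = \vartheta_{\PM{U}\sR A}\).
\item \(\sR A\) is fibrant by \cref{prop:cofree-presheaf:fibrancy}, so \(\PM{U}\sR A\) is fibrant in the slice by \cref{prop:fibrant-sites:strengthen-fibrancy}.
\item \(A\) is \(\LexOp\)-modal by hypothesis, so \(\vartheta_A\) has an inverse \(\vartheta_A^{-1}\) up to homotopy.
\end{itemize}

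We then set
\[
  r \coloneqq \vartheta_A^{-1} \cc \LexOp(\varepsilon_A) \co \LexOp(\PM{U}\sR A) \to \LexOp A \to A.
\]
Naturality of \(\vartheta\) gives the strict equation \(\LexOp(\varepsilon_A) \cc \vartheta_{\PM{U}\sR A} = \vartheta_A \cc \varepsilon_A\). Hence \(r \cc \vartheta_{\PM{U}\sR A} = \vartheta_A^{-1}\vartheta_A\varepsilon_A \sim \varepsilon_A\).

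The main obstacle is making this bookkeeping precise. The fibrancy of \(\LexOp(\PM{U}\sR A)\) is needed so that the homotopy inverse and the homotopies are defined; it holds by \cref{ass:base-model-lex-operation} in the slice model. We also need naturality of the construction in \(\Gamma\): substitution in \(\Gamma\) must commute with \(\varepsilon\), \(\vartheta\), and the transposition, and the chosen inverse of \(\vartheta_A\) must come from the modal structure of \(A\), which is natural. Finally, the homotopy \(\vartheta_A^{-1}\vartheta_A \sim \id\) only needs to be transported along the transposition bijection of \cref{prop:cofree-presheaf:dependent-right-adjoint}. That bijection respects homotopies because homotopies are paths in hom-types and \(\sR\) preserves path types (\cref{prop:cofree-presheaf:preservation-properties}). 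This avoids ever claiming that \(\LexOp(\sR A)\) is fibrant over \(\Gamma\), consistent with the footnote.
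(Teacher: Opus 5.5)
Your proposal is correct and is essentially the paper's proof. You transpose along the adjunction to reduce to finding \(p \colon \PM{U}(\LexOp(\sR A)) \to A\) with \(p \circ \PM{U}\vartheta_{\sR A} \sim \varepsilon_A\), and you take \(p = \vartheta_A^{-1} \circ \LexOp\varepsilon_A\), using naturality of \(\vartheta\) and modality of \(A\); your extra bookkeeping is harmless, though fibrancy of \(\LexOp(\PM{U}\sR A)\) is not actually needed, since the homotopy only involves paths in \(A\).
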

\begin{proof}
  It suffices to find a map \(p \co \PM{U} (\LexOp (\sR A)) \to A\) such that \(p \cc \PM{U}\vartheta_{\sR A} \sim \varepsilon_A\).
  By \cref{prop:lifting-of-lex-operations-over-fibrant-sites:lex-operation-on-presheaves-preserved}, we have that \(\PM{U}\vartheta_{\sR A} = \vartheta_{\PM{U  }\sR A}\).
  We have by naturality of \(\vartheta\) that \(\LexOp\varepsilon_A \cc \vartheta_{\PM{U}(\LexOp{R}A)} = \vartheta_A \cc \varepsilon_A\).
  The map \(\vartheta_A\) is by assumption an equivalence.
  Hence, \(\vartheta_A^{-1} \cc \LexOp\varepsilon_A\) is the desired map.
\end{proof}

\begin{proposition}\label[proposition]{prop:cofree-presheaf:sends-levelwise-modal-types-for-base-model-modality-to-modal-types-for-combined-modality}
  Naturally in \(\Gamma \in \PSh{\cat*{C}}\), given \(A \in \Ty^{\LO{T}}_{\PSh{\cat*{C}_0}}(\PM{U}\Gamma)\) the type \(\sR A \in \Ty^{\Fib}_{\PSh{\cat*{C}}}(\Gamma)\) is \(\sD\sT\)-modal.
\end{proposition}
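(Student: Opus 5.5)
We reduce to the characterization of \(\sD\LexOp\)-modal types in \cref{prop:lifting-of-lex-operations-over-fibrant-sites:modal-type-characterization}. By that theorem it suffices to show two things: that \(\sR A\) is \(\sD\)-modal, and that \(\vartheta_{\PM{U}\sR A} \co \PM{U}\sR A \to \LexOp(\PM{U}\sR A)\) is an equivalence in \(\CubicalSet \comma \cat{C}_0\). The first is exactly \cref{prop:cofree-presheaf:cobar-modal}. Fibrancy of \(\sR A\) comes from \cref{prop:cofree-presheaf:fibrancy}, and levelwise fibrancy in the strengthened sense from \cref{prop:fibrant-sites:strengthen-fibrancy}. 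So the remaining work is the second condition.

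For the second condition, \cref{prop:cofree-presheaf:preserves-levelwise-modal-types-for-base-model-modality} gives a homotopy retraction of \(\vartheta_{\sR A}\). Applying \(\PM{U}\) and using \cref{prop:lifting-of-lex-operations-over-fibrant-sites:lex-operation-on-presheaves-preserved}, we get a homotopy retraction of \(\vartheta_{\PM{U}\sR A}\) in \(\CubicalSet\comma\cat{C}_0\); here \(\PM{U}\) preserves homotopies because it preserves path types by \cref{prop:forgetful:preservation-properties}. Both \(\PM{U}\sR A\) and \(\LexOp(\PM{U}\sR A)\) are fibrant in \(\CubicalSet\comma\cat{C}_0\), by \cref{prop:fibrant-sites:strengthen-fibrancy} and \cref{ass:base-model-lex-operation}. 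Working in the base model \(\cSet_{\Fib}\comma\cat{C}_0\), where \(\LexOp\) is a descent data operation, we use the standard fact for lex modalities: a type \(X\) whose unit \(\vartheta_X\) has a retraction is modal, since it is a retract of the modal type \(\LexOp X\). Hence \(\vartheta_{\PM{U}\sR A}\) is an equivalence, and \cref{prop:lifting-of-lex-operations-over-fibrant-sites:modal-type-characterization} finishes the proof. Naturality in \(\Gamma\) holds throughout, since every cited construction is natural.

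The main obstacle is the retract argument, which must be carried out in the correct model. The footnote to \cref{prop:cofree-presheaf:preserves-levelwise-modal-types-for-base-model-modality} warns that \(\LexOp(\sR A)\) need not be fibrant over \(\PSh{\cat*{C}}\). We therefore pass through \(\PM{U}\) first and argue in \(\cSet_{\Fib}\comma\cat{C}_0\), where \(\LexOp\) is an honest modality. In that setting the two equivalent maps \(\LexOp\vartheta_X\) and \(\vartheta_{\LexOp X}\) being equivalences, together with the retraction \(r\), give \(\vartheta_X \cc r \sim \id\) by the usual argument: apply naturality of \(\vartheta\) to \(r\) and use that \(\LexOp r\cc\LexOp\vartheta_X \sim \id\). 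If this direct argument becomes awkward, the fallback is to invoke the characterization of modal types as retracts of \(\LexOp\)-images, as in \cite{RijkeShulmanSpitters2020}, internally to the base model.
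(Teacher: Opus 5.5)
Your proposal is correct and matches the paper's proof. Like the paper, you reduce via \cref{prop:lifting-of-lex-operations-over-fibrant-sites:modal-type-characterization} to two facts: \(\sD\)-modality of \(\sR A\) from \cref{prop:cofree-presheaf:cobar-modal}, and the homotopy retraction of \(\vartheta_{\sR A}\) from \cref{prop:cofree-presheaf:preserves-levelwise-modal-types-for-base-model-modality}; the only addition is that you spell out the standard step the paper leaves implicit, namely that after applying \(\PM{U}\) a unit with a homotopy retraction is an equivalence for the descent data operation \(\LexOp\) in \(\cSet_{\Fib}\comma\cat{C}_0\).
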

\begin{proof}
  This follows from \cref{prop:lifting-of-lex-operations-over-fibrant-sites:modal-type-characterization} via \cref{prop:cofree-presheaf:cobar-modal,prop:cofree-presheaf:preserves-levelwise-modal-types-for-base-model-modality}.
\end{proof}

\begin{proposition}\label[proposition]{prop:lifting-of-lex-operations-over-fibrant-sites:levelwise-principle}
  Naturally in \(\Gamma \in \PSh{\cat*{C}}\), given \(A \in \Ty^{\sD\LexOp}(\Gamma)\), there is a logical equivalence
  \[
    \El_{\cSet_{\Fib} \comma \cat{C}_0}^{\LexOp}(\PM{U}\Gamma, \trunc{\PM{U}A})
    \longleftrightarrow
    \El_{\PSh{\cat*{C}}}^{\sD\LexOp}(\Gamma, \trunc{A}).
  \]
\end{proposition}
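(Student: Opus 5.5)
The plan is to compare the two truncations through the adjunction between \(\PM{U}\) and \(\sR\), which by \cref{prop:cofree-presheaf:sends-levelwise-modal-types-for-base-model-modality-to-modal-types-for-combined-modality} restricts to modal types. Here \(\trunc{A}\) on the right denotes the propositional truncation in the inner model \(\PSh{\cat*{C}}_{\sD\LexOp}\). On the left it denotes the truncation in \(\cSet_{\LexOp} \comma \cat{C}_0\), that is, in context \(\cat{C}_0.\PM{U}\Gamma\) of \(\cSet_{\LexOp}\). Note that \(\PM{U}A\) is \(\LexOp\)-modal in this slice: by \cref{prop:lifting-of-lex-operations-over-fibrant-sites:modal-type-characterization}, \(\vartheta_{\PM{U}A}\) is an equivalence in \(\cSet_{\Fib}\comma\cat{C}_0\), and \(\PM{U}A\) is fibrant there by \cref{prop:fibrant-sites:strengthen-fibrancy}.

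\emph{From right to left.} Take the left truncation \(P \coloneqq \trunc{\PM{U}A}\), a \(\LexOp\)-modal proposition over \(\cat{C}_0.\PM{U}\Gamma\). By \cref{prop:cofree-presheaf:sends-levelwise-modal-types-for-base-model-modality-to-modal-types-for-combined-modality}, \(\sR P\) is \(\sD\LexOp\)-modal. It is a proposition because \(\PM{R}\) preserves products and path types (\cref{prop:cofree-presheaf:preservation-properties}), so \(\IsTrunc{-1}\) is preserved; alternatively one can use that propositions are characterised by contractibility of path types. The unit \(\PM{U}A \to P\) transposes under \cref{prop:cofree-presheaf:dependent-right-adjoint} to a map \(A \to \sR P\). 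The universal property of \(\trunc{A}\) in \(\PSh{\cat*{C}}_{\sD\LexOp}\) then gives \(\trunc{A} \to \sR P\). Applying this map to a global element of \(\trunc{A}\) and transposing back with the natural isomorphism \(\El(\Gamma,\sR P) \cong \El(\PM{U}\Gamma, P)\) yields the required element on the left.

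\emph{From left to right.} Let \(Q \coloneqq \trunc{A}\), a \(\sD\LexOp\)-modal proposition over \(\Gamma\). Then \(\PM{U}Q\) is a proposition in \(\cSet_{\Fib}\comma\cat{C}_0\), since \(\PM{U}\) preserves paths by \cref{prop:forgetful:preservation-properties} and fibrancy strengthens by \cref{prop:fibrant-sites:strengthen-fibrancy}. By \cref{prop:lifting-of-lex-operations-over-fibrant-sites:modal-type-characterization}, \(\PM{U}Q\) is also \(\LexOp\)-modal. The map \(\PM{U}A \to \PM{U}Q\), obtained by applying \(\PM{U}\) to the unit, therefore factors through \(\trunc{\PM{U}A}\) by the universal property of the truncation in \(\cSet_{\LexOp}\comma\cat{C}_0\). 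Composing, an element of \(\trunc{\PM{U}A}\) over \(\PM{U}\Gamma\) gives an element of \(\PM{U}Q\) over \(\PM{U}\Gamma\). Since \(\PM{U}\) is just restriction, this is the same as an element of \(Q\) over \(\Gamma\) at the level of underlying families; elements of a presheaf type are families with a naturality condition.

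\emph{The main obstacle} is this last step: a levelwise element of \(\PM{U}Q\) need not be natural. To handle it, use that \(Q\) is a \(\sD\)-modal proposition. Transposing the levelwise element gives a global element of \(\sR\PM{U}Q = \LO{E}Q\). It remains to produce a map \(\LO{E}Q \to Q\), or at least \(\sD Q \to Q\) combined with \(\LO{E}Q \to \sD Q\). I would try first to build a section of \(\sD Q\) from an element \(q\) of \(\PM{U}Q\). Since \(Q\) is a proposition, \(\PM{E}^{n+1}Q\) is levelwise a proposition, so the compatible family \((\PM{E}^{n}\eta)\cdots q\) satisfies the face equations up to coherent paths. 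These can be strictified using the trivial-fibrancy part of \cref{prop:cobar-pseudomorphism:preserves-fibrancy}, applied to the levelwise contractible type \(\PM{U}Q\) over the locus where it is inhabited. Applying \(\tau_Q^{-1}\) then gives the element of \(Q\). If this strictification proves awkward, the fallback is to note that \(\sR(\PM{U}Q)\) is \(\sD\LexOp\)-modal by the previous proposition, and that the map \(Q \to \sR\PM{U}Q\) is an equivalence. The latter follows via \cref{prop:cobar-strictifies-equivalences-between-levelwise-modal-types}, since both sides are propositions and one checks levelwise logical equivalence using \(\varepsilon\) and the restriction of \(Q\).
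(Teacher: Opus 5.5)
Your proposal is correct and follows the paper's proof. The right-to-left direction is exactly the paper's argument through the \(\sD\LexOp\)-modal proposition \(\sR\trunc{\PM{U}A}\). For left-to-right, the paper likewise gets a levelwise element of the \(\LexOp\)-modal proposition \(\PM{U}\trunc{A}\), hence levelwise contractibility, and then applies the trivial-fibrancy part of \cref{prop:cobar-pseudomorphism:preserves-fibrancy} as a black box together with \(\sD\)-modality of \(\trunc{A}\). So no hand-made strictification of face equations, and no map from \(\sR\PM{U}Q\) to \(\sD Q\), is needed. Your fallback also works: showing that \(\eta_Q \co Q \to \sR\PM{U}Q\) is an equivalence for the \(\sD\)-modal proposition \(Q\), via \cref{prop:cofree-presheaf:cobar-modal,prop:cobar-strictifies-equivalences-between-levelwise-modal-types}, is sound.
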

\begin{proof}
  First, we construct a left-to-right map.
  By \cref{prop:lifting-of-lex-operations-over-fibrant-sites:modal-type-characterization}, \(\PM{U}\trunc{A}\) is a \(\LexOp\)-modal proposition.
  Furthermore, there is a map \(\PM{U}A \to \PM{U}\trunc{A}\) and a witness that \(\PM{U}\trunc{A}\) is an h-proposition.
  Hence, we have a chain of logical implications
  \begin{align*}
    \El_{\PSh{\cat*{C}_0}}^{\LexOp}(\PM{U}\Gamma, \trunc{\PM{U}A})
    &\longrightarrow \El_{\PSh{\cat*{C}_0}}^{\LexOp}(\PM{U}\Gamma, \PM{U}\trunc{A}) \tag{universal property \(\trunc{\PM{U}A}\)} \\
    &\longrightarrow \El_{\PSh{\cat*{C}_0}}^{\LexOp}(\PM{U}\Gamma, \IsContr(\PM{U}\trunc{A})) \tag{inhabited proposition} \\
    &\longrightarrow \El_{\PSh{\cat*{C}_0}}^{\Fib}(\PM{U}\Gamma, \IsContr(\PM{U}\trunc{A})) \tag{contractible types agree} \\
    &\longrightarrow \El_{\PSh{\cat*{C}}}^{\sD}(\Gamma, \IsContr(\trunc{A})) \tag{\cref{prop:cobar-pseudomorphism:preserves-fibrancy}} \\
    &\longrightarrow \El_{\PSh{\cat*{C}}}^{\sD\LexOp}(\Gamma, \trunc{A}). \tag{center of contraction}
  \end{align*}
  Second, we construct a right-to-left map.
  By \cref{prop:cofree-presheaf:sends-levelwise-modal-types-for-base-model-modality-to-modal-types-for-combined-modality}, there is a logical equivalence
  \[
    \El_{\cSet_{\Fib} \comma \cat{C}_0}^{\LexOp}(\PM{U}\Gamma, \trunc{\PM{U}A})
    \longleftrightarrow
    \El_{\PSh{\cat*{C}}}^{\sD\LexOp}(\Gamma, \sR\trunc{\PM{U}A}).
  \]
  By composing \(\eta_A \co A \to \sR \PM{U}A\) and \(\sR\eta_{\PM{U}A} \co \sR \PM{U}A \to \sR\trunc{\PM{U}A}\), there is a map \(A \to \sR\trunc{\PM{U}A}\).
  Since \(\sR\trunc{\PM{U}A}\) is an h-proposition, the claim follows from the universal property of \(\trunc{A}\).
\end{proof}
We deduce that surjections are characterized levelwise, allowing us to lift dependent choice.

\begin{corollary}\label[corollary]{prop:lifting-of-lex-operations-over-fibrant-sites:surjections-are-levelwise}
  Naturally in \(\Gamma \in \PSh{\cat*{C}}\), given \(A, B \in \Ty^{\sD\LexOp}(\Gamma)\) and \(f \co A \to B\), there is a logical equivalence
  \[
    \El_{\PSh{\cat*{C}}}^{\sD\LexOp}(\Gamma, \IsSurj(f))
    \longleftrightarrow
    \El_{\cSet_{\Fib} \comma \cat{C}_0}^{\LexOp}(\PM{U}\Gamma, \IsSurj(\PM{U}f)).
  \]
\end{corollary}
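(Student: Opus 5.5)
We reduce the statement to \cref{prop:lifting-of-lex-operations-over-fibrant-sites:levelwise-principle} applied to the fibres of \(f\). Recall that \(\IsSurj(f)\) is, by definition, \(\prod_{b \co B} \trunc{\fib_f(b)}\). Passing to the extended context \(\Gamma.B\), an element of \(\El^{\sD\LexOp}_{\PSh{\cat*{C}}}(\Gamma, \IsSurj(f))\) is the same as an element of \(\El^{\sD\LexOp}_{\PSh{\cat*{C}}}(\Gamma.B, \trunc{\fib_f(\q)})\) by \(\lambda\)-abstraction and application. Since \(\sD\LexOp\)-modal types are closed under \(\Sigma\) and identity types, \(\fib_f(\q) \in \Ty^{\sD\LexOp}(\Gamma.B)\), so the proposition applies in context \(\Gamma.B\). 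It yields a logical equivalence, natural in the context, with \(\El^{\LexOp}_{\cSet_{\Fib}\comma\cat{C}_0}(\PM{U}(\Gamma.B), \trunc{\PM{U}\fib_f(\q)})\).

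Next we identify the right-hand side with \(\IsSurj(\PM{U}f)\) in context \(\PM{U}\Gamma\). This uses three facts.
\begin{itemize}
\item \(\PM{U}\) is a pseudomorphism, so it commutes with context extension up to the canonical isomorphism: \(\PM{U}(\Gamma.B) \cong \PM{U}\Gamma.\PM{U}B\).
\item By \cref{prop:forgetful:preservation-properties}, \(\PM{U}\) strictly preserves \(\Sigma\)-types and path types, so \(\PM{U}(\fib_f(\q)) = \fib_{\PM{U}f}(\q)\).
\item Elements in the extended context \(\PM{U}\Gamma.\PM{U}B\) correspond to elements of the \(\Pi\)-type over \(\PM{U}B\) in context \(\PM{U}\Gamma\). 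This holds in the model \(\cSet_{\LexOp}\comma\cat{C}_0\), which is closed under \(\Pi\).
\end{itemize}
Composing these, \(\El^{\LexOp}(\PM{U}\Gamma.\PM{U}B, \trunc{\fib_{\PM{U}f}(\q)})\) is in natural bijection with \(\El^{\LexOp}(\PM{U}\Gamma, \IsSurj(\PM{U}f))\).

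Naturality in \(\Gamma\) follows because each step is natural under substitution. The proposition is natural in its context, so it is in particular natural for substitutions of the form \(\sigma^+ \co \Delta.B\sigma \to \Gamma.B\). The \(\Pi\)/currying bijections and the strict preservation by \(\PM{U}\) are also stable under substitution.

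The main point requiring care is that the truncation \(\trunc{-}\) on the right-hand side is the one of the base model \(\cSet_{\LexOp}\comma\cat{C}_0\), taken in context \(\PM{U}\Gamma.\PM{U}B\). We have to check that it commutes with the substitution used to identify \(\PM{U}(\Gamma.B)\) with \(\PM{U}\Gamma.\PM{U}B\). This holds because truncation in an inner model is stable under substitution. Everything else is bookkeeping with strict equalities, so no homotopical argument is needed beyond \cref{prop:lifting-of-lex-operations-over-fibrant-sites:levelwise-principle}.
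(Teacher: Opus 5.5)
Your proposal is correct and follows essentially the same route as the paper. Like the paper, you rewrite surjectivity as mere inhabitation of the fibres in context $\Gamma.B$, apply \cref{prop:lifting-of-lex-operations-over-fibrant-sites:levelwise-principle} there, and use that $\PM{U}$ preserves path types (hence homotopy fibres) to identify $\PM{U}(\Fiber{f})$ with $\Fiber{\PM{U}f}$; the extra bookkeeping on context extension, currying and substitution-stability of truncation only spells out what the paper leaves implicit.
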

\begin{proof}
  A map is surjective if all its fibers are merely inhabited.
  We have a chain of logical equivalences
  \begin{align*}
    \El_{\PSh{\cat*{C}}}^{\sD\LexOp}(\Gamma.B, \trunc{\Fiber{f}})
    &\longleftrightarrow
    \El_{\cSet_{\Fib} \comma \cat{C}_0}^{\LexOp}(\PM{U}\Gamma.\PM{U}B, \trunc{\PM{U}\paren{\Fiber{f}}}) \\
    &\longleftrightarrow
    \El_{\cSet_{\Fib} \comma \cat{C}_0}^{\LexOp}(\PM{U}\Gamma.\PM{U}B, \trunc{{\Fiber{\PM{U}f}}}).
  \end{align*}
  Here we use that \(\PM{U}\) preserves homotopy fibers since it preserves paths.
\end{proof}

Recall the axiom of dependent choice: every tower
\(\begin{tikzcd}[column sep=0.35cm]
  {A_0}
&
  {A_1}
  \ar[l, two heads]
&
  {\cdots}
  \ar[l, two heads]
\end{tikzcd}\)
of surjections \(f_i \co A_{i+1} \twoheadrightarrow A_i\) composes to a surjection \(\pi_0 \co \lim_{n \in \omega^\op} A_n \to A_0\).
In a model of type theory with products indexed by \(\sN\) and identity types, this limit is constructed as the dependent sum indexed by \(u \co \prod_{n : \sN} A_n\) of \(\prod_{n : \sN} f_n (u_{n+1}) \simeq_{A_n} u_{n}\).
The axiom of dependent choice is then stated as a schema: naturally in \(\Gamma\), given \(A \in \Ty(\Gamma.\sN)\), \(f \in \El(\Gamma.(n\co \sN), A_{n+1} \to A_n)\), and an element of \(\IsSurj(f) \in \Ty(\Gamma.\sN)\), we have an element of \(\IsSurj(\pi_0) \in \Ty(\Gamma)\).

\begin{proposition}
  If the model \(\CubicalSet_{\LexOp}\) satisfies dependent choice, then so does \(\PSh{\cat*{C}}_{\sD\LexOp}\).
\end{proposition}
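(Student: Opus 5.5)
The plan is to reduce dependent choice in \(\PSh{\cat*{C}}_{\sD\LexOp}\) to dependent choice in \(\CubicalSet_{\LexOp} \comma \cat{C}_0\) by applying \(\PM{U}\), and then to use that surjections are detected levelwise (\cref{prop:lifting-of-lex-operations-over-fibrant-sites:surjections-are-levelwise}).

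Suppose we are given, naturally in \(\Gamma \in \PSh{\cat*{C}}\), a family \(A \in \Ty^{\sD\LexOp}(\Gamma.\sN)\), maps \(f\) and a witness of \(\IsSurj(f)\). The first step is to apply \(\PM{U}\). Here we must check that \(\PM{U}\) carries the natural numbers type of \(\PSh{\cat*{C}}_{\sD\LexOp}\) to one that can be used as an index in the base, or else reindex along the unit \(\sN_{\mathrm{const}} \to \sN\) from the constant (levelwise) natural numbers. I expect this to be harmless because the relevant products \(\prod_{n:\sN}\) are computed levelwise. By \cref{prop:lifting-of-lex-operations-over-fibrant-sites:surjections-are-levelwise}, applied in context \(\Gamma.\sN\), the surjectivity witness transfers to \(\IsSurj(\PM{U}f)\) in \(\cSet_{\Fib} \comma \cat{C}_0\), where \(\PM{U}A\) is \(\LexOp\)-modal by \cref{prop:lifting-of-lex-operations-over-fibrant-sites:modal-type-characterization}.

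Next we interpret \(\cSet_{\LexOp}\comma\cat{C}_0\) as the model \(\cSet_{\LexOp}\) in contexts of the form \(\cat{C}_0.\PM{U}\Gamma\), as in \cref{prop:lifting-of-lex-operations-over-fibrant-sites:weaken-fibrancy}. Dependent choice in \(\cSet_{\LexOp}\), being a schema natural in the context, then gives that \(\pi_0 \co \lim_n \PM{U}A_n \to \PM{U}A_0\) is surjective. We then identify \(\lim_n \PM{U}A_n\) with \(\PM{U}(\lim_n A_n)\) and \(\pi_0\) with \(\PM{U}\pi_0\). This holds because the limit is built from \(\Sigma\), \(\prod_{n:\sN}\) and path types, and \(\PM{U}\) preserves all of these strictly (\cref{prop:forgetful:preservation-properties}). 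Finally, applying \cref{prop:lifting-of-lex-operations-over-fibrant-sites:surjections-are-levelwise} in the other direction yields \(\IsSurj(\pi_0)\) in \(\PSh{\cat*{C}}_{\sD\LexOp}\).

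The main obstacle is the bookkeeping around \(\sN\). In \(\PSh{\cat*{C}}_{\sD\LexOp}\), \(\sN\) is a higher inductive type, made \(\sD\LexOp\)-modal, so \(\PM{U}\sN\) need not literally be the base \(\sN\). Moreover, the limit \(\lim_n A_n\) must be formed in the inner model, and we need it to agree strictly with the levelwise limit. My first attempt would be to use the canonical map \(\sN_0 \to \sN\) from the strict levelwise natural numbers, which is a levelwise equivalence. The tower can be pulled back along this map, and products over \(\sN\) of modal types agree up to equivalence with products over \(\sN_0\). That equivalence is preserved by \(\PM{U}\), and surjectivity is invariant under it. The remaining steps are routine naturality checks in \(\Gamma\).
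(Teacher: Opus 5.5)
Your proposal is correct and follows essentially the same route as the paper: reduce via \cref{prop:lifting-of-lex-operations-over-fibrant-sites:surjections-are-levelwise} to showing that \(\PM{U}\) preserves the inverse limit. Then use that \(\PM{U}\) preserves paths (\cref{prop:forgetful:preservation-properties}), and handle the \(\sN\)-indexed products by identifying \(\sN\) with the \(\sD\LexOp\)-replacement of a constant presheaf of natural numbers, over which products are computed levelwise. One small precision: your map \(\sN_0 \to \sN\) is a levelwise equivalence only when \(\sN_0\) is the constant presheaf at the \(\LexOp\)-modal natural numbers of the base. For the strict cubical \(\sN\) it is merely the replacement unit, though restriction along it still gives the needed equivalence on products into modal families.
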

\begin{proof}
  By \cref{prop:lifting-of-lex-operations-over-fibrant-sites:surjections-are-levelwise}, it suffices to show that \(\PM{U}\) preserves the inverse limit in the statement of dependent choice.
  Since \(\PM{U}\) preserves paths by \cref{prop:forgetful:preservation-properties}, it remains to show that it preserves \(\sN\)-indexed products.
  This follows since the type \(\sN\) in \(\PSh{\cat*{C}}_{\sD\LexOp}\) is equivalent to the \(\sD\LexOp\)-replacement of the constant presheaf at the natural numbers. 
\end{proof}

\section{Applications to synthetic mathematics}\label[section]{sec:duality}

We apply our previous results to construct models of axiom systems for synthetic mathematics.
For an equational theory, we validate in particular the duality axiom in the model \(\PSh{\cat*{C}}_{\LO{D}\LO{T}}\) for a suitable cubical category \(\cat{C}\), parametrized by a descent data operation \(\LexOp\) on \(\cSet_{\Fib}\).
This axiom is due to Blechschmidt~\cite[Theorem 4.10]{Blechschmidt2019}.
A version of the axiom (where not all quantifications are internalized to the topos) and related statements also appear in earlier work of Kock~\cite{Kock1981,Kock2014duality}.
For some applications, the model of \(\cSet_{\LexOp}\) needs to satisfy dependent choice, for which we can choose \(\LexOp\) as the lex operation defined by Sattler~\cite{Sattler2025Types}.
Our general strategy is to adapt the \(1\)-categorical arguments to our setting~\cite{Blechschmidt2019,CherubiniCoquandHutzler2024}.

\subsection{The base category}

We work internally to \(\CubicalSet\).
In this section, we define the base category for our presheaf model construction.
It should consist of models of the theory in h-sets, but still be a strict category.
Given a universe \(\Univ\), we denote the induced universe of fibrant types by \(\Univ_{\Fib*}\) and its subuniverse of \(\LexOp\)-modal types by \(\Univ_{\LexOp}\).

\subsubsection{Models}\label[subsubsection]{sec:duality:models}
We fix a theory \(\bbT\) given by a single sort, types \(O_n \co \Univ_{0, \Fib*}\) of \(n\)-ary operations for \(n \co \sN\), and universally quantified equations involving these operations indexed by types in \(\Univ_{0, \Fib*}\).

A \emph{\(\bbT\)-model} is a \(\LexOp\)-modal h-set \(A\) with a map \(o_A \co A^n \to A\) for each \(o \co O_n\) and homotopies witnessing the equations.
A morphism of models is given by a function \(h \co A \to B\) with a homotopy \(h \cc o_A \sim o_B \cc h^n\) for each \(o \co O_n\).
Models and their morphisms in a universe \(\Univ_{\LO{T}}\) form a \(\LexOp\)-modal category \(\Mod*[\bbT](\Univ_{\LexOp})\) in the sense of \(\HoTT\) (\ie, a category internal to the model of \(\LexOp\)-modal types).

In fibrant types, colimits and free \(\bbT\)-models can be constructed using quotients of h-sets and finitary inductive types.%
\footnote{These can, for example, be constructed using higher inductive types as justified by~\cite{CoquandHuberMortberg2018}.}
The submodel of \(\LexOp\)-modal types inherits these universal constructions by \(\LexOp\)-modal replacement.
We denote coproducts of \(\bbT\)-models by \(\otimes_{\bbT}\) and free models on a type \(X\) by \(\sF_{\bbT} X\), where we omit the subscript if it is clear from context.

\subsubsection{Presentations}
Let \(\UKappa\) be a subuniverse of \(\Univ_{0, \LexOp}\) of projective h-sets (\wrt~\(\LexOp\)-modal types) containing all finite types and closed under dependent sums.
Examples include the finite sets and (provided that the submodel of \(\LexOp\)-modal types satisfies countable choice) the universe of decidable subtypes of \(\sN\), which is constructively a good notion of countable type.

\begin{definition}
  A \emph{\(\UKappa\)-presentation} consists of h-sets \(X, R \co \UKappa\) with maps \(s, t \co R \rightrightarrows \sF X\).
  The presented \(\bbT\)-model \(\gen{ X }[ s \simeq t ]_{\bbT}\) is the coequalizer of the parallel maps \(\sF R \rightrightarrows \sF X\) induced by \(s,t\).
  A model is \emph{\(\UKappa\)-presentable} if it is merely equivalent to one of the above form.
  We denote the type of \(\UKappa\)-presentations by \(\Pr_{\bbT,\kappa}\) and the category of \(\UKappa\)-presentable models by \(\Mod*[\bbT,\kappa](\Univ_{\LexOp})\).
\end{definition}

\begin{remark}\label{rem:category-of-models-size}
  Up to equivalence, the category \(\Mod*[\bbT,\kappa](\Univ_{\LexOp})\) does not depend on the chosen universe \(\Univ\), as long as it is closed under quotients of h-sets and finitary inductive types.
  If \(\Univ\) also satisfies replacement (in the sense of Rijke~\cite{Rijke2017joinconstruction}) and is closed under the usual type formers, the category \(\Mod*[\bbT,\kappa](\Univ_{\LexOp})\) is \(\Univ_{\LexOp}\)-small. 
  Henceforth, we will omit the reference to \(\Univ\).
\end{remark}

We now define the base category for our model.
This will be a strictification of the above category of \(\UKappa\)-presentable models up to homotopy.
This should be viewed as a technical rectification construction (akin to passing from quasicategories to simplicial categories) that improves strictness properties, but does not change the presented higher object.

\begin{construction}
  We define the opposite of the (strict) category \(\cat{C}\) in \(\Univ_{0, \LexOp}\).
  The objects are, up to equivalence, \(\UKappa\)-presentable \(\bbT\)-models.
  A morphism between \(\bbT\)-models \(A\) and \(B\) is a function \(h \co A \to B\) with, for each \(o \co O_n\), a map
  \[
    o_h \co {\prod_{\substack{ u \co A^n,\, v \co A }}} o_A(u) \simeq_A v \longrightarrow o_B\paren[\big]{ h^n(u) } \simeq_B h(v).
  \]
  Identities are given by identities in each component.
  The composition of morphisms \(g\) and \(h\) is given by the composition of the underlying functions and \(o_{h \cc g}(u, v) \coloneqq o_h\paren[\big]{ g(u), g(v) } \cc o_g(u, v)\).
\end{construction}
\begin{proof}
  By \cref{rem:category-of-models-size}, the universe \(\Univ_{0,\LexOp}\) has a type classifying, up to equivalence, \(\UKappa\)-presentable \(\bbT\)-models.
  Functions compose strictly.
  Thus, composition is strictly unital and associative.
\end{proof}

\begin{lemma}\label{prop:duality:strictification}
  We have an equivalence of categories \(\cat{C}^{\op} \simeq \Mod*[\bbT,\kappa]\) (in the sense of \(\HoTT\)).
\end{lemma}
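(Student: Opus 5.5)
We construct a functor \(\Phi \co \cat{C}^{\op} \to \Mod*[\bbT,\kappa]\) that is the identity on objects and forgets the extra data on morphisms. We then show it is essentially surjective and fully faithful; by the HoTT characterization of equivalences of (pre)categories, that suffices. Since \(\cat{C}\) is a strict category in \(\Univ_{0,\LexOp}\) whose hom-types need not be h-sets a priori, we first regard it as a precategory in the sense of \(\HoTT\) by checking that its hom-types are h-sets.

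Concretely, \(\Phi\) sends a \(\cat{C}\)-morphism \((h, (o_h)_o)\) to the pair \((h, (\lambda u.\, o_h(u, o_A(u), \mathsf{refl})))\), where the second component is the homotopy \(h \cc o_A \sim o_B \cc h^n\) (up to symmetry). Functoriality holds strictly on the underlying functions. On the homotopy components it holds up to paths, and these suffice because the target's hom-types are h-sets: the carriers are h-sets, so the homotopy components are propositions. Essential surjectivity is immediate, since objects of \(\cat{C}\) are (up to equivalence) exactly the \(\UKappa\)-presentable models, by \cref{rem:category-of-models-size}.

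Fully faithfulness is the main step. We must show that for fixed \(h\), the type of families
\[
\prod_{u \co A^n,\, v\co A} o_A(u) \simeq_A v \to o_B(h^n u) \simeq_B h(v)
\]
is equivalent, via evaluation at \((u, o_A(u), \mathsf{refl})\), to \(\prod_{u} o_B(h^n u) \simeq_B h(o_A u)\). The plan is to curry: \(\sum_{v} o_A(u)\simeq v\) is contractible (a based path space), so by the universal property of contractible types / path induction, maps out of it are equivalent to their value at the center. This is exactly the Yoneda-style equivalence from path induction, applied pointwise in \(u\) and combined with function extensionality. Consequently each hom-type of \(\cat{C}\) is equivalent to a hom-type of \(\Mod*[\bbT,\kappa]\), hence an h-set, which also retroactively justifies treating \(\cat{C}\) as a precategory.

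The main obstacle I expect is bookkeeping rather than mathematics. We must check that this equivalence of hom-types is exactly the action of \(\Phi\), and that composition in \(\cat{C}\) (defined by composing the \(o_h\)) maps to the composite homotopy \(\mathsf{ap}_h\) followed by concatenation in \(\Mod*[\bbT,\kappa]\). The latter is a path computation at \(\mathsf{refl}\) using \(o_g(u, o_A u, \mathsf{refl})\) and transport. It is harmless because the target homotopies are propositional, so any two witnesses agree. If \(\Mod*[\bbT,\kappa]\) is meant to be univalent, we conclude the equivalence of precategories \(\cat{C}^{\op}\simeq\Mod*[\bbT,\kappa]\) as stated (and \(\cat{C}^{\op}\) presents its Rezk completion). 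All constructions stay within \(\LexOp\)-modal types since these are closed under \(\Pi\), \(\Sigma\) and identity types.
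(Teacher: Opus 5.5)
Your proposal is correct and follows the paper's proof. The key step is exactly the singleton contraction of \(\sum_{v} o_A(u) \simeq_A v\), which identifies the data \(o_h\) with a homotopy \(h \cc o_A \sim o_B \cc h^n\) and so makes the hom-types of \(\cat{C}^{\op}\) and \(\Mod*[\bbT,\kappa]\) equivalent; the rest is bookkeeping the paper leaves implicit. One small precision: for precategories, fully faithful plus essentially surjective gives an equivalence only when essential surjectivity is split or the domain is univalent. That holds here because the object map is an equivalence of types, so your hedge about univalence of \(\Mod*[\bbT,\kappa]\) is unnecessary.
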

\begin{proof}
  By contracting the singleton formed by \(v\) and \(o_A(u) \simeq_A v\).
\end{proof}

By construction, the category \(\cat{C}\) has a type of objects and family of morphisms that are fibrant and \(\LexOp\)-modal.
For the rest of the paper, we instantiate \cref{ass:cat,ass:fibrant-cat} with \(\cat{C}\).

\subsubsection{Algebras}
For stating and verifying the axioms, we will need the following notion.

\begin{definition}\label[definition]{def:duality:algebra}
  Given a \(\bbT\)-model \(A\), an \emph{\(A\)-algebra} is a \(\bbT\)-model \(B\) with a morphism \(A \to B\).
\end{definition}

Note that \(A\)-algebras are again models of an algebraic theory \(\bbT/A\), obtained by adjoining new constants \(c_a\) for each \(a \co A\) and equations forcing that \(h(a) \coloneqq c_a\) defines a morphism from \(A\).
This yields a notion of \(\UKappa\)-presentable \(A\)-algebra.
Under this identification, the coprojection \(A \to \sF_{\bbT} X \otimes_{\bbT} A\) corresponds to the free \(\bbT/A\)-model on \(X\).

\begin{lemma}\label[lemma]{prop:duality:combine-presentable-models}
  If \(A\) is a \(\UKappa\)-presentable \(\bbT\)-model and \(B\) is a \(\UKappa\)-presentable \(A\)-algebra, then \(B\) is also a \(\UKappa\)-presentable \(\bbT\)-model.
\end{lemma}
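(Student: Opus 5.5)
Since the statement is a mere proposition, we may assume given presentations throughout, and the plan is to splice the two presentations together. Since being \(\UKappa\)-presentable is merely having an equivalence to a presented model, we may fix a \(\UKappa\)-presentation \((X, R, s, t)\) of \(A\) as a \(\bbT\)-model, and a \(\UKappa\)-presentation \((Y, S, s', t')\) of \(B\) as a \(\bbT/A\)-model. In the latter, \(s', t' \co S \rightrightarrows \sF_{\bbT/A} Y\), and by the remark after \cref{def:duality:algebra} we have \(\sF_{\bbT/A} Y \simeq \sF_{\bbT} Y \otimes_{\bbT} A\).

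The candidate presentation of \(B\) as a \(\bbT\)-model has generators \(X + Y\) and relations \(R + S\). On \(R\) we take \(s, t\) composed with the inclusion \(\sF X \to \sF(X+Y)\). On \(S\) we need lifts of \(s', t'\) along the quotient map
\[
  q \co \sF(X+Y) \simeq \sF X \otimes \sF Y \longrightarrow A \otimes \sF Y.
\]
This map is surjective, as a coproduct of a surjection with an identity. Here \(S\) is a projective h-set because \(S \co \UKappa\). We can therefore merely choose lifts \(\tilde s, \tilde t \co S \to \sF(X+Y)\) with \(q \cc \tilde s \sim s'\) and \(q \cc \tilde t \sim t'\). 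This is the step where projectivity of \(\UKappa\) is used, and the ambient goal being a proposition lets us eliminate the mere existence. Closure of \(\UKappa\) under dependent sums, which includes binary coproducts as \(\Sigma\) over a finite type, gives \(X + Y, R + S \co \UKappa\).

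It remains to verify that \(\gen{X+Y}[R + S]_{\bbT} \simeq B\). I would do this by comparing universal properties, which is cleaner than manipulating the coequalizers directly. A \(\bbT\)-model map from the presented model to a model \(C\) consists of a map \(g \co X + Y \to C\) such that \(g|_X\) respects the relations \(R\), and \(g\) respects the relations \(S\). The first two pieces of data together amount to a morphism \(h \co A \to C\) and a map \(Y \to C\), i.e.\ a \(\bbT/A\)-algebra structure on \(C\) together with a map \(Y \to C\). Given this, respecting \(\tilde s, \tilde t\) is the same as respecting \(s', t'\) as \(\bbT/A\)-relations, because the induced map \(\sF(X+Y) \to C\) factors through \(q\).

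Hence maps out of the presented model correspond naturally to \(\bbT/A\)-maps out of \(B\), and the forgetful map from \(\bbT/A\)-maps to \(\bbT\)-maps is an equivalence, compatibly with the structure map. The Yoneda lemma in the (univalent, h-set-level) category of models then yields the equivalence. All these constructions live in \(\LexOp\)-modal types, using that colimits there are obtained by \(\LexOp\)-modal replacement. The universal properties hold relative to \(\LexOp\)-modal \(C\), which is all that is needed.

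I expect the main obstacle to be bookkeeping rather than ideas. Specifically, one has to check carefully that ``factoring through \(q\)'' transfers the relation-respecting condition, including the homotopies \(q \cc \tilde s \sim s'\), at the level of h-sets. This should be straightforward since everything is set-truncated, so the homotopy coherence issues disappear.
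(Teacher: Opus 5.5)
Your proposal is correct and takes the same approach as the paper, whose entire proof is the phrase ``by projectivity of types in \(\UKappa\)''. Splicing the two presentations and using projectivity of \(S\) to lift \(s', t'\) along the surjection \(\sF(X+Y) \to A \otimes \sF Y\) is exactly the unpacking of that phrase, and your universal-property check is sound (it needs the singleton contraction that identifies pairs of a map \(h \co A \to C\) and a \(\bbT/A\)-map out of \(B\) with plain \(\bbT\)-maps out of \(B\)).
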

\begin{proof}
  By projectivity of types in \(\UKappa\).
\end{proof}

\subsection{The generic model}\label[subsection]{sec:duality:generic-model}

We work internally to \(\cSet\).
We adapt the construction of a classifying topos for an algebraic theory (see, \eg,~\cite[D3.1, Theorem 3.1.1]{Johnstone2002Elephant2}) to our setting: we construct a \(\bbT\)-model in \(\Mod*[\bbT,\kappa]^{\mathsf{op}}\) and then take its image under the Yoneda embedding to obtain a model in presheaves.
Crucially, our base category only has products up to homotopy.

\begin{lemma}\label[lemma]{prop:duality:site-has-weak-products}
  For \(I \co \Univ_{\Fib*}\) and \(A \co I \to \Mod*[\bbT,\kappa]\), the following comparison map is a levelwise equivalence:
  \(
    \pair{\yo \iota_i}_{i: I} \co \yo \paren[\big]{{\textstyle\bigotimes_{i: I}A}} \longrightarrow \prod_{i\co I} \yo A_i.
  \)
\end{lemma}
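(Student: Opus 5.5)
We unfold what the comparison map is at a single level \(x \co \cat{C}_0\), and then reduce the claim to the universal property of the coproduct in \(\Mod*[\bbT,\kappa]\). Since \(\cat{C}\) is the opposite of the strictified category of models, the representable \(\yo B\) at an object \(x\) is the type of morphisms \(B \to x\) in the strict sense of the construction. These are pairs of a function \(h\) together with the maps \(o_h\).

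The claim is that, for each \(x\), precomposition with the coprojections \(\iota_i \co A_i \to \bigotimes_{i:I} A_i\) gives an equivalence
\[
  \cat{C}_1\paren[\big]{x, {\textstyle\bigotimes_{i:I}A_i}} \longrightarrow \prod_{i \co I} \cat{C}_1(x, A_i).
\]
Products of presheaves are computed levelwise, so \((\prod_{i} \yo A_i)_x = \prod_i (\yo A_i)_x\). The levelwise component of \(\pair{\yo\iota_i}_i\) is therefore exactly this precomposition map. Note that only levelwise equivalence is claimed, so no coherence with restriction maps is needed beyond strict functoriality, which holds by construction.

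The plan is in three steps.

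\begin{enumerate}
  \item \textbf{Pass to homotopical hom-types.} By \cref{prop:duality:strictification}, the strict hom-type \(\cat{C}_1(x, B)\) is equivalent to the \(\HoTT\) hom-type \(\mathrm{Hom}_{\Mod*[\bbT]}(B, x)\), by contracting the singletons \(\Sigma_v\, o_A(u) \simeq v\). These equivalences should be natural in \(B\) with respect to precomposition. Composition in \(\cat{C}\) is compatible with composition of model morphisms under the singleton contraction, so the square relating the strict and homotopical precomposition maps commutes up to homotopy.
  \item \textbf{Apply the universal property.} The claim then reduces to the universal property of the coproduct \(\bigotimes_{i:I} A_i\) in the \(\HoTT\) category \(\Mod*[\bbT](\Univ_{\LexOp})\): precomposition with the coprojections gives \(\mathrm{Hom}(\bigotimes_i A_i, x) \simeq \prod_i \mathrm{Hom}(A_i, x)\). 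This holds because \(\otimes\) is constructed as the colimit of models, via quotients and then \(\LexOp\)-modal replacement, whose universal property is stated against \(\LexOp\)-modal h-set models. Since \(x\) is such a model, the equivalence applies.
  \item \textbf{Check that \(\bigotimes_i A_i\) is an object.} We must confirm that \(\bigotimes_{i:I}A_i\) is \(\UKappa\)-presentable, so that it is an object of \(\cat{C}\). Presumably \(I\) ranges over \(\UKappa\) here, or this is implicit. Then one takes the disjoint union of the presentations: \(\Sigma_{i:I} X_i\) and \(\Sigma_{i:I} R_i\) lie in \(\UKappa\) by closure under dependent sums. Choosing presentations for all \(i\) at once uses projectivity of \(I\).
\end{enumerate}

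The main obstacle is step 1. It requires that the equivalence of \cref{prop:duality:strictification} is natural enough for the strict precomposition map to correspond to the homotopical one. It also requires that each level is a fibrant type, so that ``equivalence'' is meaningful; fibrancy holds by \cref{ass:fibrant-cat}. I would handle step 1 by writing the singleton contraction explicitly and noting that the component \(o_{h\circ g}\) contracts to the concatenated homotopy, which is precisely how composition of model morphisms is defined.
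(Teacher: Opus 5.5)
Your proposal is correct and is essentially the argument the paper intends; the lemma is stated there without proof. At each level \(x\), \(\pair{\yo\iota_i}_{i:I}\) is precomposition with the coprojections on strict model morphisms into \(x\), and via \cref{prop:duality:strictification} this is the universal property of \(\bigotimes_{i:I}A_i\) against the \(\LexOp\)-modal h-set model \(x\). Two remarks on your steps. Step~1 is easier than you expect: since \(x\) is an h-set, the extra data \(o_h\) is a proposition, so both hom-types embed into the function type and precomposition acts strictly on underlying functions. You are also right that the statement tacitly requires \(\bigotimes_{i:I}A_i\) to be \(\UKappa\)-presentable. Your step~3 supplies this for \(I \co \UKappa\), which covers the finite case the paper actually uses.
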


The notion of \(\bbT\)-model from \cref{sec:duality:models} makes sense in every category \(\cat{C}\) with homogeneously fibrant hom-spaces and weak finite products, meaning that the map \(\yo (A \times B) \to \yo A \times \yo B\) is levelwise an equivalence.\footnote{We only require fibrancy since theories descend in the expected way to lex reflective subuniverses.}
A model is given by an object \(A\) such that each hom-space into \(A\) is an h-set, a morphism \(A^n \to A\) for each \(o \co O_n\), and for each equation a path in the corresponding hom-h-set.
A morphism of \(\bbT\)-models is a morphism such that the evident squares commute up to path in the hom-h-set.

\begin{construction}\label[construction]{prop:duality:generic-model-base}
  The object \(\sF 1 \co \Mod*[\bbT,\kappa]^{\mathsf{op}}\) has canonically the structure of a \(\bbT\)-model.
\end{construction}
\begin{proof}
  For each \(o \co O_n\), we have to give a morphism \(\sF 1 \to \sF n\).
  We pick the morphism associated to the element \(o_{\sF n}(1, \ldots, n) \co \sF n\).
  The required paths exist by the construction of the free model.
\end{proof}

\begin{construction}
  There is a \(\bbT\)-model with underlying object \(\LO{D}\paren{\yo (\sF 1)}\) in \(\PSh*{\cat{C}}\).
  We denote this \(\bbT\)-model by \(\sG_{\bbT}\) (or \(\sG\) if clear from context) and refer to it as the \emph{generic \(\bbT\)-model}.
\end{construction}
\begin{proof}
  By \cref{prop:duality:site-has-weak-products}, the image of \(\sF 1\) under \(\sD \cc \yo\) has canonically the structure of a \(\bbT\)-model.
  It is an h-set by \cref{prop:cobar-strictifies-n-types} since \(\Mod*[\bbT,\kappa]^{\mathsf{op}}(A, \sF 1)\) is an h-set for all \(A\).
\end{proof}

\begin{remark}\label[remark]{rem:duality:generic-model-characterization}
  We define \(\sG_0 \co \PSh*{\cat{C}_0}\) by \(\sG_0(A) \coloneqq A\), which is a family of \(\LO{T}\)-modal types and has pointwise a \(\bbT\)-model structure.
  There is a canonical equivalence \(\sG(A) \simeq A\) given by the component of \(\tau_{\yo (\sF 1)} \co \yo\paren{\sF 1} \to \sG\) at \(A\) and the equivalence \(\Mod*(\sF 1, A) \simeq A\).
  This equivalence preserves the \(\bbT\)-model structure.
  Thus, \(\PM{U}\sG\) and \(\sG_0\) are equivalent \(\bbT\)-models in \(\PSh*{\cat{C}_0}\).
\end{remark}

\subsection{Statement of the axioms}\label[subsection]{sec:duality:axioms}

The theory \(\bbT\) lifts to a theory in \(\PSh{\cat*{C}}\) by taking constant cubical presheaves (note that constant presheaves on a fibrant cubical set are fibrant).
The definition of a \(\bbT\)-model internally to \(\PSh{\cat*{C}}\) agrees with the notion of model in a category with weak finite products from \cref{sec:duality:generic-model}.
In presheaves, we will only consider \(\LO{DT}\)-modal \(\bbT\)-models.
Free \(\bbT\)-models and colimits of \(\bbT\)-models exist in \(\PSh{\cat*{C}}_{\LO{D}\LO{T}}\) by the same reasoning as in \Cref{sec:duality:models}.
By construction, \(\yo (\sF 1)\) is levelwise \(\LexOp\)-modal and thus \(\sG\) is \(\sD\LexOp\)-modal by \ref{prop:lifting-of-lex-operations-over-fibrant-sites:modal-type-characterization}.
\Cref{def:duality:algebra} yields a notion of \(\sG\)-algebra internally to \(\PSh{\cat*{C}}\).

\begin{definition}[Internally to \(\PSh{\cat*{C}}_{\LO{D}\LO{T}}\)]
  Given a \(\sG\)-algebra \(h \co \sG \to A\), its \emph{spectrum} \(\Spec(A)\) is the type \(\Mod*[\bbT/\sG](A, \sG) = \sum_{r\co \Mod*[\bbT](A, \sG)} r \cc h \sim \id_{\sG}\), with associated \emph{evaluation map} \(\ev_A \co \Spec(A) \times A \to \sG\) sending \((r, a)\) to \(r(a)\).
\end{definition}

Similarly, since presheaves constant on a fibrant cubical set are fibrant, the universe \((\UKappa, \El*_{\kappa})\) yields a fibrant, levelwise \(\LexOp\)-modal universe of fibrant, levelwise \(\LexOp\)-modal types in presheaves.
The replacement \((\sD \UKappa, \widetilde{\sD}\El*_{\kappa})\) yields a \(\sD\LexOp\)-modal universe of \(\sD\LexOp\)-modal types.%
\footnote{We do not explain how to lift the closure properties of \(\UKappa\); these are not required for our development.}
For finite and countable types, this universe agrees with the expected one in \(\PSh{\cat*{C}}_{\LO{D}\LO{T}}\).
\footnote{%
  In the countable case, we use that \(\sD\) preserves products (\cf \Cref{prop:cobar-pseudomorphism:preservation-properties}) to show that the comparison map \(\sD\kappa \simeq \sD\paren[\big]{ \paren{\LexOp 2}^{\paren{\LexOp \sN}} } \to \paren{\sD\LexOp 2}^{\paren{\sD\LexOp \sN}}\) is an equivalence.
  Writing \(\varepsilon\) for the evaluation map, the universal countable type over \(\paren{\sD\LexOp 2}^{\paren{\sD\LexOp \sN}}\) is \(\El*' f = \sum_{u \co \sD\LexOp \sN} \varepsilon(f,u) \simeq_{\sD\LexOp 2} 1\).
  Pulling back along the comparison map yields \(\El*'' f = \sum_{u \co \sD\LexOp \sN} (\sD\varepsilon)(f,u) \simeq_{\sD\LexOp 2} 1\), which is equivalent to \(\widetilde{\sD}\El*_{\UKappa}\) by left exactness of \(\sD\).
}

We can now state the axioms we are verifying in the model \(\PSh{\cat*{C}}_{\LO{D}\LO{T}}\).

\begin{axiom}\label[axiom]{axiom:projectivity-of-spec}
  For all \(A \co \Mod*[\bbT/\sG,\LO{D}\kappa]\), the type \(\Spec(A)\) is projective (\cf~\cref{def:projective}).
\end{axiom}
\begin{axiom}\label[axiom]{axiom:duality}
  For all \(A \co \Mod*[\bbT/\sG,\LO{D}\kappa]\), the transposed evaluation \(A \to \sG^{\Spec(A)}\) an equivalence.
\end{axiom}
\begin{axiom}\label[axiom]{axiom:constant-are-spec-null}
  For all \(A \co \Mod*[\bbT/\sG,\LO{D}\kappa]\), the constant map \(\sN \to \sN^{\Spec(A)}\) is an equivalence.
\end{axiom}

In the case of finitely presented \(k\)-algebras, Cherubini, Coquand, and Hutzler~\cite{CherubiniCoquandHutzler2024} construct a model of synthetic algebraic geometry as an inner model, starting from the above axioms, by nullifying a family of propositions.%
\footnote{%
As shown in~\cite{CoquandHoferSattler2025}, \cref{axiom:duality} is not constructively valid in the model of~\cite{CherubiniCoquandHutzler2024}.
We bridge this gap.%
}
Classically, this corresponds to the passage from higher presheaves to higher sheaves.
For synthetic Stone duality~\cite{CherubiniCoquandGeerligsMoeneclaey2024}, this step is similar.

\subsection{Local representability of the spectrum}

To verify \cref{axiom:duality,axiom:constant-are-spec-null}, we will appeal to the fact that equivalences between \(\sD\)-modal types are levelwise by \cref{prop:cobar-strictifies-equivalences-between-levelwise-modal-types}.
This requires us to characterize objects in the presheaf model levelwise.
Furthermore, we need to characterize dependent products over \(\Spec\).
The key result is \cref{prop:duality:spec-locally-representable}, stating that the spectrum of a \(\LO{D}\UKappa\)-presentable algebra is \emph{locally homotopy representable} in the following sense.

\begin{definition}[\cref{def:representables:locally-representable}, internally to \(\cSet\)]
  Let \(S \co \Ty*_{\PSh*{\cat{C}}}^{\Fib*_0}(P)\) be a type.
  A \emph{local homotopy representation} of \(S\) is given, for each \(x \co \cat{C}_0\) and \(\gamma \co \Gamma_x\), by an object \(x.\gamma\) with a morphism \(\p_\gamma \co x.\gamma \to x\), and some \(\q_\gamma \co S(x.\gamma, \gamma\p_\gamma, \q_\gamma)\) together with a witness of a levelwise equivalence for the induced map \(\pair{\p_\gamma, \q_\gamma} \co \yo(x.\gamma) \to \yo x.S\gamma\).
\end{definition}

The above is a version of a (locally) representable natural transformation in our setting.
These were used by Awodey~\cite{Awodey2018} to define natural models of type theory.
They play a similar r\^{o}le in our construction.%
\begin{remark}
  The local representability of \(\Spec\) over \(\Pr\) is analogous to constructing a natural model on the base category.
  \Cref{prop:duality:base-category-is-cwa} can be understood as the opposite of the base category being a CwF/CwA.
  The families \(\Pr_{\bbT/\sG_0,\kappa}\) and \(\Ret\) defined in \cref{sec:cwf-structure-on-base-category,sec:characterization-of-spec}, which characterize the types \(\Pr\) and \(\Spec\) levelwise, are the analogues of the presheaves of types and elements.
  \Cref{prop:duality:spec-locally-representable} can then be seen as one direction of the equivalence between CwFs/CwAs and natural models.%
\footnote{This part of the construction of a model of duality can therefore be seen as a special case of the construction of the standard model of two-level type theory~\cite{AnnenkovCapriottiKrausSattler2023} where the presheaves of inner types and elements are definable internally to presheaves in terms of the generic model \(\sG\).}
\end{remark}

\subsubsection{Families of presentations and algebras}\label[subsubsection]{sec:cwf-structure-on-base-category}

Internally to \(\CubicalSet\), we construct \(\Pr_{\bbT/\sG_0, \kappa}\) over \(\Mod*[\bbT,\kappa]\) as \(\Pr_{\bbT/\sG_0, \kappa}(A) \coloneqq \Pr_{\bbT/A, \kappa}\).
This is functorial up to homotopy: for \(h \co A \to B\) and \(u = (X, R, s, t) \co \Pr_{\bbT/A, \kappa}\), we have \(uh \co \Pr_{\bbT/B, \kappa}\) given by composing \(s, t \co R \rightrightarrows \sF_{\bbT/A}(X)\) with the map \(\sF_{\bbT/A}X \to \sF_{\bbT/B}X\) induced by \(h\).

We furthermore construct a family of \(\bbT\)-models (and also \(\sG_0\)-algebras) over \(\Pr_{\bbT/\sG_0, \kappa}\).
This family is at \(A\) and \(u\) given by \(\gen{u}_{\bbT/A}\), and comes with morphism \(\p_u \co A \to \gen{u}_{\bbT/A}\) of \(\bbT\)-models.
The following lemma shows in particular that this family is also functorial up to homotopy.

\begin{lemma}[Internally to \(\cSet\)]\label[lemma]{prop:duality:base-category-is-cwa}
  For \(A, B \co \Mod*[\bbT,\kappa]\), \(h \co A \to B\), and \(u \co \Pr_{\bbT/A, \kappa}\), there is some \(h^+ \co \gen{u}_{\bbT/A} \to \gen{uh}_{\bbT/B}\) making the following into a pushout up to homotopy in \(\Mod*[\bbT,\kappa]\):
  \begin{center}\(\begin{tikzcd}
    A
    \ar[r, "\p_u"]
    \ar[d, "h"']
  &
    {\gen{u}_{\bbT/A}}
    \ar[d, "h^+"]
  \\
    B
    \ar[r, "\p_{uh}"']
  &
    {\gen{uh}_{\bbT/B}\rlap{.}}
    \ar[ul, pushout]
  \end{tikzcd}\)\end{center}
  Furthermore, this assignment satisfies \(\id^+ \sim \id\) and \((fg)^+ \sim f^+g^+\).
\end{lemma}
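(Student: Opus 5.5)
We construct \(h^+\) from the universal properties of free models and coequalizers, and then reduce the pushout claim to a comparison of presentations. Write \(u = (X, R, s, t)\), so that \(\gen{u}_{\bbT/A}\) is the coequalizer of \(\sF_{\bbT/A} R \rightrightarrows \sF_{\bbT/A} X\), and \(\p_u\) is \(A \to \sF_{\bbT/A} X \to \gen{u}_{\bbT/A}\).

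\textbf{Step 1: constructing \(h^+\).} Let \(h_X \co \sF_{\bbT/A} X \to \sF_{\bbT/B} X\) be the map induced by \(h\). It is the unique \(\bbT\)-morphism that restricts to \(h\) on \(A\) and to the generators on \(X\). By definition \(uh = (X, R, h_X \cc s, h_X \cc t)\). The composite \(\sF_{\bbT/A} X \to \sF_{\bbT/B} X \to \gen{uh}_{\bbT/B}\) coequalizes \(s\) and \(t\) up to homotopy. Since all types involved are h-sets, this homotopy is a mere property. The universal property of the coequalizer therefore yields \(h^+\). The square commutes up to homotopy because both composites \(A \to \gen{uh}_{\bbT/B}\) agree with the image of \(h\) under the coprojection.

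\textbf{Step 2: the pushout property.} We identify both sides with coproducts of models. Under the remark following \cref{def:duality:algebra}, \(\sF_{\bbT/A} X \simeq \sF_{\bbT} X \otimes A\). Hence \(\gen{u}_{\bbT/A}\) is the coequalizer of \(\sF R \rightrightarrows \sF X \otimes A\), with the maps obtained by transposing \(s, t\).

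Consider the pushout \(B \otimes_A \gen{u}_{\bbT/A}\) in \(\bbT\)-models. Colimits commute with colimits, and the pushout of \(A \to \sF X \otimes A\) along \(h\) is \(\sF X \otimes B\). It follows that \(B \otimes_A \gen{u}_{\bbT/A}\) is the coequalizer of \(\sF R \rightrightarrows \sF X \otimes B\), with maps \(h_X \cc s\) and \(h_X \cc t\). This is exactly \(\gen{uh}_{\bbT/B}\).

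Concretely, we check the universal property directly. Given \(g \co B \to C\) and \(k \co \gen{u}_{\bbT/A} \to C\) with \(g h \sim k \p_u\), maps out of \(\gen{uh}_{\bbT/B}\) are pairs consisting of a morphism from \(B\) and an assignment \(X \to C\) satisfying the relations. We then verify that the induced comparison is an equivalence of mapping h-sets.

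Two further points complete this step.
\begin{itemize}
\item The pushout lies in \(\Mod*[\bbT,\kappa]\), since \(\gen{uh}_{\bbT/B}\) is \(\UKappa\)-presentable by \cref{prop:duality:combine-presentable-models}.
\item Everything lives among \(\LexOp\)-modal h-sets, where colimits are obtained by \(\LexOp\)-modal replacement, so they retain their universal property in that subuniverse.
\end{itemize}

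\textbf{Step 3: functoriality.} Because the hom-types are h-sets, homotopies between morphisms out of a coequalizer of a free model are determined on generators. The map \(\id^+\) and \(\id\) agree on \(X\) and on \(A\), so \(\id^+ \sim \id\). Likewise \((fg)^+\) and \(f^+ g^+\) agree on generators, so \((fg)^+ \sim f^+ g^+\).

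There is a subtlety in composing at the level of presentations. We have \(u(fg)\) versus \((ug)f\); these agree up to homotopy because \((fg)_X \sim f_X \cc g_X\), again by uniqueness out of free models. Strictly, \(u(fg) = (ug)f\) holds only up to this identification, so the statement must be read with the transport along it.

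The main obstacle I expect is Step 2: carrying out "colimits commute with colimits" rigorously in the \(\LexOp\)-modal h-set setting with homotopy-coherent, non-strict equations, and handling the transport between \(\sF_{\bbT/A}X\) and \(\sF X \otimes A\). I would first try to avoid abstract commutation entirely. The plan is to prove the universal property by a direct computation of hom-sets out of presented models, namely
\[
  \Mod*[\bbT](\gen{v}_{\bbT/B}, C) \simeq \textstyle\sum_{g\co B\to C} \{ c \co X \to C \mid \text{relations} \},
\]
and then compare the two sides elementwise.
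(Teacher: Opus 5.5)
Your proposal is correct and follows essentially the paper's route. The paper also builds \(h^+\) from the universal property of the coequalizer presenting \(\gen{u}\) and uses \(\sF_{\bbT/A}X \simeq \sF X \otimes A\), so that the coproduct squares over \(h\) are pushouts. It then shows that the square formed by the quotient maps \(q_u, q_{uh}\) is a pushout, using that they are epi and the universal property of \(q_{uh}\), and pastes; functoriality likewise comes from uniqueness of maps out of the coequalizer. Your ``colimits commute with colimits'' step and your fallback hom-set computation are the abstract and elementwise forms of that pasting argument, and your remark that \(u(fg)\) and \((ug)f\) agree only up to identification is a fair reading of the paper's ``functorial up to homotopy''.
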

\begin{proof}
  Denote the presentation by \(u = (X, R, s, t)\).
  The \(\bbT\)-models \(\gen{u}\) and \(\gen{uh}\) are in \(\Mod*[\bbT,\kappa]\) by \cref{prop:duality:combine-presentable-models}.
  The above square arises as the outer square in the following diagram:
  \begin{center}\(\begin{tikzcd}[column sep=2.0cm]
    A
    \ar[r, "\iota_2"]
    \ar[d, "h"']
  &
    {\sF R \otimes A}
    \ar[r, shift left=1, "{[s, \iota_2]}"]
    \ar[r, shift right=1, "{[t, \iota_2]}"']
    \ar[d, "\sF R \otimes h"']
  &
    {\sF X \otimes A}
    \ar[d, "\sF X \otimes h"]
    \ar[r, two heads, "q_u"]
  &
    {\gen{u}}
    \ar[d, "h^+", dashed]
  \\
    B
    \ar[r, "\iota_2"']
  &
    {\sF R \otimes B}
    \ar[r, shift left=1, "{[(\sF X \otimes h)s, \iota_2]}"]
    \ar[r, shift right=1, "{[(\sF X \otimes h)t, \iota_2]}"']
  &
    {\sF X \otimes B}
    \ar[r, two heads, "q_{uh}"']
  &
    {\gen{uh}\rlap{.}}
  \end{tikzcd}\)\end{center}
  The parallel center squares commute by construction and therefore \(h^+\) exists by the universal property of the top coequalizer.
  The left square is a pushout by construction.
  The two center squares are pushouts by pasting with the left square.

  Suppose we are given \(f \co \gen{u} \to C\) and \(g \co \sF X \otimes B \to C\) with \(f \cc q_u \sim g \cc (\sF X \otimes h)\).
  It suffices to find some \(\overline{g} \co \gen{uh} \to C\) with \(\overline{g} \cc q_{uh} \sim g\) since \(q_u\) and \(q_{uh}\) are epi.
  We have
  \[
    g \cc (\sF X \otimes h)\cc s \sim f \cc q_u \cc s \sim f \cc q_u \cc t \sim g \cc (\sF X \otimes h)\cc t \rlap{,}
  \]
  so \(g\) makes the bottom parallel maps equal.
  The universal property of \(q_{uh}\) gives the claim.

  The equations \(\id^+ \sim \id\) and \((fg)^+ \sim f^+ g^+\) follow from the uniqueness of the morphism out of the defining coequalizer.
\end{proof}

\subsubsection{Levelwise characterizations of \texorpdfstring{\(\bbT\)}{𝕋}-models}

To characterize the objects in the statement of the axioms levelwise, we will lift \(\PM{U}\) and the right adjoint on types \(\LO{R}\) from \cref{prop:cofree-presheaf:sends-levelwise-modal-types-for-base-model-modality-to-modal-types-for-combined-modality} further to \(\bbT\)-models.
For the model of \(\LO{D}\LO{T}\)-modal types \(\PSh{\cat*{C}}_{\LO{D}\LO{T}}\), we write \(\Mod[\PSh{\cat*{C}}_{\LO{D}\LO{T}}](\Gamma)\) for the set of \(\bbT\)-models in context \(\Gamma \in \PSh{\cat*{C}}_{\LO{D}\LO{T}}\) and \(\Mod*[\PSh{\cat*{C}}_{\LO{D}\LO{T}}](\Gamma)(A, B)\) for the cubical set of homomorphism between models \(A\) and \(B\).
We use analogous notation for the model \(\cSet_{\LO{T}} \comma \cat{C}_0\).

Internally to \(\CubicalSet\), both \(\PM{U}\) and \(\sR\) preserve products by \cref{prop:forgetful:preservation-properties,prop:cofree-presheaf:preservation-properties}.
Therefore, they canonically act on \(\bbT\)-models in the models of \(\LO{D}\LO{T}\)-modal and \(\LO{T}\)-modal types respectively.
Furthermore, the unit and counit of the adjunction lift to strict homomorphisms, meaning that all required squares commute strictly.
Hence, we have the following.

\begin{construction}\label[construction]{prop:duality:adjunction-on-models}
  Naturally in \(\Gamma \in \PSh{\cat*{C}}\) and naturally up to homotopy in \(A \in \Mod[\PSh{\cat*{C}}_{\LO{D}\LO{T}}](\Gamma)\) and \(B \in \Mod[\cSet_{\LO{T}} \comma \cat{C}_0](\PM{U}\Gamma)\), there is an isomorphism over the isomorphism from \cref{prop:cofree-presheaf:dependent-right-adjoint}:
  \(
    \Mod*[\cSet_{\LO{T}} \comma \cat{C}_0](\PM{U}\Gamma)(\PM{U} A, B) \cong \Mod*[\PSh{\cat*{C}}_{{\LO{D}\LO{T}}}](\Gamma)(A, \LO{R} B).
  \)
\end{construction}

In the model of type theory \(\cSet_{\Fib} \comma \cat{C}_0\), all constructions on \(\bbT\)-models are performed levelwise.
By the preservation of \(\bbT\)-models and morphisms by \(\PM{U}\), we can construct in particular substitution-stable comparison maps for colimit-like constructions on \(\bbT\)-models: \(\bigotimes_{\PM{U} I} \PM{U} A \to \PM{U}(\bigotimes_{I} A)\), \(\sF (\PM{U} X) \to \PM{U}(\sF X)\), etc.
We show that \(\PM{U}\) preserves these constructions in the sense that all these maps can be equipped with the structure of an equivalence in a substitution-stable way.

\begin{lemma}\label[lemma]{prop:dualtiy:levelwise-chararcterization-of-t-model-constructions}
  The pseudomorphism \(\PM{U} \co \PSh{\cat*{C}}_{\sD\LexOp} \to \cSet_{\LexOp} \comma \cat{C}_0\) preserves free \(\bbT\)-models, coproducts of \(\bbT\)-models, and coequalizers of \(\bbT\)-models up to equivalence.
\end{lemma}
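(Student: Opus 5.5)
The plan is to go through the adjunction on models from \cref{prop:duality:adjunction-on-models}, together with the fact that the comparison maps are stable under substitution. Each construction (free model, coproduct, coequalizer) is characterized by a universal property: maps out of it into an arbitrary \(\bbT\)-model \(C\) correspond to data in the hom-types of \(\bbT\)-models. Let \(c \co K \to \PM{U}(K')\) be one of the comparison maps, with \(K\) the levelwise construction on \(\PM{U}\)-images. I would show that precomposition with \(c\) induces an equivalence \(\Mod*(\PM{U}K', B) \to \Mod*(K, B)\) for every \(\LexOp\)-modal \(\bbT\)-model \(B\) in \(\cSet_{\LexOp} \comma \cat{C}_0\), naturally in \(B\). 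The \(\bbT\)-model Yoneda lemma then makes \(c\) an equivalence. Since \(\bbT\)-models are h-sets, this precomposition map is a map between h-sets, so it suffices to produce an inverse up to homotopy.

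The key steps, in order, are as follows. First, fix \(B\) in \(\Mod[\cSet_{\LO{T}} \comma \cat{C}_0](\PM{U}\Gamma)\). Second, transpose along \cref{prop:duality:adjunction-on-models}: \(\Mod*(\PM{U}K', B) \cong \Mod*(K', \LO{R}B)\). This uses that \(\LO{R}B\) is \(\sD\LexOp\)-modal by \cref{prop:cofree-presheaf:sends-levelwise-modal-types-for-base-model-modality-to-modal-types-for-combined-modality}, so it lies in the model where \(K'\) has its universal property. Third, apply the universal property of \(K'\) in \(\PSh{\cat*{C}}_{\sD\LexOp}\). For instance, for the free model \(K' = \sF X\) this gives \((X \to \LO{R}B)\), which by \cref{prop:cofree-presheaf:dependent-right-adjoint} (extended to homotopies) is \((\PM{U}X \to B)\), and this in turn is \(\Mod*(\sF\PM{U}X, B)\). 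For coproducts \(\bigotimes_I A\) the universal property yields \(\prod_{i\co I}\Mod*(A_i, \LO{R}B)\). Here \(\PM{U}\) preserves the product indexed by \(I\) by \cref{prop:forgetful:preservation-properties}, and each factor transposes back. For coequalizers we get a subtype of \(\Mod*(A, \LO{R}B)\) cut out by paths, and this is handled by preservation of path types. Fourth, check that the resulting composite equivalence agrees with precomposition by \(c\). This holds because \(c\) is itself defined as the transpose of the universal maps and the unit and counit are strict homomorphisms.

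I expect the main obstacle to be the bookkeeping in the third and fourth steps. The adjunction on models is a strict isomorphism only on hom cubical sets, while the universal properties hold only up to homotopy and naturally up to homotopy in the models. So I must make sure the chain of equivalences is natural in \(B\) and stable under substitution in \(\Gamma\). I would handle this by choosing all equivalences as explicit maps built from the strict isomorphism of \cref{prop:duality:adjunction-on-models} and the (substitution-stable) universal property data. Equivalences between h-sets are then just maps with homotopy inverses, so no higher coherence is needed. A second subtlety is that the index type \(I\) and the object of relations are \(\sD\LexOp\)-modal types rather than constant ones. Their images under \(\PM{U}\) are nevertheless exactly the indexing data of the levelwise construction, so this causes no difficulty.
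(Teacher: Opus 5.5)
Your proposal follows the paper's proof: the paper also argues by transposing along \cref{prop:duality:adjunction-on-models} (an adjoint argument up to homotopy, i.e.\ left adjoints preserve colimits by Yoneda), working strictly naturally in \(\Gamma\), and your step-by-step chain spells this out. One justification needs fixing: for coproducts indexed by a non-constant \(\sD\LexOp\)-modal type \(I\), \(\PM{U}\) does not preserve \(\Pi_I\) (\cref{prop:forgetful:preservation-properties} only covers products indexed by cubical sets), but you do not need it; use \(\El(\Gamma, \Pi_I M) \cong \El(\Gamma.I, M)\) and apply the adjunction in the extended context \(\Gamma.I\), via \(\PM{U}(\Gamma.I) \cong \PM{U}\Gamma.\PM{U}I\) and naturality in \(\Gamma\), which is what your ``each factor transposes back'' really amounts to and is why the paper insists on strict naturality in \(\Gamma\).
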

\begin{proof}
  Working (strictly) naturally in \(\Gamma \in \PSh{\cat*{C}}\), by an adjoint argument up to homotopy using \cref{prop:duality:adjunction-on-models}.
\end{proof}

\begin{corollary}\label[corollary]{prop:dualtiy:levelwise-chararcterization-of-g-algebra-constructions}
  \(\PM{U} \co \PSh{\cat*{C}}_{\sD\LexOp} \to \cSet_{\LexOp} \comma \cat{C}_0\) sends up to equivalence \(\sG\)-algebras to \(\sG_0\)-algebras, and sends free (\resp coproducts of, coequalizers of) \(\sG\)-algebras to free (\resp coproducts of, coequalizers of) \(\sG_0\)-algebras.
\end{corollary}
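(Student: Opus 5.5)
The corollary should follow from \cref{prop:dualtiy:levelwise-chararcterization-of-t-model-constructions} together with the identification \(\PM{U}\sG \simeq \sG_0\) of \cref{rem:duality:generic-model-characterization}. The idea is to present \(\sG\)-algebras as \(\bbT\)-models under \(\sG\), and to express each of the constructions on \(\sG\)-algebras through the corresponding constructions on \(\bbT\)-models.

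For the first claim, let \(h \co \sG \to A\) be a \(\sG\)-algebra in \(\PSh{\cat*{C}}_{\sD\LexOp}\). Since \(\PM{U}\) preserves products, it sends \(\bbT\)-models to \(\bbT\)-models and homomorphisms to homomorphisms strictly, as noted before \cref{prop:duality:adjunction-on-models}. Hence \(\PM{U}h \co \PM{U}\sG \to \PM{U}A\) is a homomorphism. Precomposing with the model equivalence \(\sG_0 \simeq \PM{U}\sG\) from \cref{rem:duality:generic-model-characterization} gives a \(\sG_0\)-algebra structure on \(\PM{U}A\). This is the sense of \enquote{up to equivalence}. Everything is substitution-stable in \(\Gamma\), because the equivalence of the remark is a fixed one.

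For the colimit constructions, I will use the descriptions of \(\sG\)-algebra colimits in terms of \(\bbT\)-model colimits.
\begin{itemize}
  \item The free \(\sG\)-algebra on \(X\) is the coprojection \(\sG \to \sF X \otimes \sG\), as recalled after \cref{def:duality:algebra}.
  \item The coproduct of a family of \(\sG\)-algebras \(A_i\) indexed by \(I\) is the coequalizer of the two maps \(\bigotimes_{i\co I}\sG \rightrightarrows \bigotimes_{i\co I} A_i\) that identify the structure maps. Here \(\bigotimes_{I}\sG\) is read with \(\sG\) itself as the empty case. Equivalently, it is the pushout over \(\sG\), which is a coequalizer of a coproduct.
  \item The coequalizer of \(\sG\)-algebra maps is the coequalizer of the underlying \(\bbT\)-model maps. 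The \(\sG\)-structure is inherited by composition, since the forgetful functor from \(\bbT/\sG\)-models creates coequalizers.
\end{itemize}
Each is thus built from free \(\bbT\)-models, coproducts and coequalizers of \(\bbT\)-models. By \cref{prop:dualtiy:levelwise-chararcterization-of-t-model-constructions}, \(\PM{U}\) preserves each ingredient up to equivalence, via the substitution-stable comparison maps. Composing these comparisons shows that \(\PM{U}\) of the \(\sG\)-algebra construction agrees with the same formula applied to \(\PM{U}\sG\) and \(\PM{U}A_i\) in \(\cSet_{\LexOp} \comma \cat{C}_0\). Transporting along \(\PM{U}\sG \simeq \sG_0\), which is invariant because all constructions respect equivalences of models, yields the corresponding \(\sG_0\)-algebra construction.

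The main obstacle is bookkeeping. The comparison maps must be maps \emph{under} \(\PM{U}\sG\), so that they are equivalences of \(\sG_0\)-algebras and not merely of \(\bbT\)-models. They must also be chosen naturally in \(\Gamma\). I would handle this as in the lemma, by an adjoint argument using \cref{prop:duality:adjunction-on-models} applied to the slice theory \(\bbT/\sG\). Because \(\LO{R}\) acts on models, it also acts on \(\LO{R}\)-images of \(\sG_0\)-algebras: such an image is an \(\LO{R}\sG_0\)-algebra, and it becomes a \(\sG\)-algebra by precomposing with the transpose \(\sG \to \LO{R}\sG_0\) of the equivalence \(\PM{U}\sG \simeq \sG_0\). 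The isomorphism of \cref{prop:duality:adjunction-on-models} then restricts to \(\sG\)-algebra morphisms. The universal properties then compare under \(\sG\) up to homotopy, and 2-out-of-3 concludes.
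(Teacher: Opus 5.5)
Your proposal is correct and follows the paper's proof: identify \(\PM{U}\sG \simeq \sG_0\) via the remark on the generic model, then reduce each \(\sG\)-algebra construction to free \(\bbT\)-models, coproducts and coequalizers of \(\bbT\)-models, which \(\PM{U}\) preserves by the preceding lemma (your bookkeeping paragraph only spells out what the paper leaves implicit). One small slip: your first formula for the \(I\)-indexed coproduct, a coequalizer into \(\bigotimes_{i \co I} A_i\) with an ad hoc ``empty case'', does not work as stated, since there is no evident second map and you cannot case-split constructively on whether \(I\) is empty; the wide pushout over \(\sG\) you fall back on, i.e.\ the coequalizer of \(\bigotimes_{i \co I}\sG \rightrightarrows \sG \otimes \bigotimes_{i \co I} A_i\), is the right description and needs no case distinction.
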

\begin{proof}
  By \cref{rem:duality:generic-model-characterization}, \(\PM{U}\sG \simeq \sG_0\).
  Thus, a model under \(\sG\) is sent to a model under \(\sG_0\).
  All constructions mentioned above reduce to constructions on \(\bbT\)-models.
\end{proof}

\subsubsection{Levelwise characterization of the spectrum}\label{sec:characterization-of-spec}

In the statements of \cref{axiom:duality,axiom:projectivity-of-spec,axiom:constant-are-spec-null}, we only consider \(\sD\UKappa\)-presentable algebras.
These axioms are logically equivalent to versions assuming a given \(\sD\UKappa\)-presentation.
The type of \(\sD\UKappa\)-presentations of \(\sG\)-algebras in \(\PSh{\cat*{C}}_{\sD\LexOp}\) is denoted \(\Pr_{\bbT/\sG,\sD \kappa}\).
Because of the lifting via constant presheaves, we have that \(\PM{U}\) preserves the universe \(\UKappa\), and furthermore the following.

\begin{lemma}\label[lemma]{prop:duality:ukappa-exponentials}
  The pseudomorphism \(\PM{U} \co \PSh{\cat*{C}}_{\LO{D}\LO{T}} \to \cSet_{\LO{T}} \comma \cat{C}_0\) preserves dependent products with domain in \((\UKappa, \El*_\kappa)\).
\end{lemma}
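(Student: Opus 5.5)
We would reduce the claim to the behaviour of the universe \((\sD\UKappa, \widetilde{\sD}\El*_\kappa)\) under \(\PM{U}\), and then compare dependent products fibrewise. Concretely, let \(K \in \El(\Gamma, \sD\UKappa)\) and \(B \in \Ty^{\sD\LexOp}(\Gamma.\widetilde{\sD}\El*_\kappa K)\). We want a substitution-stable equivalence structure on the canonical comparison map \(\PM{U}\paren{\prod_{k\co \widetilde{\sD}\El*_\kappa K} B} \to \prod_{k \co \PM{U}(\widetilde{\sD}\El*_\kappa K)} \PM{U}B\) in \(\cSet_{\LexOp} \comma \cat{C}_0\).

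The first step is to identify \(\PM{U}(\sD\UKappa)\) and \(\PM{U}(\widetilde{\sD}\El*_\kappa)\) with the constant family \(\UKappa\) and its decoding. The universe in presheaves is the constant presheaf on \(\UKappa\), and \(\PM{U}\) of a constant presheaf is literally constant. By \cref{prop:cobar-unit-is-levelwise-equivalence}, the units \(\tau\) for \(\UKappa\) and for \(\El*_\kappa\) are levelwise equivalences. The pair \((\tau_{\UKappa}, \tau_{\El*_\kappa})\) forms a strict pullback square of families, because \(\widetilde{\sD}\) is computed by the pseudomorphism. Hence, levelwise, \(\PM{U}K\) is (homotopic to the image under \(\PM{U}\tau\) of) a family of types in \(\UKappa\), and \(\PM{U}(\widetilde{\sD}\El*_\kappa K)\) is equivalent over \(\PM{U}\Gamma\) to the corresponding \(\El*_\kappa\)-family. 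Since \(\PM{U}\) preserves path types, transport along these identifications lets us assume the index type is of the form \(\El*_\kappa(\kappa_0)\) for \(\kappa_0\) levelwise in \(\UKappa\).

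The second step treats a domain that is such a levelwise constant family. Here \(\PM{U}\) preserves products on the nose by \cref{prop:forgetful:preservation-properties}, for products indexed by types pulled back from \(\cSet\). However, the product in \(\PSh{\cat*{C}}\) over \(\widetilde{\sD}\El*_\kappa K\) is not strictly indexed by a constant type. We would therefore compare \(\prod_{\widetilde{\sD}\El*_\kappa K} B\) with \(\prod_{\sD(\El*_\kappa K')}\) along the unit, and use that \(\sD\) preserves products (\cref{prop:cobar-pseudomorphism:preservation-properties}). In the countable and finite case we would also use the footnote computation that \(\sD\UKappa\) agrees with the expected universe. The comparison then becomes a map between \(\sD\)-modal types, and it suffices to check it is a levelwise equivalence.

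Levelwise, this reduces to precomposition with a levelwise equivalence \(\PM{U}\tau \co \El*_\kappa \to \PM{U}\widetilde{\sD}\El*_\kappa\), which induces an equivalence of dependent products in \(\cSet_{\Fib}\comma\cat{C}_0\). Dependent products of \(\LexOp\)-modal types remain \(\LexOp\)-modal, so the result lands in \(\cSet_{\LexOp}\comma\cat{C}_0\).

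The main obstacle is the second step. The product \(\prod_{k \co \widetilde{\sD}\El*_\kappa K} B\) is computed at stage \(x\) using all maps into \(x\), whereas \(\PM{U}\) only sees objects. The comparison map is therefore not obviously a levelwise equivalence unless the index family is first rectified to a constant one. I expect to handle this by exploiting that \(\El*_\kappa\) is constant, so that the product over it commutes with \(\PM{U}\) strictly. The remaining discrepancy is then pushed into a homotopy equivalence of domains, and dependent products are invariant under such equivalences by path induction.
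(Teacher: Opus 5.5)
There is a genuine gap, and it comes from aiming at a stronger statement than the lemma. The lemma concerns the universe \((\UKappa, \El*_\kappa)\) lifted via constant presheaves, not its replacement \((\sD\UKappa, \widetilde{\sD}\El*_\kappa)\). This is also how it is applied in \cref{prop:duality:presentation-characterization}: there, the passage from \(\sD(\El*_\kappa R)\) to \(\El*_\kappa R\) with \(R \co \UKappa\) is carried out beforehand, in \eqref{eq:duality:presentation-characterization:comparison}, by the universal property of the replacement.

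For the statement as given, the whole proof is the remark you only reach at the end. The restriction maps of \(\El*_\kappa R\) are identities, since \(R\) factors through the constant presheaf. So an element of \((\Pi_{\El*_\kappa R}B)(x,\gamma)\), i.e.\ a compatible family \(u_f(a)\) for \(f \co y \to x\) and \(a \co \El*_\kappa(R\gamma)\), satisfies \(u_f(a) = u_{\id_x}(a)f\) and is therefore determined by \(u_{\id_x}\). Hence the comparison \(\PM{U}(\Pi_{\El*_\kappa R}B) \to \Pi_{\El*_\kappa R}\PM{U}B\) is an isomorphism on the nose. This is the computation behind \cref{prop:forgetful:preservation-properties}, except that the index may vary with \(\gamma\). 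You need no units, no preservation of products by \(\sD\), and nothing about finite or countable types.

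In your version over \(\sD\UKappa\), the step that fails is the rectification of \(K\). The \(K'\) you obtain from \(\PM{U}\tau_{\UKappa}\) being an equivalence is only a section over \(\PM{U}\Gamma\); it is not natural in morphisms of \(\cat{C}\). So \(\sD(\El*_\kappa K')\) is not a type over \(\Gamma\) in \(\PSh{\cat*{C}}\), and your comparison ``along the unit'' is undefined. As you note yourself, \(\PM{U}(\Pi_A B)\) at \((x,\gamma)\) depends on \(A\) along all maps into \(x\), so levelwise transport cannot repair this.

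Nor does a natural rectification exist in general, even up to homotopy. Sections of \(\Delta\UKappa\) are strictly invariant along every morphism of \(\cat{C}\), whereas sections of \(\sD\UKappa\) need only be coherently so. Roughly, over a quotient \(\yo x/G\) by a group \(G\) of automorphisms of \(x\), sections of \(\sD\UKappa\) classify \(\kappa\)-types with a \(G\)-action, while sections of \(\Delta\UKappa\) only give trivial actions.

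The stronger claim is still true, but it needs a local argument. For example, restrict to representable contexts \(\yo x\), where \(\sD\UKappa(x) \simeq \UKappa\) does allow rectification, and then use \cref{prop:cobar-modal-types-see-levelwise-equivalences-between-levelwise-fibrant-types-as-equivalences} as in \cref{prop:homotopical-yoneda}. The paper avoids all of this, since it only ever needs the constant case.
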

\begin{proof}
  Internally to \(\cSet\), this corresponds to the fact that dependent products with fiberwise constant domain are computed levelwise.
\end{proof}

\begin{construction}[Internally to \(\cSet\)]\label[construction]{prop:duality:presentation-characterization}
  Naturally up to homotopy in \(A \co \Mod*[\bbT,\kappa]\), there is an equivalence \(\Pr_{\bbT/\sG, \sD\kappa}(A) \simeq \Pr_{\bbT/A,\kappa}\).
\end{construction}
\begin{proof}
  We construct an equivalence in three steps.
  First, we construct an equivalence in \(\PSh{\cat*{C}}\).
  The unit \(\tau_{\UKappa} \co \UKappa \to \sD \UKappa\) and universal property of the \(\sD\)-modal replacement yield the following comparison map that is a levelwise equivalence:
  \begin{align}\label{eq:duality:presentation-characterization:comparison}
    \Pr_{\bbT/\sG,\sD\kappa}
    \leftarrow \sum_{X, R \co \UKappa} \paren[\Big]{\sF_{\bbT/\sG}\paren[\big]{\sD (\El*_\kappa X)}^{\sD (\El*_\kappa R)} }^2
    \simeq \sum_{X, R \co \UKappa} \paren[\Big]{\sF_{\bbT/\sG}\paren[\big]{\sD (\El*_\kappa X)}^{\El*_\kappa R} }^2.
  \end{align}
  Applying \(\PM{U}\) to the above yields an equivalence.
  We denote the last type above by \(\Pr_{\bbT/\sG,\sD \kappa}'\).
  Second, by \cref{prop:dualtiy:levelwise-chararcterization-of-g-algebra-constructions,prop:duality:ukappa-exponentials}, we have the first equivalence in \(\cSet_{\LO{T}} \comma \cat{C}_0\) below:
  \begin{equation}\label{eq:duality:presentation-characterization:step-2-and-3}
    \PM{U}\Pr_{\bbT/\sG,\sD \kappa}'
    \simeq \sum_{X, R \co \kappa} \paren[\Big]{ \paren[\big]{ \sF_{\bbT/\sG_0}( \PM{U}( \LO{D}\El*_\kappa )(X) ) }^{\El*_\kappa R} }^2
    \simeq \sum_{X, R \co \kappa} \paren[\Big]{ \paren[\big]{ \sF_{\bbT/\sG_0}( \El*_\kappa X ) }^{\El*_\kappa R} }^2.
  \end{equation}
  Third, we use that \(\PM{U}\tau_{\El*_\kappa} \co \El*_\kappa \to \sD \El*_\kappa\) in context \(\kappa\) is an equivalence by \cref{prop:cobar-unit-is-levelwise-equivalence}, yielding the second equivalence above.
\end{proof}

We now characterize the universal \(\LO{D}\UKappa\)-presented \(\sG\)-algebra over \(\Pr_{\bbT/\sG,\sD \kappa}\) levelwise in terms of the family of presented algebras from \cref{sec:cwf-structure-on-base-category}.

\begin{construction}[Internally to \(\cSet\)]\label[construction]{prop:duality:presented-algebra-characterization}
  Naturally up to homotopy in \(A \co \Mod*[\bbT,\kappa]\), over the equivalence from \cref{prop:duality:presentation-characterization}, there is an equivalence \(\sum_{\Pr_{\bbT/\sG,\sD\kappa}(A)} \gen{-}_{\bbT/\sG}(A) \simeq \sum_{\Pr_{\bbT/A,\kappa}} \gen{-}_{\bbT/A}\) that is fiberwise an equivalence of the presented \(\bbT\)-models.
\end{construction}
\begin{proof}
  We construct an equivalence over the equivalence from \cref{prop:duality:presentation-characterization} in three steps.
  The family \(\gen{-}_{\bbT/\sG}\) is at \((X, R, s, r) \co \Pr_{\bbT/\sG, \LO{D}\kappa}\) given by the coequalizer of the homomorphisms \(\overline{s}, \overline{r} \co \sF_{\bbT/\sG}\paren[\big]{(\widetilde{\sD}\El*_\kappa) (R)} \rightrightarrows \sF_{\bbT/\sG}\paren[\big]{(\widetilde{\sD}\El*_\kappa) (X)}\) which are the transposes of the maps \(s, r \co (\widetilde{\sD}\El*_\kappa)(R) \rightrightarrows \sF_{\bbT/\sG}\paren[\big]{(\widetilde{\sD}\El*_\kappa) (X)}\) under the free-forgetful adjunction for \(\bbT/\sG\)-models.

  In the first step, we pull back the family along the comparison map~\eqref{eq:duality:presentation-characterization:comparison}.
  We obtain the family \(\gen{-}_{\bbT/\sG}'\) that at \((X, R, s, r) \co \Pr_{\bbT/\sG,\LO{D}\kappa}'\) (where \(\Pr_{\bbT/\sG,\LO{D}\kappa}'\) is defined as in \cref{prop:duality:presentation-characterization}) is given by the coequalizer of \(\overline{p_{X,R}(s)}, \overline{p_{X,R}(r)} \co \sF_{\bbT/\sG}\paren[\big]{(\widetilde{\sD}\El*_\kappa) (R)} \rightrightarrows \sF_{\bbT/\sG}\paren[\big]{(\widetilde{\sD}\El*_\kappa) (X)}\).
  Here, \(\overline{\,\cdot\,}\) denotes again the transpose and \(p_{X,R}\) denotes the inverse to restriction along \(\tau_{\El*_\kappa R}\) from~\eqref{eq:duality:presentation-characterization:comparison}.

  In the second step, we construct an equivalence in \(\cSet_{\LO{T}} \comma \cat{C}_0\) from \(\PM{U}\gen{-}_{\bbT/\sG}'\) over the first equivalence from~\eqref{eq:duality:presentation-characterization:step-2-and-3}.
  We have that \(\PM{U}\) preserves coequalizers and free \(\bbT/\sG\)-models by \cref{prop:dualtiy:levelwise-chararcterization-of-g-algebra-constructions} and exponentials with domain in \(\UKappa\) by \cref{prop:duality:ukappa-exponentials}.
  Since restriction along \(\tau_{\El*_\UKappa R}\) is given componentwise by restriction along the components of \(\tau_{\El*_\UKappa R}\), componentwise, the action of \(p\) is homotopic to restriction along the components of the inverse from \cref{prop:cobar-unit-is-levelwise-equivalence}. 
  We obtain an equivalence to the family given at \(A \co \Mod*[\bbT,\kappa]\), \(X, R \co \kappa\), and \(s, r \co \El*_\kappa R \rightrightarrows \sF_{\bbT/A}\paren[\big]{ (\LO{D} \El*_\kappa)(A, X) }\) by the coequalizer of the homomorphisms \(\overline{s \cc \tau^{-1}}, \overline{r \cc \tau^{-1}} \co \sF_{\bbT/A}\paren[\big]{ (\LO{D} \El*_\kappa)(A, R) } \rightrightarrows \sF_{\bbT/A}\paren[\big]{ (\LO{D} \El*_\kappa)(A, X) }\).
  Here, \(\tau^{-1}\) denotes the corresponding component of the inverse from \Cref{prop:cobar-unit-is-levelwise-equivalence}.

  In the third step, we construct an equivalence with the previous family over the second equivalence from~\eqref{eq:duality:presentation-characterization:step-2-and-3} using that all components of \(\tau\) are equivalences by \cref{prop:cobar-unit-is-levelwise-equivalence} and that all constructions on \(\bbT/A\)-models respect equivalences.
\end{proof}

Internally to \(\CubicalSet\), we construct a family \(\Ret\) over \(\Pr_{\bbT/\sG_0, \kappa}\).
At \(A \co \Mod*_{\bbT,\UKappa}\) and \(u \co \Pr_{\bbT/A, \kappa}\), we define \(\Ret_A(u)\) as the type of retractions of \(\p_u \co A \to \gen{u}_{\bbT/A}\).
We have an evaluation map \(\ev_{A,u} \co \Ret_A(u) \times \gen{u}_{\bbT/A} \to A\) given by evaluation of the retraction.

\begin{construction}[Internally to \(\cSet\)]\label[construction]{prop:duality:spec-characterization}
  Naturally up to homotopy in \(A \co \Mod*[\bbT,\kappa]\), there is an equivalence \(\sum_{\Pr_{\bbT/\sG,\sD\kappa}(A)} \Spec(A) \simeq \sum_{\Pr_{\bbT/A,\kappa}} \Ret_A\) over the equivalence from \cref{prop:duality:presentation-characterization}.
  Under this equivalence and the ones of \cref{rem:duality:generic-model-characterization,prop:duality:presented-algebra-characterization}, the evaluation map of the spectrum corresponds to the evaluation map for $\Ret_A$.
\end{construction}

\begin{proof}
  Let \(u = (X, R, s, t) \co \Pr_{\bbT/\sG,\sD \kappa}\), \ie, \(X, R \co \sD\UKappa\) and \(s, t \co (\widetilde\sD \El*_\kappa) (R) \rightrightarrows \sF_{\bbT/\sG}\paren[\big]{ (\widetilde\sD \El*_\kappa) (X) }\).
  Denoting the evaluation map of the free \(\bbT/\sG\)-algebra by \(\varepsilon\), the spectrum satisfies
  \[
    \Spec(\gen{u}) \simeq \sum_{ v \co \sG^{(\tilde{\sD}\El*_\kappa) (X)} } \prod_{ r \co (\widetilde\sD \El*_\kappa) (R) } \paren[\big]{ \varepsilon_{v}(s_r) \simeq_{\sG} \varepsilon_{v}(t_r) }.
  \]
  Now the argument is analogous to the one from \cref{prop:duality:presented-algebra-characterization}.
  First, we construct an equivalence in \(\PSh{\cat*{C}}\).
  Pulling back the above family over \(\Pr_{\bbT/\sG,\sD\kappa}\) along the comparison map from \(\Pr_{\bbT/\sG,\sD\kappa}'\) from \eqref{eq:duality:presentation-characterization:comparison}, we obtain the family that at \((X, R, s, t) \co \Pr'_{\bbT/\sG, \sD\kappa}\) is given by
  \begin{multline*}
    \sum_{ v \co \sG^{\sD(\El*_\kappa X)} } \prod_{ r \co {\sD(\El*_\kappa R)} } \paren[\Big]{ \varepsilon_{v}\paren[\big]{(\tau^*_{\El*_\kappa R})^{-1}(s)_r} \simeq_{\sG} \varepsilon_{v}\paren[\big]{ (\tau^*_{\El*_\kappa R})^{-1}(s)_r } }
    \\ \simeq
    \sum_{ v \co \sG^{\El*_\kappa X} } \prod_{ r \co {\El*_\kappa R} } \paren[\big]{ \varepsilon_{(\tau^*_{\sG})^{-1}(v)}(s_r) \simeq_{\sG} \varepsilon_{(\tau^*_{\sG})^{-1}(v)}(t_r) }.
  \end{multline*}
  We denote the latter by \(\Spec'\).
  In the second step, we construct an equivalence in \(\cSet_{\LO{T}} \comma \cat{C}_0\) from \(\PM{U}\Spec'\) over the first equivalence from~\eqref{eq:duality:presentation-characterization:step-2-and-3}.
  Analogous to \cref{prop:duality:presented-algebra-characterization}, using additionally that \(\PM{U}\) preserves paths by \cref{prop:forgetful:preservation-properties}, we obtain an equivalence to the family given at \(A \co \Mod*[\bbT,\kappa]\), \(X, R \co \kappa\), and \(s, t \co \El*_\kappa R \rightrightarrows \sF_{\bbT/A}\paren{ (\LO{D} \El*_\kappa)(A, X) }\) by
  \[
    \sum_{ v \co A^{\El*_\kappa X} } \prod_{r \co {\El*_\kappa R}} \paren[\big]{ \varepsilon_{v \cc \tau^{-1}}(s_r) \simeq_A \varepsilon_{v \cc \tau^{-1}}(t_r) }.
  \]
  Third, we construct an equivalence from the previous family, over the second equivalence from~\eqref{eq:duality:presentation-characterization:step-2-and-3}, to the family given at \(A \co \Mod*[\bbT,\kappa]\) and \(u = (X, R, s, t) \co \Pr_{\bbT, \UKappa}\), where \(X, R \co \kappa\) and \(s, t \co \El*_\kappa R \rightrightarrows \sF_{\bbT/A}\paren{ \El*_\kappa X }\), by
  \[
    \sum_{ v \co A^{\El*_\kappa X} } \prod_{r \co {\El*_\kappa R}} \paren[\big]{ \varepsilon_{v}(s_r) \simeq_A \varepsilon_{v}(t_r) }.
  \]
  This family is equivalent to the family of retractions of \(\p \co A \to \gen{u}_{\bbT/A}\).

  The claim about the evaluation map follows by tracing through the above steps simultaneously with those of \cref{rem:duality:generic-model-characterization,prop:duality:presented-algebra-characterization}.
\end{proof}

Internally to \(\cSet\), given \(A \co \Mod*[\bbT,\kappa]\), recall the morphism \(\p_u \co A \to \gen{u}_{\bbT/A}\) of \(\bbT\)-models for \(u \co \Pr_{\bbT/A, \kappa}\).
In \(\bbT\)-models, the identity on \(\gen{u}_{\bbT/A}\) induces a common retraction \(\q_u \co \gen{u\p_u}_{\bbT/A} \to \gen{u}_{\bbT/A}\) of the maps \(\p_{u\p_u}, \p_u^+ \co \gen{u}_{\bbT/A} \to \gen{u\p_u}_{\bbT/A}\).
Hence, \(\q_u \co \Ret_u(u\p_u)\).

\begin{proposition}\label[proposition]{prop:duality:spec-locally-representable}
  The type \(\Spec \in \Ty_{\PSh{\cat*{C}}}(1.\Pr_{\bbT/\sG,\sD\kappa})\) has the structure of a locally homotopy representable type.
  The representation is at \(A \co \Mod*[\bbT,\kappa]\) and \(u \co \Pr_{\bbT/\sG,\sD \kappa}(A)\), under the equivalence from \cref{prop:duality:spec-characterization}, given by
  \(
    \pair{\p_u, \q_u} \co \yo \gen{u}_{\bbT/A} \longrightarrow \yo A . \Spec(u). 
  \)
\end{proposition}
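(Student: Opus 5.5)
The plan is to reduce the claim, via \cref{prop:duality:spec-characterization}, to a statement purely about the base category, namely a homotopical Yoneda argument. Work internally to \(\cSet\). Fix \(A \co \Mod*[\bbT,\kappa]\) (an object of \(\cat{C}\)) and \(u \co \Pr_{\bbT/\sG,\sD\kappa}(A)\). Through \cref{prop:duality:presentation-characterization} we regard \(u\) as a \(\UKappa\)-presentation of an \(A\)-algebra. We take the object \(A.u \coloneqq \gen{u}_{\bbT/A}\), which lies in \(\Mod*[\bbT,\kappa]\) by \cref{prop:duality:combine-presentable-models}. Since \(\cat{C}\) is the opposite of models, \(\p_u \co A \to \gen{u}_{\bbT/A}\) is a morphism \(A.u \to A\) in \(\cat{C}\). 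For the generic element we take \(\q_u \co \Ret_{\gen{u}}(u\p_u)\), transported along the equivalence of \cref{prop:duality:spec-characterization} at the object \(\gen{u}\) and the presentation \(u\p_u\). The transported element is only determined up to homotopy. This is harmless: the definition of local homotopy representation only requires an element together with a witness that the induced map is a levelwise equivalence, and that witness can be transported as well.

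The main step is to prove that \(\pair{\p_u,\q_u} \co \yo\gen{u} \to \yo A.\Spec(u)\) is a levelwise equivalence. Fix a level \(B \co \Mod*[\bbT,\kappa]\). The source is \(\Mod*(\gen{u}, B)\), which is equivalent to the hom-type in \(\cat{C}\) by \cref{prop:duality:strictification}. The target is \(\sum_{h \co A \to B} \Spec(u)(B, uh)\). Again by \cref{prop:duality:spec-characterization}, together with naturality up to homotopy of that equivalence and of \cref{prop:duality:presentation-characterization}, this is equivalent to \(\sum_{h \co A \to B} \Ret_B(uh)\). The task is then to show that the induced map
\[
  \Mod*(\gen{u}_{\bbT/A}, B) \longrightarrow \sum_{h \co A \to B} \Ret_B(uh), \qquad g \longmapsto \paren[\big]{g \cc \p_u,\ \text{the restriction of } \q_u \text{ along } g}
\]
is an equivalence. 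We argue this with \cref{prop:duality:base-category-is-cwa}. The retractions of \(\p_{uh} \co B \to \gen{uh}_{\bbT/B}\) correspond to maps \(r \co \gen{uh}_{\bbT/B} \to B\) with \(r \cc \p_{uh} \sim \id_B\). Because the square of \cref{prop:duality:base-category-is-cwa} is a homotopy pushout, such an \(r\) corresponds to a map \(g \co \gen{u}_{\bbT/A} \to B\) with \(g \cc \p_u \sim h\), and the condition \(r \cc \p_{uh} \sim \id\) becomes contractible data. Summing over \(h\) then contracts the singleton \((h, g\cc\p_u \simeq h)\), which leaves \(\Mod*(\gen{u}, B)\). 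Along the way we check that this inverse matches the forward map, namely that restricting \(\q_u\) along \(g\) gives the retraction \([\id_B, g]\) up to homotopy. This holds because \(\q_u\) was itself induced by \([\id,\id]\) and because of the functoriality \((fg)^+ \sim f^+g^+\).

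The equivalence is proven in \(\cSet\) as a fibrant family of types over the objects \(B\), so it yields a witness of a levelwise equivalence in the sense of \(\PM{U}\). Naturality in the input data is only needed up to homotopy, because the definition asks only for structure at each \((A,u)\).

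I expect the main obstacle to be coherence: identifying the composite map, obtained by transporting through the homotopy-natural equivalences of \cref{prop:duality:spec-characterization}, with the explicit map \(g \mapsto (g\p_u, [\id,g])\). This requires the restriction action of \(\Spec\) along \(g\) to correspond to the substitution \(u\p_u \mapsto uh\) together with postcomposition of retractions, up to homotopy. I would first try to settle this using the statement in \cref{prop:duality:spec-characterization} that the evaluation maps correspond. Two retractions of \(\p_{uh}\) are determined by their evaluation on generators, so it suffices to compare evaluations. That comparison reduces to the pushout universal property of \cref{prop:duality:base-category-is-cwa}.
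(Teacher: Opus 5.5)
Your proposal is correct and takes essentially the paper's route. The paper also identifies \(\Mod*[\bbT,\kappa](\gen{u}, B)\) with \(\sum_{h} \Ret_B(uh)\) by contracting the singleton \((h,\, r \cc \p_u \sim h)\) and applying the homotopy pushout property of \cref{prop:duality:base-category-is-cwa}, only in the opposite order. Your more careful check that the composite map agrees with \(g \mapsto (g \cc \p_u, [\id_B, g])\) spells out what the paper dismisses in one line, namely that \(\q_u\) is itself defined via that pushout's universal property.
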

\begin{proof}
  Let \(A \co \Mod*[\bbT,\kappa]\) and \(u \co \Pr_{\bbT/A, \kappa}\).
  The \(\bbT\)-model \(\gen{u}_{\bbT/A}\) is an object of \(\Mod*[\bbT,\kappa]\) by \cref{prop:duality:combine-presentable-models}.
  For \(B \co \Mod*[\bbT,\kappa]\), we have by singleton contraction and \cref{prop:duality:base-category-is-cwa} the equivalence
  \[
    \Mod*[\bbT,\kappa](\gen{u}, B)
    \simeq \sum_{\substack{ h \co \Mod*[\bbT,\kappa](A, B) \\ r \co \Mod*[\bbT,\kappa](\gen{u}, B) }} \paren{r \cc \p_{u} \sim \id_{B} \cc h}
    \simeq \sum_{ \substack{ h \co \Mod*[\bbT,\kappa](A, B) \\ r \co \Mod*[\bbT,\kappa](\gen{uh}, B) } } \paren{r \cc \p_{uh} \sim \id_{B}}.
  \]
  The underlying map is the claimed one since \(\q_u\) is defined in terms of the universal property of the pushout from \cref{prop:duality:base-category-is-cwa}.
\end{proof}

\subsection{Verification of the axioms}\label[subsection]{sec:duality:axiom-verification}

\Cref{axiom:projectivity-of-spec} follows from the fact that every locally homotopy representable type is projective~(cf.~\cref{prop:lifting-of-lex-operations-over-fibrant-sites:locally-homotopy-representable-types-are-projective}).
For \cref{axiom:duality,axiom:constant-are-spec-null}, we use a characterization of dependent products over a locally representable type.
\begin{lemma}[\cref{prop:pushforward-along-homotopy-locally-representable-transformation}, internally to \(\cSet\)]\label{prop:pushforward-along-homotopy-locally-representable-transformation:restate}
  Naturally in \(\Gamma\), given a locally homotopy representable type \(A \co \Ty*^{\Fib*_0}(\Gamma)\) and a \(\sD\)-modal family \(P \co \Ty*^{\sD}(\Gamma.A)\), the map
  \(
    (\Pi_AP)(x, \gamma) \longrightarrow P(x.\gamma, \gamma\p_\gamma, \q_{\gamma}),
    u \longmapsto u_{\p_{\gamma}}(\q_\gamma)
  \)
  is an equivalence for \(x \co \cat{C}\) and \(\gamma \co \Gamma_x\).
\end{lemma}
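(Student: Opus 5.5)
Our plan is to reduce the statement to a Yoneda-type argument in the presheaf model, using that \(P\) is \(\sD\)-modal to replace \(P\) by data that is determined levelwise. Fix \(x \co \cat{C}_0\) and \(\gamma \co \Gamma_x\). By the Yoneda lemma in the (strict) presheaf model \(\PSh*{\cat{C}}\), an element of \((\Pi_A P)(x,\gamma)\) is the same as a section of the restricted family \(P\) over \(\yo x . A\gamma\), i.e.\ an element of \(\El*(\yo x.A\gamma, P\langle\gamma,\q\rangle)\) (strictly, by the definition of \(\Pi\) in presheaves). Likewise \(P(x.\gamma, \gamma\p_\gamma, \q_\gamma)\) is, by Yoneda, the set of sections of \(P\) pulled back along \(\pair{\p_\gamma,\q_\gamma} \co \yo(x.\gamma) \to \yo x.A\gamma\). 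Under these identifications, the map in the statement is exactly precomposition with \(\pair{\p_\gamma,\q_\gamma}\). So it suffices to show: restriction of sections of a \(\sD\)-modal family along a levelwise equivalence is an equivalence.

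For this key step we use that \(\sD\)-modal types are local with respect to levelwise equivalences. Concretely, since \(P\) is \(\sD\)-modal, sections of \(P\) over a context \(\Delta\) correspond (up to homotopy, via \(\tau\) and its inverse) to sections over \(\PM{D}\Delta\) of \(\PM{D}P\), and \(\PM{D}\) turns levelwise equivalences of contexts into equivalences by \cref{prop:cobar-strictifies-levelwise-equivalences} (applied to the levelwise fibrant contexts \(\yo(x.\gamma)\) and \(\yo x.A\gamma\), the latter being levelwise fibrant since \(A\) is and the hom-types of \(\cat{C}\) are fibrant by \cref{ass:fibrant-cat}). Alternatively and more directly: dependent products of a modal family are modal, so the map between the two section types can be compared with the map induced on \(\Pi\)-types over \(\yo x. A\gamma\) versus \(\yo(x.\gamma)\); viewing both as elements of the fibrant type \(\Pi\) in the empty context of \(\PSh*{\cat{C}}\), we need a homotopy inverse. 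We would construct it using the adjunction-style characterization: a map into a \(\sD\)-modal type out of \(\yo(x.\gamma)\) extends along the levelwise equivalence \(\pair{\p_\gamma,\q_\gamma}\) because \(\sD\pair{\p_\gamma,\q_\gamma}\) is an equivalence and \(\tau_P\) is invertible, giving the extension \(\tau_P^{-1}\circ \widetilde{\sD}(-)\circ (\sD\pair{\p_\gamma,\q_\gamma})^{-1}\circ\tau\). The two homotopies then follow by naturality of \(\tau\) and the homotopies witnessing invertibility.

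The main obstacle is the dependent version: \(P\) is a family over \(\Gamma.A\), not a single type, so "local with respect to levelwise equivalences" must be stated for sections of a fibrant family, and the inverse must be constructed naturally in \(\Gamma\) (and in \(x,\gamma\)). I would handle this by working in the slice: form \(\Sigma_{A}P\) over \(\Gamma\), note it is \(\sD\)-modal over \(\Gamma\) relative to the modal \(\ldots\) — more safely, reduce to sections of \(P\) as maps \(s\co \Delta \to \Gamma.A.P\) over a fixed map to \(\Gamma.A\), and use that \(\sD\) preserves \(\Sigma\) and path types (\cref{prop:cobar-pseudomorphism:preservation-properties}) so that fibers of the restriction map are again expressible via \(\sD\)-modal types, to which the non-dependent argument applies. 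Naturality is then inherited from naturality of \(\tau\), \(\sD\), and the given local representation data.
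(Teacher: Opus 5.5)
Your proposal is correct and is essentially the paper's argument. The paper likewise uses Yoneda to identify the map with restriction of sections along the levelwise equivalence \(\pair{\p_\gamma,\q_\gamma} \co \yo(x.\gamma) \to \yo x.A\gamma\) (its homotopical Yoneda lemma), and proves that \(\sD\)-modal families see levelwise equivalences between levelwise fibrant types as equivalences, using that \(\sD\) strictifies levelwise equivalences and that \(\tau\) is invertible on \(\sD\)-modal fibres.

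The only difference is in how this key step is assembled. You either write down an explicit inverse, which needs a transport along \(\sD f \circ (\sD f)^{-1} \sim \id\) and half-adjoint coherence, or assume an a priori correspondence between sections over \(\Delta\) and over \(\sD\Delta\); when \(\Delta\) is merely levelwise fibrant, only a retraction is available a priori. The paper instead exhibits \(f^*\) as a homotopy retract of the equivalence \((\sD f)^*\) on \(\Pi\)-types and concludes by \(2\)-out-of-\(6\). That handles the dependent family directly, so your final total-space reduction is unnecessary; it would also not work as stated, since \(\Gamma.A\) and \(\Gamma.A.P\) are not \(\sD\)-modal.
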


\begin{remark}\label[remark]{rem:map-into-pushforward-along-homotopy-locally-representable-transformation}
  Below, we apply \cref{prop:pushforward-along-homotopy-locally-representable-transformation:restate} as follows.
  Let \(A\) and \(P\) be as in \cref{prop:pushforward-along-homotopy-locally-representable-transformation:restate}, \(S \co \Ty*^{\Fib*}(\Gamma)\), and \(f \co \PSh*{\elems\Gamma.A}(S\p_A, P)\).
  The universal property of dependent products induces \(\lambda f \co \PSh*{\elems\Gamma}(S, \Pi_A P)\).
  For \(x \co \cat{C}_0\) and \(\gamma \co \Gamma_x\) the following square commutes strictly:
  \begin{center}\(\begin{tikzcd}[column sep=4em]
    {S(x, \gamma)}
    \ar[d, "S(\p_\gamma)"']
    \ar[r, "{(\lambda f)_{(x, \gamma)}}"]
  &
    {\paren{\Pi_A P}(x, \gamma)}
    \ar[d, "\text{\ref{prop:pushforward-along-homotopy-locally-representable-transformation:restate}}", weq]
  \\
    {S(x.\gamma, \gamma\p_{\gamma})}
    \ar[r, "f_{(x.\gamma, \gamma\p, \q)}"]
  &
    {P(x.\gamma, \gamma\p_{\gamma}, \q_{\gamma})\rlap{.}}
  \end{tikzcd}\)\end{center}
  Hence, by \cref{prop:pushforward-along-homotopy-locally-representable-transformation:restate} and \(2\)-out-of-\(3\), the component of \(\lambda f\) at \((x, \gamma)\) is an equivalence exactly if the composite \(f_{(x.\gamma, \gamma\p, \q)} \cc S(\p_\gamma)\) is one.
\end{remark}

\begin{theorem}[Internally to \(\PSh{\cat*{C}}_{\sD\LexOp}\)]
  The constant map \(\sN \longrightarrow \sN^{\Spec(u)}\)  is an equivalence for all \(u \co \Pr_{\bbT/\sG,\sD \kappa}\).
  Hence, \cref{axiom:constant-are-spec-null} holds.
\end{theorem}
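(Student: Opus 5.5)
Both \(\sN\) and \(\sN^{\Spec(u)}\) are \(\sD\LexOp\)-modal, hence in particular \(\sD\)-modal. By \cref{prop:cobar-strictifies-equivalences-between-levelwise-modal-types}, it therefore suffices to show that the constant map is a levelwise equivalence. We work in context \(\Gamma = 1.\Pr_{\bbT/\sG,\sD\kappa}\), with \(S \coloneqq \sN\p\) and the constant map written as \(\lambda f\). Here \(f \co S\p_{\Spec} \to S\p_{\Spec}\) is the identity, regarded as a map over \(\Gamma.\Spec\), and the family \(P \coloneqq \sN\) over \(\Gamma.\Spec\) is \(\sD\)-modal.

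\Cref{prop:duality:spec-locally-representable} gives \(\Spec\) the structure of a locally homotopy representable type. By \cref{rem:map-into-pushforward-along-homotopy-locally-representable-transformation}, \(\lambda f\) is an equivalence at a stage \((A, u)\), with \(A \co \Mod*[\bbT,\kappa]\) and \(u \co \Pr_{\bbT/\sG,\sD\kappa}(A)\), exactly when the composite \(f_{(x.\gamma,\ldots)} \cc S(\p_\gamma)\) is. Here \(f\) is the identity, so this composite is the restriction map \(\sN(A) \to \sN(\gen{u}_{\bbT/A})\) of \(\sN\) along the morphism \(\p_u\) of \(\cat{C}\). We transport \(u\) to an element of \(\Pr_{\bbT/A,\kappa}\) via \cref{prop:duality:presentation-characterization}; this is harmless, since equivalences are invariant under homotopy.

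It remains to show that restriction along any morphism of \(\cat{C}\) acts on \(\sN\) by an equivalence. In \(\PSh{\cat*{C}}_{\sD\LexOp}\), \(\sN\) is (equivalent to) the \(\sD\LexOp\)-replacement of the constant presheaf \(\underline{\sN}\) on the natural numbers, as already used in the proof of dependent choice. The unit \(\underline{\sN} \to \sD\LexOp\underline{\sN}\) is a levelwise equivalence. Indeed, \(\underline{\sN}\) is levelwise \(\LexOp\)-modal as a constant presheaf with value the modal \(\sN\) of \(\cSet_{\LexOp}\), so \(\vartheta\) is a levelwise equivalence, and \(\tau\) is a levelwise equivalence by \cref{prop:cobar-unit-is-levelwise-equivalence}. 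By naturality of the unit, the restriction maps of \(\sD\LexOp\underline{\sN}\) are therefore conjugate, up to these levelwise equivalences, to the restriction maps of \(\underline{\sN}\), which are identities. So every restriction map of \(\sN\) is an equivalence. Hence each component of the constant map is an equivalence, and the map is an equivalence by \cref{prop:cobar-strictifies-equivalences-between-levelwise-modal-types}.

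The main obstacle I expect is the bookkeeping in the middle step. One must check that the strictly commuting square of \cref{rem:map-into-pushforward-along-homotopy-locally-representable-transformation} applies with the representation of \(\Spec\) taken only up to the equivalence of \cref{prop:duality:spec-characterization}. This needs a short transport argument: the local homotopy representation is only specified up to homotopy, so one replaces \(\Spec\) by the equivalent locally representable family, and \(\Pi\) over \(\sD\)-modal families respects such equivalences. One must also confirm that the identification of \(\sN\) with the replacement of \(\underline{\sN}\) is compatible with restriction. The remaining steps are formal.
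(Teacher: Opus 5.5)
Your proposal is correct and follows the paper's proof. You reduce to a levelwise statement via \cref{prop:cobar-strictifies-equivalences-between-levelwise-modal-types}, then use \cref{rem:map-into-pushforward-along-homotopy-locally-representable-transformation} with the local homotopy representation of \(\Spec\) to reduce to restriction of \(\sN\) along \(\p_u\), and conclude by naturality of the unit, its levelwise invertibility (\cref{prop:cobar-unit-is-levelwise-equivalence}), and \(2\)-out-of-\(3\). The only cosmetic difference is that the paper presents \(\sN\) directly as \(\sD(\Delta\sN)\), so it needs only \(\tau\) rather than the composite unit \(\tau \circ \vartheta\).
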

\begin{proof}
  Note that the type of natural numbers in the model of \(\sD\LexOp\)-modal types is given up to equivalence by \(\sD(\Delta \sN)\), the \(\sD\)-modal replacement of the constant presheaf on the \(\LexOp\)-modal natural numbers.
  We denote it by \(\sN_{\sD\LexOp}\).

  The map in question is the transpose of \(\pi_1 \co \sN \times \Spec(u) \to \sN\).
  By externalization of \cref{prop:cobar-strictifies-equivalences-between-levelwise-modal-types}, it suffices to construct an element of \(\El_{\cSet}^{\LexOp}\paren[\big]{ 1.\cat{C}_0.\PM{U}\Pr_{\bbT/\sG,\sD \kappa}, \IsEquiv\paren[\big]{ \PM{U}(\lambda{\pi_1}) } }\).
  We work internally to \(\cSet\).
  Let \(A \co \Mod*[\bbT,\kappa]\) and \(u \co \Pr_{\bbT/A, \kappa}\), and consider the following diagram:
  \begin{center}\(\begin{tikzcd}[column sep=4em]
    {\sN}
    \ar[d, equals]
    \ar[r, weq, "(\tau_{\Delta \sN})_{A}"]
  &
    {\paren{ \sN_{\sD\LexOp} }(A)}
    \ar[r, "(\lambda\pi_1)_{(A, u)}"]
    \ar[d, "\sN_{\sD\LexOp}(\p_u)"]
  &
    {\paren{ \sN_{\sD\LexOp}^{\Spec} }(A, u)}
    \ar[d, weq, "\ref{prop:pushforward-along-homotopy-locally-representable-transformation:restate}"]
  \\
    {\sN}
    \ar[r, "(\tau_{\Delta \sN})_{\gen{u}}", weq]
  &
    {\paren{ \sN_{\sD\LexOp} }(\gen{u})}
    \ar[r, equals]
  &
    {\paren{ \sN_{\sD\LexOp} }(\gen{u})\rlap{.}}
  \end{tikzcd}\)\end{center}
  The left square commutes as \(\tau_{\Delta \sN}\) is natural.
  The right square commutes by \cref{rem:map-into-pushforward-along-homotopy-locally-representable-transformation}.
  By \cref{prop:cobar-unit-is-levelwise-equivalence}, the components of \(\tau_{\Delta \sN}\) are equivalences.
  The claim follows by \(2\)-out-of-\(3\).
\end{proof}

\begin{theorem}[Internally to \(\PSh{\cat*{C}}_{\sD\LexOp}\)]\label[theorem]{prop:duality:duality-axiom}
  The evaluation map \(\gen{u}_{\bbT/\sG} \longrightarrow \sG^{\Spec(\gen{u}_{\bbT/\sG})}\) is an equivalence for all \(u \co \Pr_{\bbT/\sG,\sD \kappa}\).
  Hence, \cref{axiom:duality} holds.
\end{theorem}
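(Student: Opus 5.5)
The plan follows the proof of the previous theorem about \(\sN\), replacing the constant presheaf by \(\sG\). The evaluation map is the transpose \(\lambda\ev\) of \(\ev \co \Spec(\gen{u}) \times \gen{u} \to \sG\). Both \(\gen{u}_{\bbT/\sG}\) and \(\sG^{\Spec(\gen{u})}\) are \(\sD\LexOp\)-modal, hence \(\sD\)-modal. By the externalization of \cref{prop:cobar-strictifies-equivalences-between-levelwise-modal-types}, it therefore suffices to show that \(\PM{U}(\lambda\ev)\) is an equivalence in context \(\cat{C}_0.\PM{U}\Pr_{\bbT/\sG,\sD\kappa}\). We then work internally to \(\cSet\), at \(A \co \Mod*[\bbT,\kappa]\) and \(u \co \Pr_{\bbT/\sG,\sD\kappa}(A)\). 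Using \cref{prop:duality:presentation-characterization}, we transport \(u\) to a presentation in \(\Pr_{\bbT/A,\kappa}\), still written \(u\).

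Next we apply \cref{rem:map-into-pushforward-along-homotopy-locally-representable-transformation} with the following data:
\begin{itemize}
  \item the locally homotopy representable type \(\Spec\), with representation \(\pair{\p_u,\q_u}\) from \cref{prop:duality:spec-locally-representable};
  \item the \(\sD\)-modal family \(P = \sG\), constant over \(\Spec\);
  \item \(S = \gen{-}_{\bbT/\sG}\), with \(f = \ev\).
\end{itemize}
This reduces the claim to showing that the composite
\[
  \gen{u}_{\bbT/\sG}(A) \xrightarrow{\ \gen{-}(\p_u)\ } \gen{u\p_u}_{\bbT/\sG}(\gen{u}_{\bbT/A}) \xrightarrow{\ \ev_{(\gen{u}, \q_u)}\ } \sG(\gen{u}_{\bbT/A})
\]
is an equivalence. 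Under the levelwise identifications of \cref{rem:duality:generic-model-characterization,prop:duality:presented-algebra-characterization,prop:duality:spec-characterization}, it becomes
\[
  \gen{u}_{\bbT/A} \xrightarrow{\ \p_u^+\ } \gen{u\p_u}_{\bbT/\gen{u}} \xrightarrow{\ \q_u\ } \gen{u}_{\bbT/A}.
\]
Here we use that the evaluation map of \(\Spec\) corresponds to the evaluation map of \(\Ret\), and that the restriction of the family \(\gen{-}\) along \(\p_u\) corresponds to \(\p_u^+\) from \cref{prop:duality:base-category-is-cwa}. Since \(\q_u\) was defined as a common retraction of \(\p_{u\p_u}\) and \(\p_u^+\), this composite is homotopic to the identity and hence an equivalence. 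Then \(2\)-out-of-\(3\) in the square of \cref{rem:map-into-pushforward-along-homotopy-locally-representable-transformation} finishes the argument.

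The main obstacle is the middle step: checking that the square of identifications commutes up to homotopy. Concretely, we must show that the strict restriction \(\gen{-}_{\bbT/\sG}(\p_u)\), transported along the equivalences of \cref{prop:duality:presented-algebra-characterization}, agrees with \(\p_u^+\), and that evaluation at \(\q_u\) agrees with applying the retraction \(\q_u\). The first identification uses that the equivalences are natural up to homotopy in \(A\), together with the uniqueness clause in \cref{prop:duality:base-category-is-cwa}. I would attempt this by tracing both maps on generators of the free \(\bbT/A\)-model \(\sF_{\bbT/A}X\). Both are induced by \(\p_u\) and the identity on \(X\), so they agree after precomposition with the epimorphism \(q_u\).
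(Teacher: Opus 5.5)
Your proposal is correct and follows the paper's proof essentially step for step. You reduce to a levelwise statement via \cref{prop:cobar-strictifies-equivalences-between-levelwise-modal-types}, apply \cref{rem:map-into-pushforward-along-homotopy-locally-representable-transformation} to the locally homotopy representable \(\Spec\), and identify the left-then-bottom composite with \(\q_u \cc \p_u^+ \sim \id\) through \cref{prop:duality:presented-algebra-characterization,rem:duality:generic-model-characterization,prop:duality:spec-characterization}. The paper handles the compatibility you flag as the main obstacle in the same way: the left square commutes by the construction of the comparison map, and the right square by the evaluation clause of \cref{prop:duality:spec-characterization}.
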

\begin{proof}
  By externalization of \cref{prop:cobar-strictifies-equivalences-between-levelwise-modal-types}, it suffices to check that this levelwise, \ie, to construct an element of \(\El_{\cSet}^{\LexOp}\paren[\big]{ 1.\cat{C}_0.\PM{U}\Pr_{\bbT/\sG,\sD \kappa}, \IsEquiv\paren[\big]{ \PM{U}(\lambda{\ev}) } }\).
  We work internally to \(\cSet\).
  Given \(A \co \Mod*[\bbT,\kappa]\) and \(u \co \Pr_{\bbT/\sG,\sD \kappa}(A)\), \cref{rem:map-into-pushforward-along-homotopy-locally-representable-transformation} provides the square
  \begin{center}\(\begin{tikzcd}[column sep=4em]
    {\gen{-}_{\bbT/\sG}(A, u)}
    \ar[d, "\gen{-}_{\bbT/\sG}(\p_{u})"']
    \ar[r, "(\lambda \ev)_{(A, u)}"]
  &
    {\paren{\sG^{\Spec}}(A, u)}
    \ar[d, weq, "\ref{prop:pushforward-along-homotopy-locally-representable-transformation:restate}"]
  \\
    {\gen{-}_{\bbT/\sG}(\gen{u}, u\p)}
    \ar[r, "\ev_{(\gen{u}, u\p, \q)}"]
  &
    {\sG(\gen{u})\rlap{.}}
  \end{tikzcd}\)\end{center}
  We have to show that the top map is an equivalence.
  By \(2\)-out-of-\(3\), it suffices to show that the left-then-bottom composite is an equivalence.
  This is the bottom row in the diagram
  \begin{center}\(\begin{tikzcd}[column sep=5em]
    {\gen{u}}
    \ar[r, "\p_{u}^+"]
    \ar[d, weq', "\ref{prop:duality:presented-algebra-characterization}"']
  &
    {\gen{u\p_{u}}}
    \ar[r, "\q_u"]
    \ar[d, weq, "\ref{prop:duality:presented-algebra-characterization}"]
  &
    {\gen{u}}
    \ar[d, weq, "\ref{rem:duality:generic-model-characterization}"]
  \\
    {\gen{-}_{\bbT/\sG}(A, u)}
    \ar[r, "\gen{-}_{\bbT/\sG}(\p_{u})"]
  &
    {\gen{-}_{\bbT/\sG}(\gen{u}, u\p)}
    \ar[r, "\ev_{(\gen{u}, u\p, \q)}"]
  &
    {\sG(\gen{u})\rlap{.}}
  \end{tikzcd}\)\end{center}
  The left square commutes up to homotopy by the construction of the comparison map.
  The right square commutes by the claim from \cref{prop:duality:spec-characterization} that the evaluation map of the spectrum corresponds to the one for \(\Ret_{\gen{u}}\).
  The evaluation map for \(\Ret_{\gen{u}}\) is given by application of the retraction so in this case by \(\q_{u} \co \gen{u\p_{u}} \to \gen{u}\).
  By construction, \(\q_{u} \cc \p_{u}^+ \sim \id\).
  As the top row is an equivalence, so is the equivalent bottom row.
\end{proof}

\bibliography{references}

\appendix

\section{Locally representable \texorpdfstring{\(\sD\)}{D}-modal types}\label[appendix]{sec:representables}

In this section, we introduce a notion of representability for our notion of higher presheaf.
We then show a version of the Yoneda lemma for these representables.

\subsection{Homotopy representations}

In this entire section, we work internally to \(\CubicalSet\).
A first observation is that for a cubical category \(\cat{C}\), the representable presheaf \(\yo x\) for some \(x \co \cat{C}\) is not necessarily fibrant.
So we still operate under \cref{ass:fibrant-cat}.
From this, we obtain that \(\yo x\) is levelwise fibrant.

Representable presheaves have, by the Yoneda lemma in this internal setting, their expected properties.
A higher presheaf \(\Delta\) should be homotopy representable if it is sufficiently close to a standard representable.
Presenting \(\Delta\) by a global \(\sD\)-modal type, we express this using a \emph{levelwise equivalence from a standard representable}.
A key observation (\cref{prop:cobar-modal-types-see-levelwise-equivalences-between-levelwise-fibrant-types-as-equivalences}) is that the \(\sD\)-modal types see levelwise equivalences between levelwise fibrant types as equivalences.
From this, we conclude a version of the Yoneda lemma for our notion of homotopy representable (cf.~\cref{prop:homotopical-yoneda}).

\begin{definition}
  Let \(\Delta \co \PSh*{\cat{C}}\) be a levelwise fibrant presheaf.
  A \emph{homotopy representation} of \(\Delta\) is a levelwise equivalence from a standard representable presheaf \(\delta \co \yo x \to \Delta\).
\end{definition}

\begin{remark}
  If \(\Delta\) is levelwise fibrant, then all notions of levelwise equivalence coincide.
  In particular the definition via contractible fibers unfolds to the following: a homotopy representation of \(\Delta\) is some \(\delta \co \Delta(x)\) such that \(\sum_{f : \cat{C}(x', x)} \delta f \simeq_{\Delta_{x'}} \delta'\) is contractible for all \(x' \co \cat{C}_0\) and \(\delta' \co \Delta(y)\).
  This can be read as a homotopical way of stating that \(\elems[\cat{C}]\Delta\) has a terminal object.
\end{remark}

\begin{example}
  The unit \(\tau \co \id \to \LO{D}\) is a levelwise equivalence by \cref{prop:cobar-unit-is-levelwise-equivalence}.
  Thus, the \(\sD\)-modal replacement of \(\yo x\) is a homotopy representable presheaf that is fibrant and \(\sD\)-modal.
\end{example}

In order to show our key observation, we record the following standard closure property of equivalences.
Note that we require homogenous fibrancy since it is sufficient to compose homotopies.

\begin{lemma}[Internal to \(\PSh*{\cat{C}}\)]\label[lemma]{prop:equivalences-between-homogenously-fibrant-types-are-closed-under-retract}
  Consider a diagram
  \begin{center}\(\begin{tikzcd}
    A_0
    \ar[r, "s_0"]
    \ar[d, "a"]
  &
    B_0
    \ar[r, "r_0"]
    \ar[d, "b"]
  &
    C_0
    \ar[d, "c"]
  \\
    A_1
    \ar[r, "s_1"]
  &
    B_1
    \ar[r, "r_1"]
  &
    C_1
  \end{tikzcd}\)\end{center}
  of homogeneously fibrant types.
  Assume that the horizontal composites and the middle vertical maps are equivalences.
  Then so are the left and right vertical maps.
\end{lemma}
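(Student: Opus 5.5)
The plan is to reduce this to the standard fact that equivalences are closed under retracts, using bi-invertible maps so that only homotopy composition (available from homogeneous fibrancy) is needed. Assume the diagram commutes strictly, which is how it is used in the paper. Write \(p \coloneqq r_0 \cc s_0 \co A_0 \to C_0\) and \(q \coloneqq r_1 \cc s_1 \co A_1 \to C_1\), and fix homotopy inverses \(p^{-1}, q^{-1}, b^{-1}\). Work internally to \(\PSh*{\cat{C}}\). Homogeneous fibrancy of each type gives transitivity and symmetry of homotopies between maps into it, together with whiskering on either side. These are the only homotopical operations the argument will use.

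For the left map \(a\), I would construct a left and a right homotopy inverse and then invoke the usual fact that a map with both is an equivalence; combining the two inverses also only needs homotopy composition. The candidate inverse is
\[
  a' \coloneqq p^{-1} \cc r_0 \cc b^{-1} \cc s_1 \co A_1 \to A_0 .
\]
Using \(c \cc r_0 = r_1 \cc b\) and \(b \cc s_0 = s_1 \cc a\), together with the relation \(c \cc p = q \cc a\), I plan to check \(a' \cc a \sim \id\) via the chain
\[
  p^{-1} r_0 b^{-1} s_1 a = p^{-1} r_0 b^{-1} b s_0 \sim p^{-1} r_0 s_0 = p^{-1}p \sim \id .
\]
For the other side, \(a \cc a' \sim \id\), I would instead use \(q^{-1}\). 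Write \(a \cc a' = a p^{-1} r_0 b^{-1} s_1\), and insert \(q^{-1} q \sim \id\) on the left to get \(q^{-1} q a p^{-1} \cdots\). Then \(q a = c p\) gives \(q^{-1} c p p^{-1} r_0 b^{-1} s_1 \sim q^{-1} c r_0 b^{-1} s_1\). Next \(c r_0 = r_1 b\) gives \(q^{-1} r_1 b b^{-1} s_1 \sim q^{-1} r_1 s_1 = q^{-1} q \sim \id\). Each step is a whiskered homotopy followed by transitivity, and the intermediate targets are \(A_1\), so homogeneous fibrancy of \(A_1\) suffices.

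For the right map \(c\), the argument is dual. Set \(c' \coloneqq r_0 \cc b^{-1} \cc s_1 \cc q^{-1} \co C_1 \to C_0\), and verify \(c \cc c' \sim \id\) and \(c' \cc c \sim \id\) by the mirror-image chains, again inserting \(p \cc p^{-1}\) where needed.

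I expect the main obstacle to be bookkeeping rather than anything conceptual. Without a fibrancy structure relative to a context, homotopies can only be formed fiberwise, so I must make sure every composite homotopy lands in a type whose homogeneous fibrancy applies. It must also be checked that the chosen notion of \(\IsEquiv\) (contractible fibres) is recovered from a two-sided homotopy inverse using only homogeneous composition. I would handle this by stating once that the usual \(\HoTT\) argument (bi-invertible implies contractible fibres) uses only concatenation, inversion and whiskering of paths in the codomain. Those operations are available fiberwise under homogeneous fibrancy, as already remarked before the statement.
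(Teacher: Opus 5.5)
Your proposal is correct and is essentially the paper's proof. The paper introduces the same map \(k \coloneqq r_0 \circ b^{-1} \circ s_1\), notes \(k \circ a \sim r_0 \circ s_0\) and \(c \circ k \sim r_1 \circ s_1\), and concludes by \(2\)-out-of-\(6\); your explicit two-sided inverses \(p^{-1} \circ k\) and \(k \circ q^{-1}\) are exactly that \(2\)-out-of-\(6\) argument unfolded (and your worry about recovering contractible fibres is harmless, since these are fibrant closed types internally, so the standard facts of homotopy type theory apply, as they do to the paper's own appeal to \(2\)-out-of-\(6\)).
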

\begin{proof}
  Define \(k \coloneqq r_0 \cc b^{-1} \cc s_1 \co A_1 \to C_0\) using the given homotopy inverse of \(b\).
  We have
  \begin{align*}
    k \cc a = r_0 \cc b^{-1} \cc s_1 \cc a \sim r_0 \cc b^{-1} \cc b \cc s_0 &\sim r_0 \circ s_0 \rlap{,}
  \\
    c \cc k = c \cc r_0 \cc b^{-1} \cc s_1 \sim r_1 \cc b \cc b^{-1} \cc s_1 &\sim r_1 \cc s_1 \rlap{.}
  \end{align*}
  This makes \(k \cc a\) and \(c \cc k\) into equivalences.
  So \(a\) and \(c\) are equivalences by \(2\)-out-of-\(6\).
\end{proof}

Using the above, we can now show that even between only levelwise fibrant types, the \(\sD\)-modal types see levelwise equivalences as equivalences.
Note that \(\Pi\)-types are homogeneously fibrant if their codomain is.
Hence, in the following statement, all notions of equivalence again coincide.

\begin{lemma}[Internal to \(\PSh*{\cat{C}}\)]\label[lemma]{prop:cobar-modal-types-see-levelwise-equivalences-between-levelwise-fibrant-types-as-equivalences}
  Consider a levelwise equivalence \(f \co A \to B\) between levelwise fibrant types.
  Given a fibrant \(\sD\)-modal family \(Q\) over \(B\), the following map given by restriction is an equivalence:
  \[
    f^* \co \prod_{b\co B} Q(b) \longrightarrow \prod_{a\co A} Q\paren[\big]{f(a)},
    \qquad g \longmapsto gf.
  \]
\end{lemma}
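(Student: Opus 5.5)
We reduce the claim to the case where the codomain of the restriction map is replaced by its \(\sD\)-modal replacement, and then apply \cref{prop:equivalences-between-homogenously-fibrant-types-are-closed-under-retract}. The basic tool is the unit \(\tau\). Applying \(\LO{D}\) to \(f\) gives \(\LO{D}f \co \LO{D}A \to \LO{D}B\). Since \(A\) and \(B\) are levelwise fibrant and \(f\) is a levelwise equivalence, \cref{prop:cobar-strictifies-levelwise-equivalences} shows that \(\LO{D}f\) is an equivalence between fibrant types; they are fibrant by \cref{prop:cobar-pseudomorphism:preserves-fibrancy}. Because \(Q\) is a fibrant \(\sD\)-modal family over \(B\), it extends along \(\tau_B\). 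More precisely, using the action \(\widetilde{\LO{D}}Q\) over \(\LO{D}B\) together with the equivalence \(\tau_{Q(b)}\co Q(b) \to (\widetilde{\LO{D}}Q)(\tau_B b)\), we may replace \(Q\) up to fiberwise equivalence by the pullback of the fibrant family \(\widetilde{\LO{D}}Q\) along \(\tau_B\).

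Next we set up the retract diagram. Its middle vertical map is restriction along the genuine equivalence \(\LO{D}f\):
\[
  (\LO{D}f)^* \co \prod_{b'\co \LO{D}B} (\widetilde{\LO{D}}Q)(b') \longrightarrow \prod_{a'\co \LO{D}A} (\widetilde{\LO{D}}Q)\paren[\big]{(\LO{D}f)(a')}.
\]
This map is an equivalence by ordinary \(\HoTT\) reasoning, since precomposition with an equivalence between fibrant types is an equivalence. The left and right vertical maps are \(f^*\) on \(\prod_B Q\) and \(\prod_A Qf\) respectively, so we in fact need only one side, arranged as a retract diagram. In the top row, the first map sends \(g\) to \(\widetilde{\LO{D}}g\), the induced section of \(\widetilde{\LO{D}}Q\) over \(\LO{D}B\); the second map is restriction along \(\tau_B\) followed by the inverse of the fiberwise unit equivalence. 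We do the same in the bottom row with \(A\) and \(Qf\). The composite of the top row is the identity up to homotopy, because \((\widetilde{\LO{D}}g)(\tau_B b) = \tau_{Q(b)}(g(b))\) holds by the equation in the footnote on lex operations. The squares commute strictly, or up to homotopy, by naturality of \(\tau\) together with \(\LO{D}f \circ \tau_A = \tau_B \circ f\). All types involved are \(\Pi\)-types into fibrant families, hence homogeneously fibrant, so \cref{prop:equivalences-between-homogenously-fibrant-types-are-closed-under-retract} applies and shows that \(f^*\) is an equivalence.

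The main obstacle is the second top-row map. Here we need that sections of \(\widetilde{\LO{D}}Q\) over \(\LO{D}B\), restricted along \(\tau_B\), land back in sections of \(Q\); this requires the fiberwise unit \(\tau_{Q(b)}\) to be invertible. That invertibility is exactly the \(\sD\)-modality hypothesis on \(Q\). We would also have to verify that \(\widetilde{\LO{D}}Q\) is fibrant over \(\LO{D}B\), which follows from \cref{prop:cobar-pseudomorphism:preserves-fibrancy} because \(Q\) is fibrant and hence levelwise fibrant. If it proves awkward that the squares commute only up to homotopy, we would instead verify the hypotheses of the retract lemma only up to homotopy. This is harmless, since that lemma's proof uses homotopies throughout.
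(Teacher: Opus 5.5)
Your proposal is correct and is essentially the paper's proof. The paper uses the same \(2\times 3\) diagram: \(\widetilde{\sD}\) followed by restriction along \(\tau\) in the rows, and \((\sD f)^*\) in the middle column, which is an equivalence by \cref{prop:cobar-strictifies-levelwise-equivalences}. It then concludes with \cref{prop:equivalences-between-homogenously-fibrant-types-are-closed-under-retract}. The only difference is cosmetic: you post-compose with the fiberwise inverse of \(\tau_{Q(b)}\) so that the horizontal composites are homotopic to identities. The paper instead keeps \(\prod_{b}(\widetilde{\sD}Q)(\tau_B b)\) as the right column and notes that the horizontal composites are the \(\Pi\)-action of the fiberwise units, which are equivalences; its retract lemma only needs the composites to be equivalences.
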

\begin{proof}
  Consider the following diagram of homogeneously fibrant types:
  \begin{center}\(\begin{tikzcd}[column sep=.5cm]
    \displaystyle \smashoperator{\prod_{b \co B}} Q(b)
    \ar[r, "\sD"]
    \ar[d, "f^*", shorten >= -0.2em, shorten <= -0.2em]
  &
    \displaystyle \smashoperator{\prod_{y \co \sD B}} (\widetilde{\sD} Q)(y)
    \ar[r, "\tau_B^*"]
    \ar[d, "(\sD f)^*", shorten >= -0.2em, shorten <= -0.8em]
  &
    \displaystyle \smashoperator{\prod_{b \co B}} (\widetilde{\sD} Q)\paren[\big]{\tau_B(b)}
    \ar[d, "f^*", shorten >= -0.2em, shorten <= -0.8em]
  \\
    \displaystyle \smashoperator{\prod_{a \co A}} Q\paren[\big]{f(a)}
    \ar[r, "\sD"]
  &
    \displaystyle \smashoperator{\prod_{x \co \sD A}} (\widetilde{\sD} Q)\paren[\big]{(\sD f)(x)}
    \ar[r, "\tau_A^*"]
  &
    \displaystyle \smashoperator{\prod_{a \co A}} (\widetilde{\sD} Q)\paren[\big]{\tau_B(f(a))} \rlap{.}
  \end{tikzcd}\)\end{center}
  The horizontal composites are isomorphic to the functorial action of dependent product on the family of maps \(\tau_{Q(b)} \co Q(b) \to \sD(Q(b))\) for \(b \co B\) (and its restriction to \(A\)).
  The latter are equivalences as \(Q\) is \(\sD\)-modal.
  So the horizontal composites are equivalences since the functorial action of dependent product on the family of maps respects homotopies.
  The map \(\sD f\) is an equivalence between fibrant types by \cref{prop:cobar-pseudomorphism:preserves-fibrancy,prop:cobar-strictifies-levelwise-equivalences}.
  Precomposition with an equivalence between fibrant types induces an equivalence of \(\Pi\)-types with a fibrant codomain (see, \eg,~\cite[Theorem 13.4.1]{Rijke2025}).
  This makes the middle vertical map into an equivalence.
  Thus, \(f^*\) is an equivalence by \cref{prop:equivalences-between-homogenously-fibrant-types-are-closed-under-retract}.
\end{proof}

\begin{remark}
  In the special case of \(\tau_A \co A \to \sD A\), the above strengthens of the usual universal property of \(\sD\), weakening the fibrancy assumption on \(A\) to a levelwise one.
\end{remark}

We can now prove an incarnation of the Yoneda lemma in our setting.
The phrasing we choose is akin to what is sometimes called the ``type-theoretical Yoneda lemma'': elements of a type in a representable context are local sections at the representing object and identity.
This is a straightforward corollary of the ordinary Yoneda lemma, using that elements are in bijection with sections of display maps.

\begin{proposition}\label[proposition]{prop:homotopical-yoneda}
  Let \(\delta : \yo x \to \Delta\) be a homotopy representation of a levelwise fibrant presheaf \(\Delta\).
  For every \(P \co \Ty*_{\PSh*{\cat{C}}}^{\sD}(\Delta)\), the following map is an equivalence
  \[
    \El*_{\PSh*{\cat{C}}}(\Delta, P) \longrightarrow P(x, \delta), \qquad
    u \longmapsto u(x, \delta).
  \]
\end{proposition}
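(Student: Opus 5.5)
The plan is to factor the evaluation map through the standard representable $\yo x$, using \cref{prop:cobar-modal-types-see-levelwise-equivalences-between-levelwise-fibrant-types-as-equivalences} for the comparison with $\Delta$ and the ordinary (strict) Yoneda lemma for the representable. Elements of $P$ over $\Delta$ correspond to sections of the display map, i.e.\ to elements of the internal product $\prod_{d\co\Delta} P(d)$ taken in the empty context of $\PSh*{\cat{C}}$ (more precisely, its global elements). Restricting along $\delta \co \yo x \to \Delta$ gives the map
\[
  \delta^* \co \prod_{d \co \Delta} P(d) \longrightarrow \prod_{f \co \yo x} P\paren[\big]{\delta(f)} ,
\]
and by \cref{prop:cobar-modal-types-see-levelwise-equivalences-between-levelwise-fibrant-types-as-equivalences} this is an equivalence. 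Its hypotheses hold: $\delta$ is a levelwise equivalence by assumption, $\Delta$ is levelwise fibrant by assumption, $\yo x$ is levelwise fibrant by \cref{ass:fibrant-cat}, and $P$ is a fibrant $\sD$-modal family.

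Next I identify the codomain with $P(x,\delta)$. By the ordinary Yoneda lemma internal to $\cSet$ (sections of a type over a representable context are determined by their value at $(x,\id_x)$), evaluation at $\id_x$ is a strict isomorphism from sections of $P\delta$ over $\yo x$ to $P(x, \delta\id_x) = P(x,\delta)$. The composite of restriction along $\delta$ and evaluation at $\id_x$ sends $u$ to $u(x,\delta)$, which is exactly the map in the statement. Composing an equivalence with an isomorphism therefore gives the claim.

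The main obstacle is the passage between the internal $\Pi$-types of \cref{prop:cobar-modal-types-see-levelwise-equivalences-between-levelwise-fibrant-types-as-equivalences} and the cubical set $\El*_{\PSh*{\cat{C}}}(\Delta,P)$ of global sections. I would handle this by noting that global elements of a $\Pi$-type over the terminal presheaf are evaluated at the terminal object of elements. Since $\Pi$-types with fibrant codomain are homogeneously fibrant, the equivalence $\delta^*$ (with inverse and homotopies given as maps of presheaves) induces, on taking global sections, an equivalence of fibrant cubical sets. Here homotopies internal to $\PSh*{\cat{C}}$ are paths in the hom-types, so they transfer to paths between global sections, as observed after \cref{prop:cofree-presheaf:dependent-right-adjoint}. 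Finally, I would check that all identifications are natural so that the resulting equivalence is exactly $u \mapsto u(x,\delta)$, not merely homotopic to it. This holds because both the restriction and the Yoneda bijection are strict.
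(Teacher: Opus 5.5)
Your proposal is correct and is essentially the paper's proof: you restrict along \(\delta\) and invoke \cref{prop:cobar-modal-types-see-levelwise-equivalences-between-levelwise-fibrant-types-as-equivalences} to get an equivalence of internal \(\Pi\)-types, pass to global sections using that \(\PSh*{\cat{C}}(1,-)\) preserves homotopies, and conclude with the strict Yoneda isomorphism identifying sections of \(P\delta\) over \(\yo x\) with \(P(x,\delta)\), unfolding to see that the composite is \(u \mapsto u(x,\delta)\). You also state the direction of \(\delta^*\) correctly, whereas the paper's proof writes this equivalence with domain and codomain swapped.
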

\begin{proof}
  By \cref{prop:cobar-modal-types-see-levelwise-equivalences-between-levelwise-fibrant-types-as-equivalences}, we have the equivalence \(\prod_{f \co \yo x} P(\delta f) \to \prod_{\delta' \co \Delta} P\delta'\) in \(\PSh*{\cat{C}}\).
  Since \(\PSh*{\cat{C}}(1, -)\) preserves homotopies, we obtain an equivalence of types of elements, and thus
  \[
    \El*_{\PSh*{\cat{C}}}(\Delta, P)
    \cong \El*_{\PSh*{\cat{C}}}(1, \Pi_{\Delta} P)
    \simeq \El*_{\PSh*{\cat{C}}}(1, \Pi_{\yo x} P\delta)
    \cong \El*_{\PSh*{\cat{C}}}(\yo x, P\delta)
    \cong P(x, \delta).
  \]
  Unfolding shows that the underlying map of the above equivalence is the desired map.
\end{proof}

\subsection{Locally homotopy representable types}

In this entire section, we work internally to \(\CubicalSet\).
With a notion of representable presheaf in our setting, we can also introduce the corresponding relative notion, which we call a \emph{locally homotopy representable type}.
This is analogous to the usual notion of a (locally) representable natural transformation.
This notion is originally due to Grothendieck and is used by Awodey~\cite{Awodey2018} to define natural models of type theory.

\begin{definition}\label[definition]{def:representables:locally-representable}
  Let \(A \co \Ty*_{\PSh*{\cat{C}}}^{\Fib*_0}(\Gamma)\) be a type.
  A \emph{local homotopy representation} of \(A\) is a homotopy representation of \(\yo x.A\gamma \co \PSh*{\cat{C}}\) for all \(x \co \cat{C}_0\) and \(\gamma \co \Gamma_x\).
  A \emph{locally homotopy representable type} is a type \(A\) with a local homotopy representation.
  For \(\gamma \co \Gamma_x\) we write \(x.\gamma \co \cat{C}_0\) with \(\p_\gamma \co \cat{C}(x.\gamma, x)\) and \(\q_\gamma \co A(x.\gamma, \gamma\p_\gamma)\) for the elements determining the homotopy representation.
\end{definition}

\begin{remark}\label{rem:locally-representable-is-structure}
  Note that the above is strictly stable under substitution.
  If \(\sigma \co \Delta \to \Gamma\) and \(A \co \Ty*(\Gamma)\) has a local homotopy representation, then so does \(A\sigma \co \Ty*(\Delta)\) given for \(\delta \co \Delta_x\) by \(x.\sigma\delta\), \(\p_{\sigma\delta}\), and \(\q_{\sigma\delta} \co A(x.\sigma\delta, \sigma\delta\p_{\sigma\delta})\).
\end{remark}

A well-known fact in the \(1\)-categorical setting is that \((A^{\yo x})(y) \cong A(y \times x)\), provided enough products exist in the base category.
More generally, \((\Pi_{\yo x}A)(y) \cong A(y \times x, \pi_1)\).
For locally representable natural transformations a generalization of this holds true.
If \(A\) is a type in context \(\Gamma\) whose display map is locally representable, one has, in the above notation, that \((\Pi_AB)(x, \gamma) \cong B(x.\gamma, \gamma\p, \q)\).
The latter is a generalization since \(\yo x \to 1\) is locally representable exactly if the base category has all products with \(x\).
We show an analogue of this characterization of \(\Pi\)-types with locally homotopy representable domain in our setting.

\begin{lemma}\label[lemma]{prop:pushforward-along-homotopy-locally-representable-transformation}
  Naturally in \(\Gamma\), given a locally homotopy representable type \(A \co \Ty*^{\Fib*_0}(\Gamma)\) and a \(\sD\)-modal family \(P \co \Ty*^{\sD}(\Gamma.A)\), the following map is an equivalence for \(x \co \cat{C}\) and \(\gamma \co \Gamma_x\):
  \begin{equation*}
    (\Pi_AP)(x, \gamma) \longrightarrow P(x.\gamma, \gamma\p_\gamma, \q_{\gamma}),
    \qquad
    u \longmapsto u_{\p_{\gamma}}(\q_\gamma).
  \end{equation*}
\end{lemma}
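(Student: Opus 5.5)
The plan is to reduce the claim to the homotopical Yoneda lemma (\cref{prop:homotopical-yoneda}) applied to the representable-relative context \(\yo x.A\gamma\). Fix \(x \co \cat{C}_0\) and \(\gamma \co \Gamma_x\). First I unfold the value of the dependent product at \((x,\gamma)\). By the usual description of \(\Pi\)-types in presheaves (internally to \(\cSet\)), an element of \((\Pi_A P)(x,\gamma)\) is a family \(u_f(a) \co P(y, \gamma f, a)\) natural in \(f \co \cat{C}(y,x)\) and \(a \co A(y,\gamma f)\). This is the same as a global section \(\El*_{\PSh*{\cat{C}}}(\yo x.A\gamma, P\gamma^+)\), where \(\gamma^+ \co \yo x.A\gamma \to \Gamma.A\) is the induced map \((f,a) \mapsto (\gamma f, a)\). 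This identification is a strict isomorphism and is natural in \(\Gamma\), since everything is defined by precomposition.

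Next I check the hypotheses of \cref{prop:homotopical-yoneda} for \(\Delta \coloneqq \yo x.A\gamma\) with the homotopy representation \(\delta \coloneqq \pair{\p_\gamma,\q_\gamma} \co \yo(x.\gamma) \to \yo x.A\gamma\), which is given by the local homotopy representation. The presheaf \(\Delta\) is levelwise fibrant: \(\yo x\) is levelwise fibrant by \cref{ass:fibrant-cat}, and \(A\) is levelwise fibrant, so the levelwise \(\Sigma\)-type is levelwise fibrant. The family \(P\gamma^+\) is still \(\sD\)-modal and fibrant because modality and fibrancy are stable under substitution. Applying \cref{prop:homotopical-yoneda} then gives an equivalence
\[
  \El*_{\PSh*{\cat{C}}}(\yo x.A\gamma, P\gamma^+) \longrightarrow (P\gamma^+)(x.\gamma, \delta) = P(x.\gamma, \gamma\p_\gamma, \q_\gamma).
\]
Its underlying map is evaluation at \((x.\gamma,\delta)\), that is \(u \mapsto u_{\p_\gamma}(\q_\gamma)\). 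Composing with the isomorphism from the first step gives exactly the map in the statement, so that map is an equivalence.

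Finally, I would verify naturality in \(\Gamma\). By \cref{rem:locally-representable-is-structure}, the representation data for \(A\sigma\) at \(\delta\) is literally the data for \(A\) at \(\sigma\delta\). Hence both the map and the equivalence structure produced by \cref{prop:homotopical-yoneda} are computed from the same inputs and agree strictly after substitution. The main obstacle I expect is the bookkeeping in the first step: the identification of \((\Pi_AP)(x,\gamma)\) with sections over \(\yo x.A\gamma\) must be a strict isomorphism, so that the equivalence structure transports without coherence issues. One also has to confirm that the equivalence of element types in \cref{prop:homotopical-yoneda} (obtained via \(\PSh*{\cat{C}}(1,-)\) preserving homotopies) is indeed an equivalence of fibrant types in the sense required here. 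This holds because \(P\) is fibrant, so the \(\Pi\)-type is homogeneously fibrant.
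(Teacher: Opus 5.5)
Your proposal is correct and follows the paper's proof: the paper likewise identifies \((\Pi_AP)(x,\gamma)\) strictly with \(\El*(\yo x.A\gamma, P\gamma^+)\) via Yoneda and the universal property of \(\Pi\). It then applies \cref{prop:homotopical-yoneda} to the homotopy representation \(\pair{\p_\gamma,\q_\gamma} \co \yo(x.\gamma) \to \yo x.A\gamma\) and notes that the resulting map is evaluation at \((\p_\gamma,\q_\gamma)\), i.e.\ \(u \mapsto u_{\p_\gamma}(\q_\gamma)\). Your extra checks are hypotheses the paper leaves implicit: levelwise fibrancy of \(\yo x.A\gamma\) via \cref{ass:fibrant-cat}, \(\sD\)-modality of \(P\gamma^+\) by substitution stability, and strict naturality via \cref{rem:locally-representable-is-structure}.
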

\begin{proof}
  By Yoneda, the universal property of \(\Pi\), and \cref{prop:homotopical-yoneda}, we have
  \begin{multline*}
    (\Pi_AP)(x, \gamma)
    \cong \El*\paren[\big]{\yo x, \paren{\Pi_AP}\gamma }
    \cong \El*\paren{\yo x.A \gamma, P\gamma^+}
    \simeq P\gamma^+(x.\gamma, \q_\gamma)
    = P(x.\gamma, \gamma\p_\gamma, \q_\gamma).
  \end{multline*}
  This is the desired map since the equivalence \(\El*\paren{\yo x.A \gamma, P\gamma^+} \to P\gamma^+(x.\gamma, \p_\gamma, \q_\gamma)\) from \cref{prop:homotopical-yoneda} is given by evaluation at \(x.\gamma\) and \((\p_\gamma, \q_\gamma) \co \paren{\yo x.A\gamma}(x.\gamma)\).
\end{proof}

\subsection{Projectivity of locally representable types}

When working in homotopy type theory, one usually defines projectivity of \(A\) to mean that the unique map \(\varnothing \to A\) lifts against surjections.
Crucially, this definition contains a quantification over \emph{all} types, so even in the presence of universes we cannot internalize this definition as a single type.
Instead, such definitions are usually phrased externally, quantifying over all substitutions into the current context and all types in these later contexts.

\begin{definition}\label{def:projective}
  A type \(A \in \Ty(\Gamma)\) is \emph{projective} if, naturally in \(\sigma \co \Delta \to \Gamma\), for all \(P \in \Ty(\Delta.A\sigma)\), there is a map \(\El(\Delta, \Pi_{A\sigma}\trunc{P}) \to \El(\Delta, \trunc{\Pi_{A\sigma}P})\).
\end{definition}

\begin{remark}
  Similar to local homotopy representability (\cf~\cref{rem:locally-representable-is-structure}), this is a substitution-stable notion.
  Given a projective type \(A \in \Ty(\Gamma)\) and a substitution \(\sigma \co \Delta \to \Gamma\), then the type \(A\sigma \in \Ty(\Delta)\) is also projective.
  Given \(\tau \co \Theta \to \Delta\) and \(P \in \Ty(\Theta, P)\) we define the map \(\El(\Theta, \Pi_{A\sigma\tau}\trunc{P}) \to \El(\Theta, \trunc{\Pi_{A\sigma\tau} P})\) as the one given by \(A\) for \(\tau\sigma \co \Theta \to \Gamma\) and \(P\).
\end{remark}

\begin{remark}
  In the \(1\)-categorical setting, there are different notions of projectivity.
  Usually, an object \(A\) is called \emph{projective} if it lifts against epimorphisms, while it is called \emph{internally projective} if \((-)^A\) preserves epimorphisms.
  The latter aligns with the definition in models of type theory when considering a closed type.
\end{remark}

In a presheaf topos, representables are projective, but not necessarily internally projective.
They are so in the presence of binary products in the base category.
This is exactly the situation in which \(\yo x \to 1\) is locally representable.
The next lemma can be seen as a version of this fact in our setting, generalized to types that are not necessarily closed.

\begin{proposition}\label[proposition]{prop:lifting-of-lex-operations-over-fibrant-sites:locally-homotopy-representable-types-are-projective}
  Naturally in \(\Gamma\), every locally homotopy representable type \(A \in \Ty^{\sD\LexOp}_{\PSh{\cat*{C}}}(\Gamma)\) is projective.
\end{proposition}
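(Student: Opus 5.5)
The plan is to reduce projectivity to a levelwise statement, using that propositional truncations in \(\PSh{\cat*{C}}_{\sD\LexOp}\) are detected levelwise (\cref{prop:lifting-of-lex-operations-over-fibrant-sites:levelwise-principle}), and then to compute the relevant \(\Pi\)-types at each level via \cref{prop:pushforward-along-homotopy-locally-representable-transformation}. By \cref{rem:locally-representable-is-structure}, local homotopy representability is strictly stable under substitution. It therefore suffices to construct, naturally in \(\Gamma\), for every \(P \in \Ty^{\sD\LexOp}(\Gamma.A)\), a map \(\El(\Gamma, \Pi_A\trunc{P}) \to \El(\Gamma, \trunc{\Pi_A P})\). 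Precomposing with a substitution \(\sigma\) and using the stable representation of \(A\sigma\) then gives the required natural family.

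By \cref{prop:lifting-of-lex-operations-over-fibrant-sites:levelwise-principle}, an element of \(\trunc{\Pi_A P}\) over \(\Gamma\) is obtained from an element of \(\trunc{\PM{U}(\Pi_AP)}\) over \(\PM{U}\Gamma\) in \(\cSet_{\LexOp} \comma \cat{C}_0\). Working internally to \(\cSet\), this means that for \(x \co \cat{C}_0\) and \(\gamma \co \Gamma_x\) we must merely inhabit \((\Pi_AP)(x,\gamma)\), naturally. Since \(P\) is \(\sD\LexOp\)-modal, it is in particular \(\sD\)-modal, so \cref{prop:pushforward-along-homotopy-locally-representable-transformation} gives an equivalence \((\Pi_AP)(x,\gamma) \simeq P(x.\gamma, \gamma\p_\gamma, \q_\gamma)\). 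Because truncation is functorial and equivalences lift through it, it suffices to merely inhabit \(P(x.\gamma,\gamma\p_\gamma,\q_\gamma)\).

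This is where the hypothesis enters. Given \(t \in \El(\Gamma, \Pi_A\trunc{P})\), evaluate at level \((x,\gamma)\), restrict along \(\p_\gamma\), and apply the result to \(\q_\gamma\). This produces \(t_{\p_\gamma}(\q_\gamma) \co \trunc{P}(x.\gamma,\gamma\p_\gamma,\q_\gamma)\). Applying \cref{prop:lifting-of-lex-operations-over-fibrant-sites:levelwise-principle} to the family \(P\) over \(\Gamma.A\) (in the right-to-left direction, via the map \(A \to \sR\trunc{\PM{U}A}\) used there), this levelwise truncation value yields an element of \(\trunc{\PM{U}P}\) at that level. Mapping it through the inverse of the equivalence from \cref{prop:pushforward-along-homotopy-locally-representable-transformation} under \(\trunc{-}\) gives \(\trunc{(\Pi_AP)(x,\gamma)}\), that is, a section of \(\trunc{\PM{U}(\Pi_A P)}\). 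The left-to-right direction of \cref{prop:lifting-of-lex-operations-over-fibrant-sites:levelwise-principle} then concludes.

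The main obstacle I expect is justifying that evaluation of \(\trunc{P}\) at a level gives a levelwise truncation, and that these choices are natural in \(\Gamma\). I would handle this by treating the logical equivalence of \cref{prop:lifting-of-lex-operations-over-fibrant-sites:levelwise-principle} in context \(\yo(x.\gamma)\), or by applying it uniformly in context \(\Gamma.A\) before instantiating. Every step consists of operations given by functorial constructions or by naturality of the cited results, and the target types are propositions, so only existence is needed and no coherence issues arise.
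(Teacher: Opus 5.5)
Your proposal is correct and matches the paper's proof. After the reduction via substitution stability, the paper uncurries to \(\El(\Gamma.A,\trunc{P})\), applies the right-to-left direction of \cref{prop:lifting-of-lex-operations-over-fibrant-sites:levelwise-principle} to land in \(\trunc{\PM{U}P}\), evaluates at \((x.\gamma,\gamma\p_\gamma,\q_\gamma)\), transports along the inverse of \cref{prop:pushforward-along-homotopy-locally-representable-transformation} under truncation, and concludes with the left-to-right direction. The obstacle you flag is resolved by your second option, applying the levelwise principle in context \(\Gamma.A\) before instantiating, which is exactly the order the paper uses.
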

\begin{proof}
  Since being locally homotopy representable is stable under substitution, it suffices to consider \(P \in \Ty^{\sD\LexOp}(\Gamma.A)\).
  Internally to \(\CubicalSet\), we have a map \(\El*_{\PSh*{\cat{C}_0}}(\PM{U}\Gamma.\PM{U}A, \trunc{UP}) \to \El*_{\PSh*{\cat{C}_0}}(\Gamma, \trunc{\PM{U}\paren{\Pi_AP}})\).
  It is given by restriction followed by the inverse from \cref{prop:pushforward-along-homotopy-locally-representable-transformation}:
  \begin{equation*}
    {\prod_{\substack{ x \co \cat{C}_0 \\ (\gamma, a) \co (\Gamma.A)_x }}} \trunc{P(x, \gamma, a)} 
    \longrightarrow {\prod_{\substack{ x \co \cat{C}_0 \\ \gamma \co \Gamma_x }}} \trunc{P(x.\gamma, \gamma\p_\gamma, \q_\gamma)} 
    \overset{\simeq}\longleftarrow {\prod_{\substack{ x \co \cat{C}_0 \\ \gamma \co \Gamma_x }}} \trunc{ \paren{\Pi_AP}(x, \gamma) }. 
  \end{equation*}
  Hence, by externalization of the above map we obtain
  \begin{align*}
    \El_{\PSh{\cat*{C}}}(\Gamma, \Pi_A\trunc{P})
    &\longrightarrow \El_{\PSh{\cat*{C}}}(\Gamma.A, \trunc{P}) \tag{universal property of \(\Pi\)} \\
    &\longrightarrow \El_{\PSh{\cat*{C}_0}}(\PM{U}\Gamma.\PM{U}A, \trunc{\PM{U}P}) \tag{\cref{prop:lifting-of-lex-operations-over-fibrant-sites:levelwise-principle}} \\
    &\longrightarrow \El_{\PSh{\cat*{C}_0}}(\PM{U}\Gamma, \trunc{\PM{U}\paren{\Pi_AP}}) \\
    &\longrightarrow \El_{\PSh{\cat*{C}}}(\Gamma, \trunc{\Pi_AP}). \tag{\cref{prop:lifting-of-lex-operations-over-fibrant-sites:levelwise-principle}}
  \end{align*}
  This yields the desired logical implication.
\end{proof}

\end{document}